\newcounter{dummycount}
\newtheorem{theorem}{Theorem}
\newtheorem{lemma}[theorem]{Lemma}
\newtheorem{corollary}[theorem]{Corollary}
\newtheorem{claim}{Claim}
\newenvironment{claimproof}[1]{%
\noindent%
\textit{Proof of Claim #1.}%
}
{
\hfill $\triangle$%
\medskip
}
\newcommand{\wormhole}[1]
{
\newcounter{#1}
\setcounter{#1}{\value{theorem}}
}
\newenvironment{backInTime}[1]
{
\setcounter{dummycount}{\value{theorem}}
\setcounter{theorem}{\value{#1}}
}
{
\setcounter{theorem}{\value{dummycount}}
}
\title{Contact Representations of Graphs in 3D}
\author[1]{Md. Jawaherul Alam}
\author[2]{William Evans}
\author[1]{Stephen G. Kobourov}
\author[1]{Sergey Pupyrev}
\author[1]{Jackson Toeniskoetter}
\author[3]{Torsten Ueckerdt}
\affil[1]{Department of Computer Science, University of Arizona, USA
}
\affil[2]{Department of Computer Science, University of British Columbia, Canada
}
\affil[3]{Department of Mathematics, Karlsruhe Institute of Technology, Germany
}
\newcommand{\TT}{{\mathcal{T}}}
\newcommand{\R}{\mathbb R}
\newcommand{\df}{\textit}
\newcommand{\WLOG}{w.l.o.g.\xspace}
\newcommand{\LL}{$\mathcal{L}$\xspace}
\newcommand{\LLs}{$\mathcal{L}$'s\xspace}
\newcommand{\pperp}{\ensuremath{(\bot{-}\bot)}}
\newcommand{\ppar}{\ensuremath{(\bot{-}||)}}
\newcommand{\Oh}{{\ensuremath{\mathcal{O}}}}
\begin{document}

\date{}

\maketitle


\begin{abstract}
We study contact representations of graphs in which vertices are represented by axis-aligned polyhedra in 3D and edges are realized by non-zero area common boundaries between corresponding polyhedra. We show that for every 3-connected planar graph, there exists a simultaneous representation of the graph and its dual with 3D boxes. We give a linear-time algorithm for constructing such a representation. This result extends the existing primal-dual contact representations of planar graphs in 2D using circles and triangles.
While contact graphs in 2D directly correspond to planar graphs, we next study representations of non-planar graphs in 3D. In particular we consider representations of optimal 1-planar graphs. A graph is 1-planar if there exists a drawing in the plane where each edge is crossed at most once, and an optimal $n$-vertex 1-planar graph has the maximum ($4n-8$) number of edges.  We describe a linear-time algorithm for representing optimal 1-planar graphs without separating 4-cycles with 3D boxes. However, not every optimal 1-planar graph admits a representation with boxes. Hence, we consider contact representations with the next simplest axis-aligned 3D object, L-shaped polyhedra. We provide a quadratic-time algorithm for representing optimal 1-planar graph with L-shaped polyhedra.
\end{abstract}

\section{Introduction}

Graphs are often used to capture relationships between objects, and graph embedding
techniques allow us to visualize such relationships via traditional node-links diagrams.
There are compelling theoretical and practical reasons to study \df{contact representations}
of graphs, where vertices are geometric objects and edges correspond to pairs of objects
touching in some specified fashion. In practice, 2D contact
representations with rectangles, circles, and polygons of low
complexity are intuitive, as they provide the viewer with the familiar metaphor
of geographical maps. Such representations are preferred in some contexts over the standard
node-link representations for displaying relational information~\cite{BGPV08,GHK10}.
Contact representations of graphs have practical applications in
data visualization~\cite{Shn92a}, 
cartography~\cite{Rai34}, geography~\cite{Tob04a},
sociology~\cite{HK98},
very-large-scale integration circuit design~\cite{Ull84}, and floor-planning~\cite{MCP02a}.

A large body of work considers representing graphs as contacts of
simple curves or polygons in 2D.
Graphs that can be represented in this way are planar.
In fact, Koebe's 1936 theorem established that \emph{all} planar
graphs can be represented by touching disks~\cite{Koe36}.
Any planar graph also has a contact representation with
triangles~\cite{FMR94,GLP12}. Partial planar $3$-trees and some
series-parallel graphs admit a representation with homothetic
triangles~\cite{BBGDF+07}. Curve contacts~\cite{Hli98}, line-segment
contacts~\cite{FM07a,KM94}, and $L$-shape contacts~\cite{CKU13,KUV13}
have also been used.
In particular, it is known that all planar bipartite graphs can be represented by axis-aligned
segment contacts~\cite{CKU98,FMP91,RT86}.
Furthermore, any planar graph has a representation with $T$-shapes~\cite{FMR94}.
Despite our best efforts, however, many graphs remain stubbornly
non-planar and, for these, such contact representations in 2D are impossible.
Hence, a natural generalization is representing vertices with 3D-polyhedra, such
as cubes and tetrahedra, and edges with shared boundaries. While such contact
representations allow us to visualize non-planar graphs, there is much less known
about contact representations in 3D than in 2D. Contact graphs using 3D objects
have been studied for complete graphs and complete bipartite graphs
using spheres~\cite{BR13,HK01} and cylinders~\cite{Bez05}.

As a first step towards representing non-planar graphs, we consider
\df{primal-dual} contact representations, in which a plane graph (a planar graph with a fixed planar embedding)
and its dual are represented simultaneously. More formally, in such a representation
vertices and faces are represented by some geometric objects so that:

\begin{enumerate}[(i)]
	\item the objects for the vertices are interior-disjoint and induce a contact representation
		for the primal graph;
	\item the objects for the faces are interior-disjoint except for the object for the outer
		face, which contains all the objects for the internal faces, and together they induce a
		contact representation of the dual graph;
	\item the objects for a vertex $v$ and a face $f$ of the primal graph intersect if and only
		if $v$ and $f$ are incident.
\end{enumerate}

Primal-dual representations of planar graphs have been studied in 2D. Specifically,
every $3$-connected plane graph has a primal-dual representation with circles~\cite{And70} and
triangles~\cite{GLP12}; see Fig.~\ref{fig:intro}(a)--(c). Our first result in this paper
is an analogous primal-dual representation using axis-aligned 3D boxes. While it is known that every planar graph has a contact representation with
3D boxes~\cite{Tho88,FF11,BEFHK+12},
 Theorem~\ref{thm:box} strengthens these results; see Fig.~\ref{fig:intro4}.

\wormhole{thm-box}
\begin{theorem}
\label{thm:box}
Every $3$-connected planar graph $G=(V,E)$ admits a proper primal-dual box-contact
representation in 3D and it can be computed in $\Oh(|V|)$ time.
\end{theorem}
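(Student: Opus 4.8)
The plan is to use Schnyder woods and their orthogonal surfaces as a combinatorial scaffold. Fix an embedding of $G$ together with an outer face $f_0$ and take a Schnyder wood of $G$; such woods exist for every $3$-connected planar graph by Felsner's generalization of Schnyder's construction. Its region counts give a map $v\mapsto p_v\in\R^3$ whose image is an antichain, and I let $S$ be the orthogonal surface it generates, i.e.\ the topological boundary of $\bigcup_{v}\bigl(p_v+\R^3_{\ge 0}\bigr)$. It is a standard fact that $S$ is a staircase surface carrying both $G$ and $G^{*}$: $S$ has one corner per vertex (the generator corners) and one corner per internal face; two vertex corners $u,v$ share a seam of $S$ iff $uv\in E(G)$; two face corners $f,g$ share a seam iff $f,g$ are adjacent in $G^{*}$; a vertex corner and a face corner share a flat of $S$ iff the vertex lies on the face; and the outer face $f_0$ is reflected in the behavior of $S$ near the three coordinate planes. (A direct incremental construction from a canonical ordering of $G$, in the spirit of Thomassen's box representation, is a conceivable alternative, but the orthogonal surface makes the primal-dual symmetry manifest.)

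I would then thicken $S$ into axis-aligned boxes. For a vertex $v$, take $B_v$ to be the box with far corner $p_v$ extending toward the origin, so that each of its three far facets lies on one of $v$'s flats of $S$; for an internal face $f$, take $B_f$ to be a small box placed just outside $S$ at $f$'s corner, sized so that it meets exactly the flats of the vertices incident to $f$. The outer face $f_0$ is represented by one large box enclosing the whole configuration, exactly as the outer-face circle encloses everything in the known 2D primal-dual circle packing; the required nesting $B_{f_0}\supseteq B_f$ for every internal face $f$ then follows from $f_0$ being the unbounded region of $S$, and $B_{f_0}$ touches $B_f$ (resp.\ $B_v$) precisely when $f$ (resp.\ $v$) lies on the outer boundary of $G$. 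Granting this placement, the three defining properties of a proper primal-dual representation reduce to bookkeeping: the primal contacts, the dual contacts, and the vertex-face incidences are exactly the seam and flat incidences of $S$, which are dictated by the Schnyder wood.

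The crux, where essentially all of the work lies, is to make every prescribed contact \emph{proper}, i.e.\ a shared boundary rectangle of positive area rather than the zero-area staircase seam of the bare surface, while keeping every forbidden pair of boxes interior-disjoint and keeping the internal face boxes properly nested inside $B_{f_0}$. I would do this by passing to a sufficiently fine integer grid and giving each box a small, consistently propagated amount of slack in its side lengths, so that each seam of $S$ is replaced by a thin axis-aligned rectangle shared by exactly the two boxes that meet there; the point to prove is that no such perturbation ever creates a spurious overlap, with faces of size larger than three and the three suspension vertices as the delicate cases. Once this is in place, linear running time is automatic: a Schnyder wood and all region counts can be computed in $\Oh(|V|)$ time, and the coordinates of every box are then read off in a single traversal, so the whole representation is produced in $\Oh(|V|)$ time.
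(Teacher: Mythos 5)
Your overall direction---a Schnyder wood of $G$, the orthogonal surface it generates, and boxes obtained by ``thickening'' that surface on both sides---is the same starting point as the paper's first construction, but the step you yourself call the crux is exactly what is missing, and the paper explicitly flags this route as not known to work as stated. Your boxes are underdetermined: saying $B_v$ has far corner $p_v$ and ``extends toward the origin'' does not fix the three near facets, and choosing them is where all the combinatorics lives. In the paper, the box for $v$ extends in coordinate direction $i$ precisely to the level of $v$'s parent in $\TT_i$, where the levels come from three ordered path partitions compatible with the Schnyder wood; producing those partitions requires a corrected version of the Badent et al.\ correspondence (Lemma~\ref{lem:schny-can}), and the contacts and interior-disjointness are then proved combinatorially (Lemma~\ref{lem:box-schny}), the disjointness argument using the three vertex-disjoint Schnyder paths out of each vertex. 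Likewise, ``a small, consistently propagated amount of slack'' is not yet an argument: one must identify exactly which contacts of the unperturbed configuration are degenerate---these are precisely the bi-directional edges, which meet along a line segment rather than a positive-area rectangle---and one must prove that the facet being fattened touches no other box, which is a nontrivial structural claim established in Lemma~\ref{lem:box-schny}. The paper itself remarks at the end of Section~\ref{sec:pri-du} that extending boxes directly from the orthogonal surface ``seems to require the same kinds of arguments'' as the explicit coordinate construction, and poses a purely topological proof along your lines as an open problem.

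A second gap concerns the dual. A ``small box placed just outside $S$ at $f$'s corner, sized so that it meets exactly the flats of the vertices incident to $f$'' will realize the vertex--face incidences, but it does not by itself give the dual--dual contacts, nor the required nesting of all internal-face boxes inside $B_{f_0}$ with the internal-face boxes pairwise interior-disjoint and filling no more than the complement of the primal boxes. The paper handles this by running the same algorithm on the dual Schnyder wood and invoking the fact that the two rigid axial orthogonal surfaces coincide, so that the dual representation $\Gamma'$ tiles $B-\Gamma$ exactly; the outer boundary is then closed up by merging the three outer boxes into a single shell. Your proposal would need an analogous argument before the ``bookkeeping'' claim for properties (i)--(iii) can be accepted. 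The linear-time claim is fine once the construction is pinned down, but as written the proof has a genuine hole at its center.
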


\begin{figure}[t]
  \begin{subfigure}[t]{.2\textwidth}
    \centering
    \includegraphics[height=2.5cm]{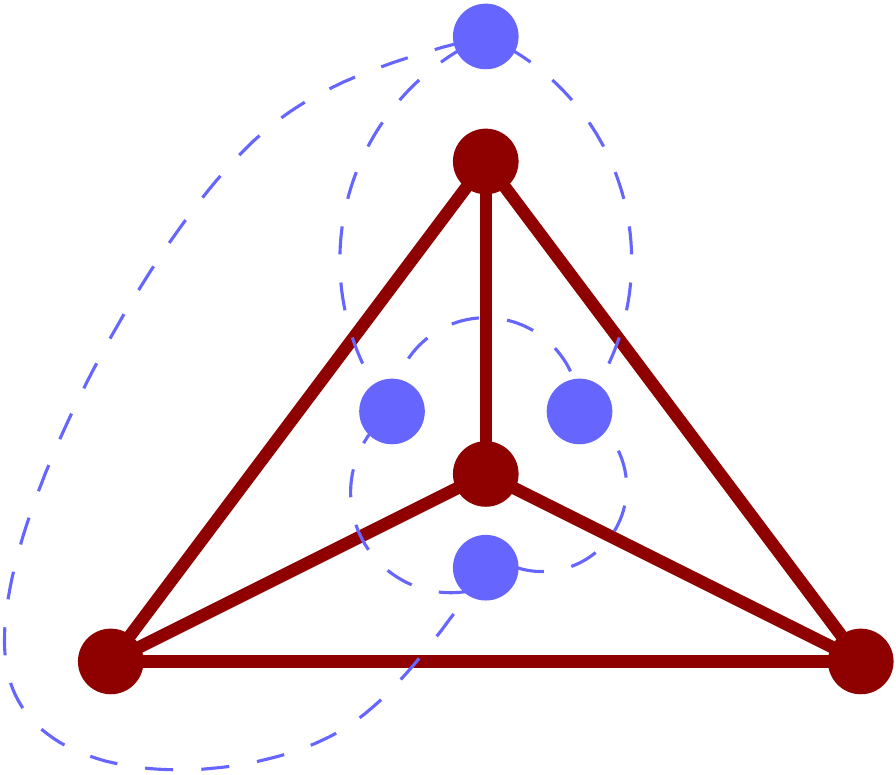}
    \caption{}
  \end{subfigure}
  \hfill
  \begin{subfigure}[t]{.22\textwidth}
    \centering
    \includegraphics[height=2.75cm]{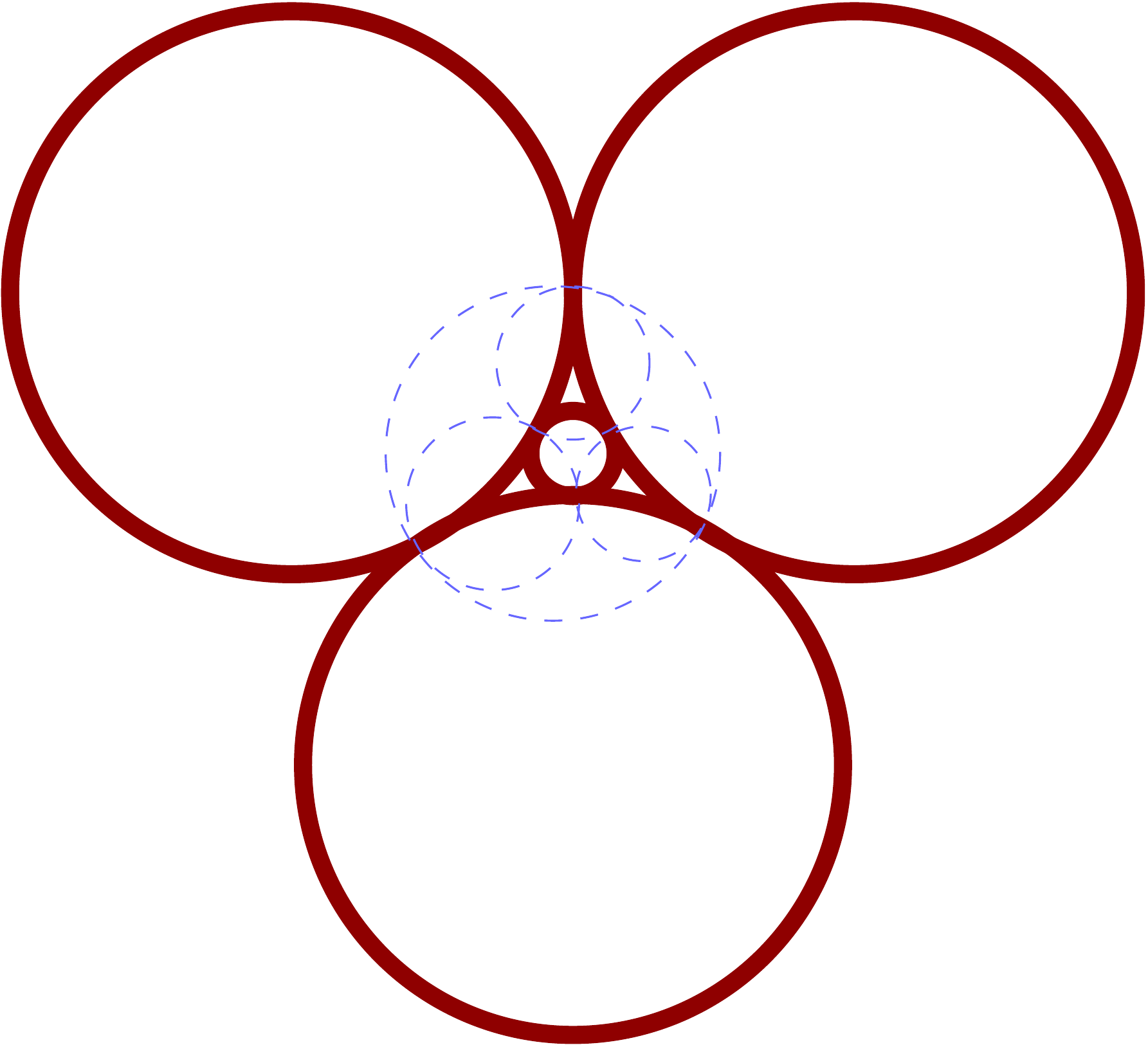}
    \caption{}
    \label{fig:intro2}
  \end{subfigure}
  \hfill
  \begin{subfigure}[t]{.22\textwidth}
    \centering
    \includegraphics[height=2.75cm]{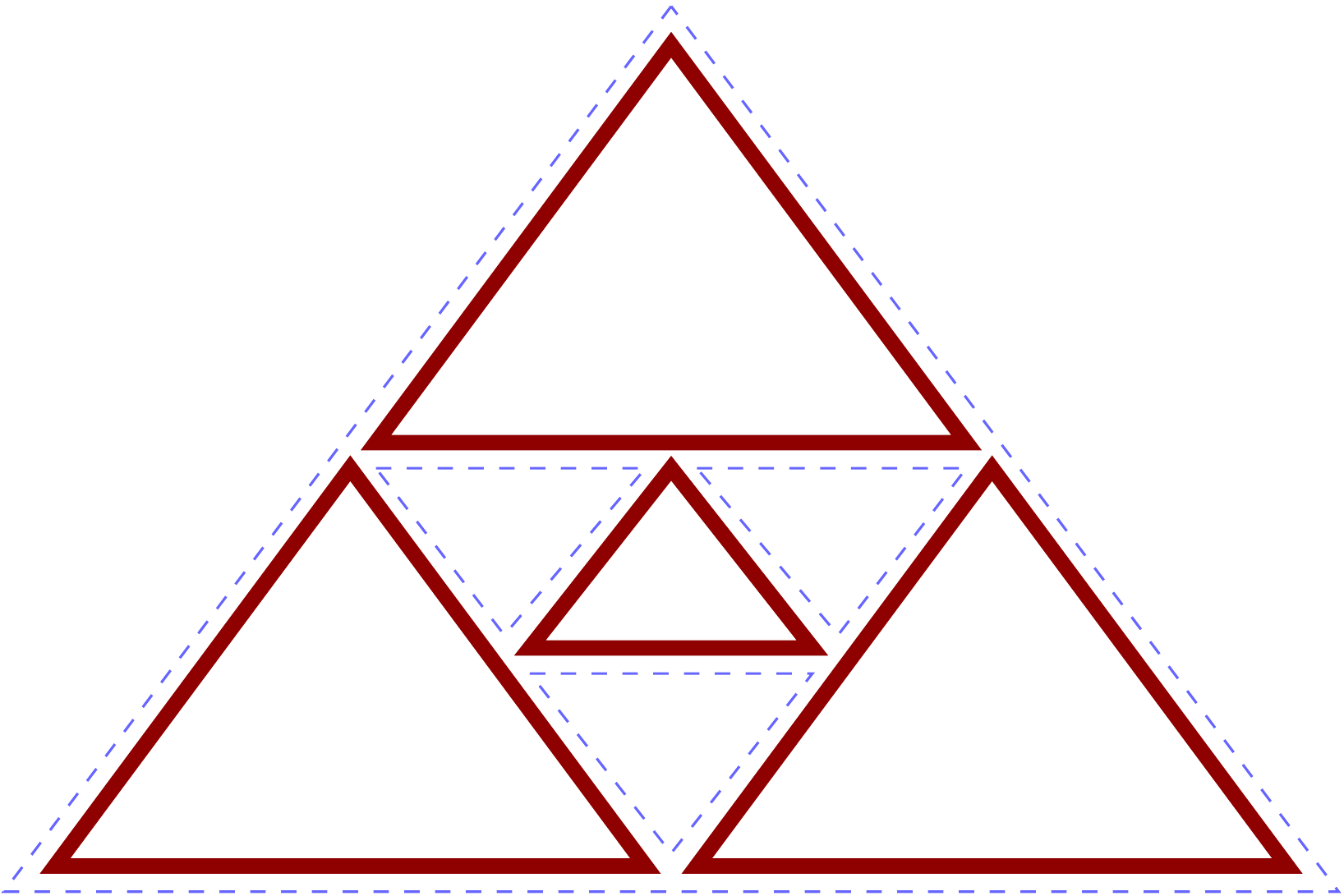}
    \caption{}
  \end{subfigure}
  \hfill
  \begin{subfigure}[t]{.3\textwidth}
    \centering
    \includegraphics[height=4cm]{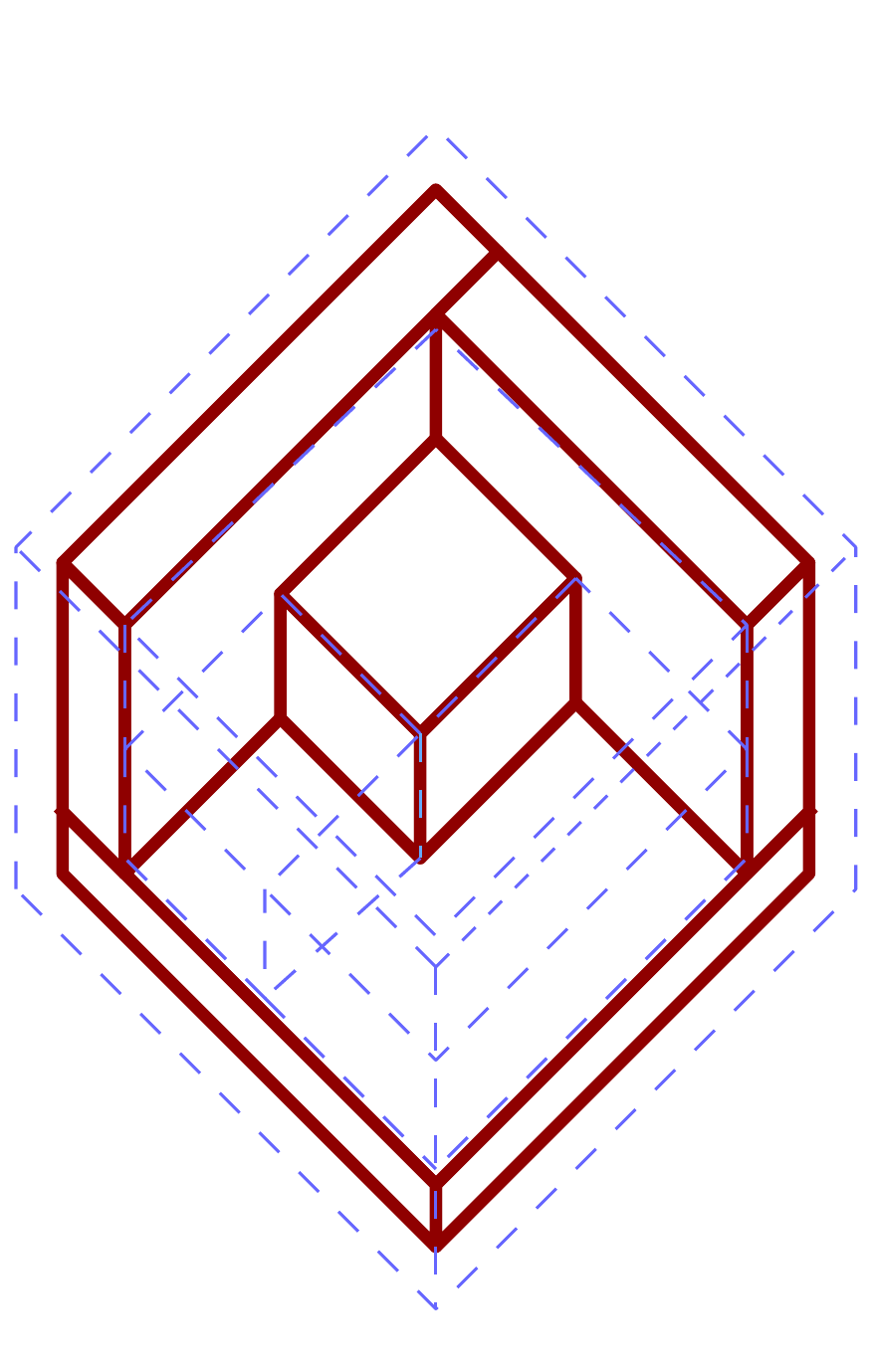}
    \caption{}
    \label{fig:intro4}
  \end{subfigure}
    \caption{(a)~A planar graph $K_4$ and its dual; primal-dual contact representations
     of the graph with (b)~circles and (c)~triangles. (d)~The primal-dual box-contact representation of $K_4$
     with dual vertices shown dashed. The outer box (shell) contains all other boxes.}
    \label{fig:intro}
\end{figure}

We would like to mention two differences with the existing primal-dual
 representations~\cite{And70,GLP12}. First, both of these existing constructions induce \df{non-proper} (point)
 contacts, while our contacts are always \df{proper}, that is, have non-zero areas.
Second, for a given $3$-connected plane graph, it is not always possible to find a primal-dual
 representation with circles by a polynomial-time algorithm~\cite{BDEG14}, although it can be
 constructed numerically by polynomial-time iterative schemes~\cite{CS03,Moh97}.
There is also no known polynomial-time algorithm that computes a primal-dual representation
 with triangles for a given $3$-connected plane graph.
Our representation, in contrast, can be found in linear time and can be
realized on the $n\times n\times n$ grid, where $n$ is the number of vertices of the graph.

We prove Theorem~\ref{thm:box} with two different constructive algorithms. The first one
uses the notions of Schnyder woods and orthogonal surfaces, as defined in~\cite{FZ08}.
Although it is known that every $3$-connected planar graph induces an orthogonal surface, we show how to
construct a new contact representation with interior-disjoint boxes from the orthogonal surface.
Since the orthogonal surfaces for a $3$-connected planar graph and its dual coincide
topologically,
the primal and the dual box-contact representations can be fit together to realize the desired contacts.
The alternative algorithm
 builds a box-contact representation for a maximal planar graph using the notion of a canonical order
 of planar graph~\cite{FPP90}. Both construction ideas are inspired by recent box-contact representation
 algorithms for maximal planar graphs~\cite{BEFHK+12}; however, we generalize the algorithms to
 accommodate $3$-connected planar graphs and show that the primal and dual representations can
 be combined together.
Our methods rely on a correspondence between Schnyder woods and generalized canonical orders for
 $3$-connected planar graphs. Although the correspondence has been claimed earlier in~\cite{BBC11},
 the earlier proof appears to be incomplete. As another contribution of the paper, we provide a complete
 proof of the claim in Section~\ref{subsect:corr}. 

Theorem~\ref{thm:box} immediately gives box-contact representations for a special class of
non-planar graphs that are formed by the union of a planar graph and its dual. The graphs are
called \df{prime} and are defined as follows. A simple graph $G=(V,E)$ is said to be \df{$1$-planar}
if it can be drawn on the plane so that each of its edges crosses at most one other edge.
This class of graphs was first considered by Ringel~\cite{Rin65} in the context of simultaneously coloring a planar graph and its dual. A $1$-planar graph has at most $4|V| - 8$ edges~\cite{FM07,PT97}
and it is \df{optimal} if it has exactly $4|V| - 8$ edges, that is, it is the
densest $1$-planar graph on the vertex set~\cite{BSW84,BEGGH+13}. An optimal
$1$-planar graph is called \df{prime} if it has no
separating $4$-cycles, that is, cycles of length $4$ whose removal disconnects the graph. These optimal $1$-planar
graphs are exactly the ones that are $5$-connected and the ones that can be obtained as the union of a $3$-connected
simple plane graph, its dual and its vertex-face-incidence graph~\cite{Sch86}.
Note that in our primal-dual representation not all boxes are interior-disjoint, as one of the boxes
contains all other boxes. We call this special box the \df{shell} and such a
representation a \df{shelled} box-contact representation. Here all the vertices are represented
by 3D boxes, except for one vertex, which is represented by a shell, and the interiors of all the
boxes and the exterior of the shell are disjoint. Note that a similar shell
is also required in the circle-contact and triangle-contact
representations; see~Fig.~\ref{fig:intro}. The following is a direct
corollary of Theorem~\ref{thm:box}.

\wormhole{th-prime}
\begin{corollary}
\label{th:prime}
Every prime $1$-planar graph $G=(V,E)$ admits a shelled box-contact representation
in 3D and it can be computed in $\Oh(|V|)$ time.
\end{corollary}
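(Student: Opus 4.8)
The plan is to derive Corollary~\ref{th:prime} directly from Theorem~\ref{thm:box} by exploiting the structural characterization of prime $1$-planar graphs stated in the excerpt: a graph is prime $1$-planar if and only if it is the union of a $3$-connected simple plane graph $H$, its dual $H^\ast$, and its vertex--face incidence graph. So given a prime $1$-planar graph $G$, I would first recover such a plane graph $H$ from $G$ (this decomposition is guaranteed by the cited result of Schumacher~\cite{Sch86}), apply Theorem~\ref{thm:box} to $H$ to obtain a proper primal-dual box-contact representation of $H$, and then argue that this representation, read as a single contact system, realizes exactly the adjacencies of $G$.

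The key steps, in order, are as follows. First, I would make precise that the vertex set of $G$ partitions into the vertices of $H$ and the faces of $H$ (which become the dual vertices), and that the edge set of $G$ is precisely: edges of $H$ between primal vertices; edges of $H^\ast$ between dual vertices (faces sharing an edge of $H$); and incidence edges between a primal vertex $v$ and a dual vertex $f$ whenever $v$ lies on the boundary of $f$. Second, I would invoke Theorem~\ref{thm:box}: it gives interior-disjoint boxes for the primal vertices realizing the edges of $H$; boxes for the faces, all interior-disjoint except that the box for the outer face contains all the others, realizing the edges of $H^\ast$; and the crucial property (iii), that a primal-vertex box and a dual-vertex box touch with non-zero area exactly when the vertex and face are incident. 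Reading all these boxes together, the set of contacts is exactly the edge set of $G$, so we have a box-contact representation of $G$. Third, I would observe that the only non-interior-disjoint pair in the construction is the outer-face box, which contains everything else; designating this box the \emph{shell} makes the representation precisely a \emph{shelled} box-contact representation in the sense defined just above the corollary. Fourth, for the running-time claim, recovering $H$ from $G$ can be done in $\Oh(|V(G)|)$ time (locating the crossing pairs and the uncrossed planar skeleton), and $H$ has $\Oh(|V(G)|)$ vertices since $G$ is optimal $1$-planar with $4|V(G)|-8$ edges; then the $\Oh(|V(H)|)$-time bound from Theorem~\ref{thm:box} yields the $\Oh(|V(G)|)$ bound.

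The main obstacle, and the only place real care is needed, is the decomposition step: arguing cleanly that a prime $1$-planar $G$ does decompose as $H \cup H^\ast \cup (\text{incidence graph of }H)$ with $H$ $3$-connected, and that this decomposition can be found in linear time. I would lean on the cited characterization~\cite{Sch86,BEGGH+13} rather than reprove it, but I would still need to check that the $3$-connectedness of $H$ is exactly what is needed to feed into Theorem~\ref{thm:box}, and that the "no separating $4$-cycles" hypothesis on $G$ translates correctly into simplicity of $H$ (so that $H^\ast$ has no multi-edges that would spoil the contact count). A secondary, more cosmetic point is reconciling the two notions of "outer" object: Theorem~\ref{thm:box} produces a box for the outer face that contains all others, whereas the corollary speaks of a shell representing one \emph{vertex} of $G$; these coincide because in $G$ the outer face of $H$ is a genuine vertex, so no re-labeling trickery is required. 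With these points settled, the corollary follows immediately.
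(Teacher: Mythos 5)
Your proposal is correct and follows essentially the same route as the paper: the paper decomposes the edges of $G$ into the quadrangulation $Q$ (induced by the non-crossing edges) and the two induced subgraphs $G_B=G[B]$ and $G_W=G[W]$ on its bipartition classes, which are $3$-connected planar duals of each other precisely because $G$ is prime, and then applies Theorem~\ref{thm:box} to $G_B$ so that the three edge classes are realized by primal--primal, dual--dual, and primal--dual box contacts respectively. Your decomposition $H \cup H^\ast \cup (\text{incidence graph of }H)$ is exactly this with $H=G_B$, $H^\ast=G_W$, and $Q$ as the incidence graph, so the two arguments coincide.
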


\begin{figure}
  \centering
    \includegraphics[height=3cm]{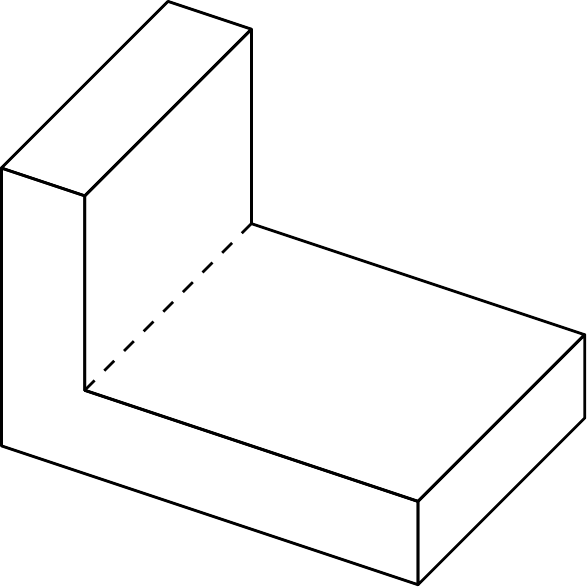}
  \caption{An $L$-shaped polyhedron.}
  \label{fig:L}
\end{figure}
One may wonder whether every $1$-planar graph admits a box-contact representation in 3D, but it is easy to see that there are $1$-planar graphs, even as simple as $K_5$, that do not admit a
box-contact representation. Furthermore, there exist optimal $1$-planar graphs (which contain separating $4$-cycles) that have neither a box-contact representation nor a shelled box-contact representation.

Therefore, we consider representations with the next simplest
axis-aligned object in 3D, an $L$-shaped polyhedron
or simply an \LL, which is an axis-aligned box minus the intersection of
two axis-aligned half-spaces; see Fig.~\ref{fig:L}. Such an object can also be
considered as the union of two boxes in 3D. In the paper, we provide a quadratic-time algorithm
for representing every optimal 1-planar graph with~\LLs.

\wormhole{thm-1-planar}
\begin{theorem}
\label{thm:1-planar}
 Every optimal $1$-planar graph $G=(V,E)$ has a proper \LL-contact representation in 3D and it can be computed in $\Oh(|V|^2)$ time.
\end{theorem}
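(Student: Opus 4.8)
The plan is to reduce to the prime case, for which Corollary~\ref{th:prime} already supplies a (shelled) box-contact representation, by cutting an optimal $1$-planar graph along its separating $4$-cycles. First I would recall the structure theory of optimal $1$-planar graphs~\cite{BSW84,Sch86}: if $G$ is optimal $1$-planar but not prime, then with respect to its essentially unique optimal $1$-planar embedding there is a separating $4$-cycle $C=v_1v_2v_3v_4$ bounding a disk with non-empty interior \emph{and} non-empty exterior. Cutting $G$ along $C$ and, inside the newly created quadrilateral face on each side, adding back the two crossing diagonals $v_1v_3$ and $v_2v_4$ produces two optimal $1$-planar graphs $G_{\mathrm{in}}$ and $G_{\mathrm{out}}$, each strictly smaller than $G$, each carrying that quadrilateral (a $4$-cycle with its two crossing diagonals) as a face, and with $G=(G_{\mathrm{in}}\cup_C G_{\mathrm{out}})\setminus\{v_1v_3,v_2v_4\}$; an edge count confirms that the pieces are again optimal. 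Iterating over a laminar family of such $4$-cycles yields a rooted decomposition tree $\mathcal{T}$ whose leaves are prime optimal $1$-planar graphs and whose tree-edges carry separating $4$-cycles.

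The representation is then assembled bottom-up along $\mathcal{T}$. For a leaf $P$, Corollary~\ref{th:prime} gives a shelled box-contact representation, and I must arrange that the shell (which represents the outer face of $P$) can be taken along the quadrilateral face by which $P$ attaches to its parent, so that the four boxes of that $4$-cycle $v_1,\dots,v_4$ sit on the inner wall of the shell and, after a small local adjustment, enclose an empty axis-aligned cavity $B_C$ that the rest of $P$ meets only through those four boxes; this ``empty cavity bounded exactly by the four cycle vertices'' is the invariant I maintain at every node. To glue a child (already built inside a prescribed cavity) into a parent, I scale it to fit $B_C$, identify $v_1,\dots,v_4$ across $C$, and seal $B_C$. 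Sealing is where \LLs\ are unavoidable: shooting axis-parallel rays from a point deep inside the child's region shows that six pairwise-distinct axis-aligned boxes would be needed to surround it, whereas only the four polyhedra of $C$ are available there, and, in addition, two axis-parallel boxes sharing only an edge touch in zero area, so four plain boxes can neither surround the cavity nor realize the four edges of $C$ as proper contacts. I therefore let two of the four vertices, say the opposite pair $v_1,v_3$, be \LLs, each covering one face of $B_C$ and hooking over an adjacent face (cf.\ Fig.~\ref{fig:L}); this seals the cavity while keeping intact every contact of $v_1,\dots,v_4$ realized outside, and the two deleted diagonals $v_1v_3,v_2v_4$ cost nothing since after the surgery $v_1$ and $v_3$ are separated inside $B_C$ by $v_2$ and $v_4$. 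After checking that interiors stay disjoint and no spurious contact is created, and taking the root piece's shell as the global shell, one obtains a proper \LL-contact representation of $G$.

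For the running time, $\mathcal{T}$ has $\Oh(|V|)$ nodes, Corollary~\ref{th:prime} realizes each prime leaf in time and on a grid linear in its size, and each gluing step rescales the sub-representation assembled so far to the scale of the current cavity at cost $\Oh(|V|)$; summing over the $\Oh(|V|)$ nodes gives the stated $\Oh(|V|^2)$ bound.

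The step I expect to be the main obstacle is precisely the gluing/sealing: proving that the four polyhedra of an arbitrary separating $4$-cycle can \emph{always} be realized as two boxes together with two \LLs\ that simultaneously bound the inner cavity as an axis-aligned region meeting the child only as prescribed, meet the parent in exactly the required contacts and no others, and lie on an integer grid compatible with the output of Theorem~\ref{thm:box} -- in particular, propagating the empty-cavity interface invariant unchanged through both the shelled base case of Corollary~\ref{th:prime} and every gluing step. A secondary point is to make the structure theorem fully precise, namely which separating $4$-cycles to cut along so that the pieces remain optimal $1$-planar and the recursion bottoms out at prime graphs.
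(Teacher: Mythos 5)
Your overall strategy coincides with the paper's: decompose along separating $4$-cycles of the quadrangulation, handle the irreducible pieces via Corollary~\ref{th:prime}, reserve an empty axis-aligned region at each cut into which the inner part is recursively inserted, and charge $\Oh(|V|)$ work per level of the nesting to get $\Oh(|V|^2)$. But the two steps you defer are exactly where the substance lies, and both hide genuine problems. First, the decomposition does not bottom out where you say it does. Separating $4$-cycles of a quadrangulation need not form a laminar family: two maximal ones can share an opposite pair of vertices and interleave, and the paper's Lemma~\ref{lem:almost} shows this happens precisely when the outer piece's quadrangulation degenerates to $K_{2,3}$. In that situation (and whenever your binary cutting terminates at a piece whose quadrangulation is $K_{2,3}$), the resulting piece is an optimal $1$-planar \emph{multigraph} that is not prime --- its quadrangulation is not $3$-connected, $G_B$ and $G_W$ are not $3$-connected planar duals, and Corollary~\ref{th:prime} simply does not apply. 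The paper must split this off as an explicit hand-built case (Case~1 of Claim~\ref{cl:two-case}: an \LL\ for the inner vertex $w$ plus four candidate frames inside a $5\times5\times4$ box). Your recursion has no base case there.

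Second, your interface invariant is both stronger than necessary and stronger than what the prime-case representation delivers. When a cavity is opened inside the box representation of the outer piece by perturbing one box of a facial $4$-cycle, only four of its six facets are naturally bounded by the shapes of the cycle vertices; demanding that all six be covered forces two cycle vertices to become \LLs\ \emph{per cavity}, and since a vertex can lie on several cycles of the decomposition this can force unions of more than two boxes, leaving the class of \LLs. The paper instead works with a \emph{frame}: a box with only four facets assigned to the cycle vertices, the two unassigned facets adjacent (types \ppar\ and \pperp), and requires the inner representation of $G_{in}(C)-E(G[C])$ to lie strictly inside and touch only the assigned facets. This also dissolves your remaining difficulties: the cycle's own edges are realized entirely in the parent, so there is no identification of the child's copies of $v_1,\dots,v_4$ with the parent's shapes and no artificial crossing diagonals to delete. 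The \LLs\ then arise not from sealing cavities but from fitting the outer piece's representation into \emph{its own} prescribed frame (re-routing the fourth outer vertex $w_2$ and turning its white neighbours into \LLs) and from the $K_{2,3}$ case. So the route is the right one, but neither flagged obstacle is routine, and as written the proposal does not constitute a proof.
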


Our algorithm is similar to a recursive procedure used for constructing
box-contact representations of planar graphs in~\cite{FF11,Tho88}. The basic idea is to find
separating $4$-cycles and represent the inner and the outer parts of the graph induced by the cycles
separately. Then these parts are combined together to produce the final representation.
Since the separating $4$-cycles can be nested inside each other, the
running time of our algorithm is dominated by the finding of separating $4$-cycles
and the nested structure among them.
Unlike the early algorithms for box-contact representations of planar graphs~\cite{FF11,Tho88}, our algorithms produce proper contacts between the 3D objects (boxes and \LLs).

\section{Preliminaries}
\label{sec:prelim}

Here we introduce the tools needed to prove our results. In Section~\ref{subsect:def} we define
the known concepts of an \df{ordered path partition} and a \df{Schnyder wood}. In Section~\ref{subsect:corr} we describe new results about the relationship between these two structures for  $3$-connected plane graphs. Section~\ref{sec:ortho} reviews the concept of an \df{orthogonal surface}.

\subsection{Ordered Path Partitions, Canonical Orders and Schnyder Woods}
\label{subsect:def}

The concepts of a Schnyder wood and a canonical order
 were initially introduced for maximal plane graphs~\cite{FPP90,Sch90}; later they were generalized
 to $3$-connected plane graphs~\cite{Kan96,FZ08}.
Although the concepts are proved to be equivalent for maximal plane graphs~\cite{FM01},
 they are no longer equivalent after the generalization. Thus Badent~et~al.~\cite{BBC11}
 generalized the notion of a canonical order to an {\em ordered path partition} for a $3$-connected
 plane graph in an attempt to make it equivalent to a Schnyder wood.

Let $G$ be a $3$-connected plane graph with a specified pair of vertices $\{v_1, v_2\}$ and a third vertex
$v_3\notin\{v_1,v_2\}$, such that $v_1$, $v_2$, $v_3$ are all on the outer face in that counterclockwise
order. Add the edge $(v_1,v_2)$ to the outerface of $G$ (if it does not already contain it) and call the augmented graph
$G'$. Let $\Pi=(V_1,V_2,\ldots,V_L)$ be a partition of the vertices of $G$.
Then $\Pi$ is an \df{ordered path partition} of $G$ if the following conditions hold:

\begin{enumerate}[(1)]

	\item $V_1$ contains the vertices on the clockwise path from $v_1$ to $v_2$ on the outer
		cycle;
		$V_L=\{v_3\}$;
	\item for $1\le k\le L$, the subgraph $G_k$ of $G'$ induced by the vertices in
		$V_1\cup\ldots\cup V_k$ is $2$-connected and internally $3$-connected
		(that is, removing two internal vertices of $G_k$ does not disconnect it); and
		the outer cycle $C_k$ of $G_k$ contains the edge $(v_1,v_2)$;

	\item for $2\le k\le L$, each vertex on $C_{k-1}$ has at most one neighbor on $V_k$.
\end{enumerate}

The pair of vertices $(v_1, v_2)$ forms the \df{base-pair} for $\Pi$ and $v_3$ is called the \df{head vertex}
of~$\Pi$.
For an ordered path partition $\Pi=(V_1,V_2,\ldots,V_L)$ of $G$, we say that a vertex $v$ of $G$ has
\df{level} $k$ if $v\in V_k$. The \df{predecessors} of $v$ are all the vertices with equal or smaller levels
and the \df{successors} of $v$ are all vertices with equal or larger levels; see Fig.~\ref{fig:can-schB}.

The definition of a \df{canonical order} is similar to that for an ordered path partition, but for Condition (3),
 which is replaced by the following more restricted Condition (3'):

\begin{enumerate}
	\item[(3')] for $2\le k\le L$, $V_k$ is a (left-to-right) path $\{v_a, v_{a+1}, \ldots v_b\}$, which is a
		subpath of $C_{k}-(v_1,v_2)$,
		and each $v_i$ with $i\in\{a,\dots, b\}$ has at least one neighbor in $G-G_k$.
		If $a<b$ then each of $v_a$ and $v_b$ has one neighbor on $C_{k-1}$ and these are the only
		neighbors of $V_k$ in $G_{k-1}$.
		If $a=b$ then $v_a=v_b$ may have multiple neighbors on $C_{k-1}$.
\end{enumerate}

Thus a canonical order is a special ordered path partition and the definition of the \df{base-pair},
 \df{head vertex}, \df{predecessors} and \df{successors} follow from that in an ordered path partition.

A Schnyder wood is defined as follows. Let $v_1$, $v_2$, $v_3$ be three specified vertices
in that counterclockwise order on the outer face of $G$. For $i=1,2,3$, add a half-edge from $v_i$
reaching into the outer face. Then a \df{Schnyder wood} is an orientation and coloring of all the edges of~$G$
(including the added half edges) with the colors $1,2,3$ satisfying the following conditions:

\begin{enumerate}[(1)]

	\item every edge $e$ is oriented in either one (\df{uni-directional edge}) or two opposite directions
		(\df{bi-directional edge}). The edges are colored and if $e$ is bi-directional, then the two
		directions have distinct colors;

	\item the half-edge at $v_i$ is directed outwards and colored $i$;

	\item each vertex $v$ has out-degree exactly one in each color, and the counterclockwise
		order of edges incident to $v$ is: outgoing in color $1$, incoming in color $2$, outgoing
		in color $3$, incoming in color $1$, outgoing in color $2$, incoming in color $3$;

	\item there is no interior face whose boundary is a directed cycle in one color.

\end{enumerate}

\begin{figure}[t]
\centering
  \begin{subfigure}[t]{.45\textwidth}
    \centering
    \includegraphics[height=6.8cm]{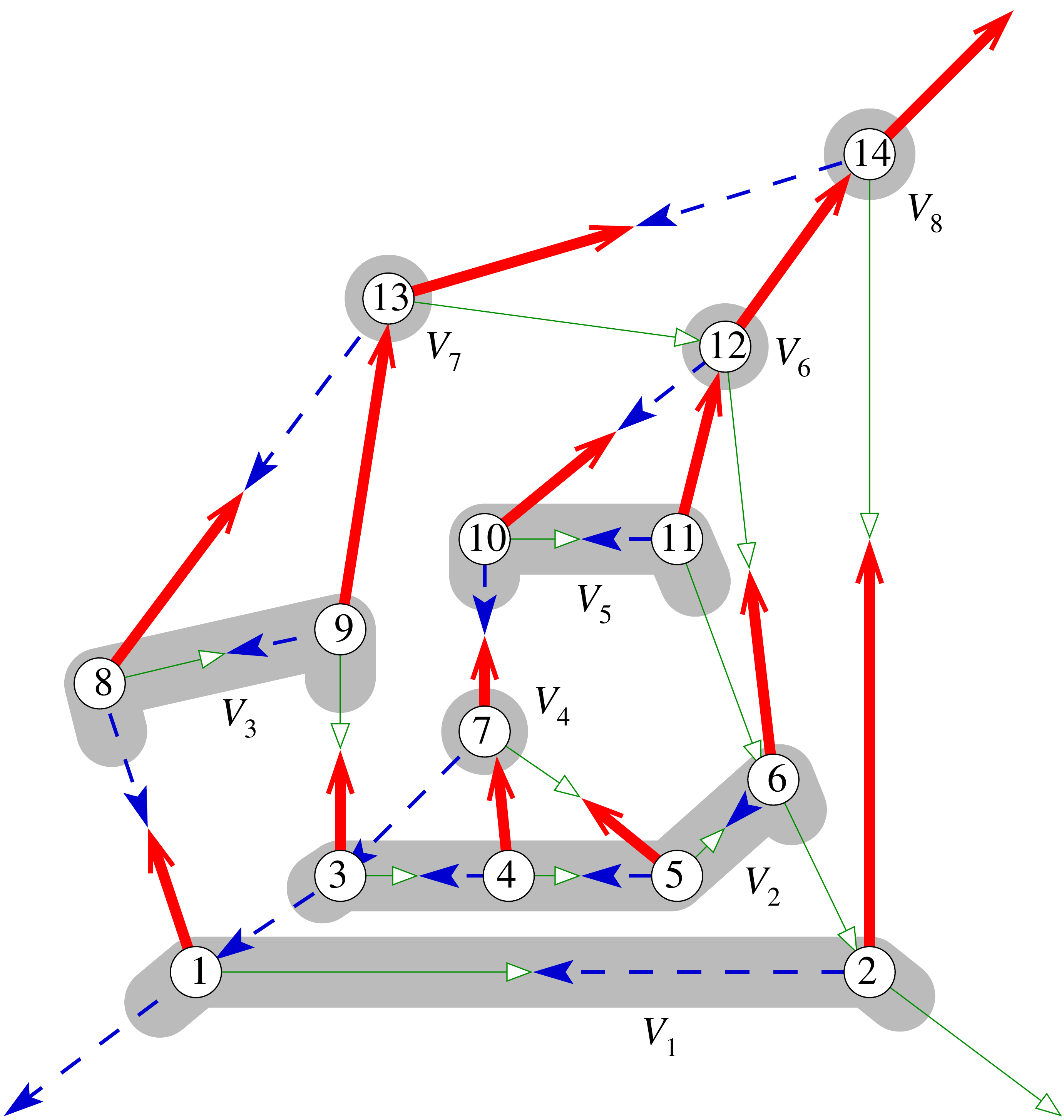}
    \caption{}
    \label{fig:can-schB}
  \end{subfigure}
~
  \begin{subfigure}[t]{.45\textwidth}
    \centering
    \includegraphics[height=6.8cm]{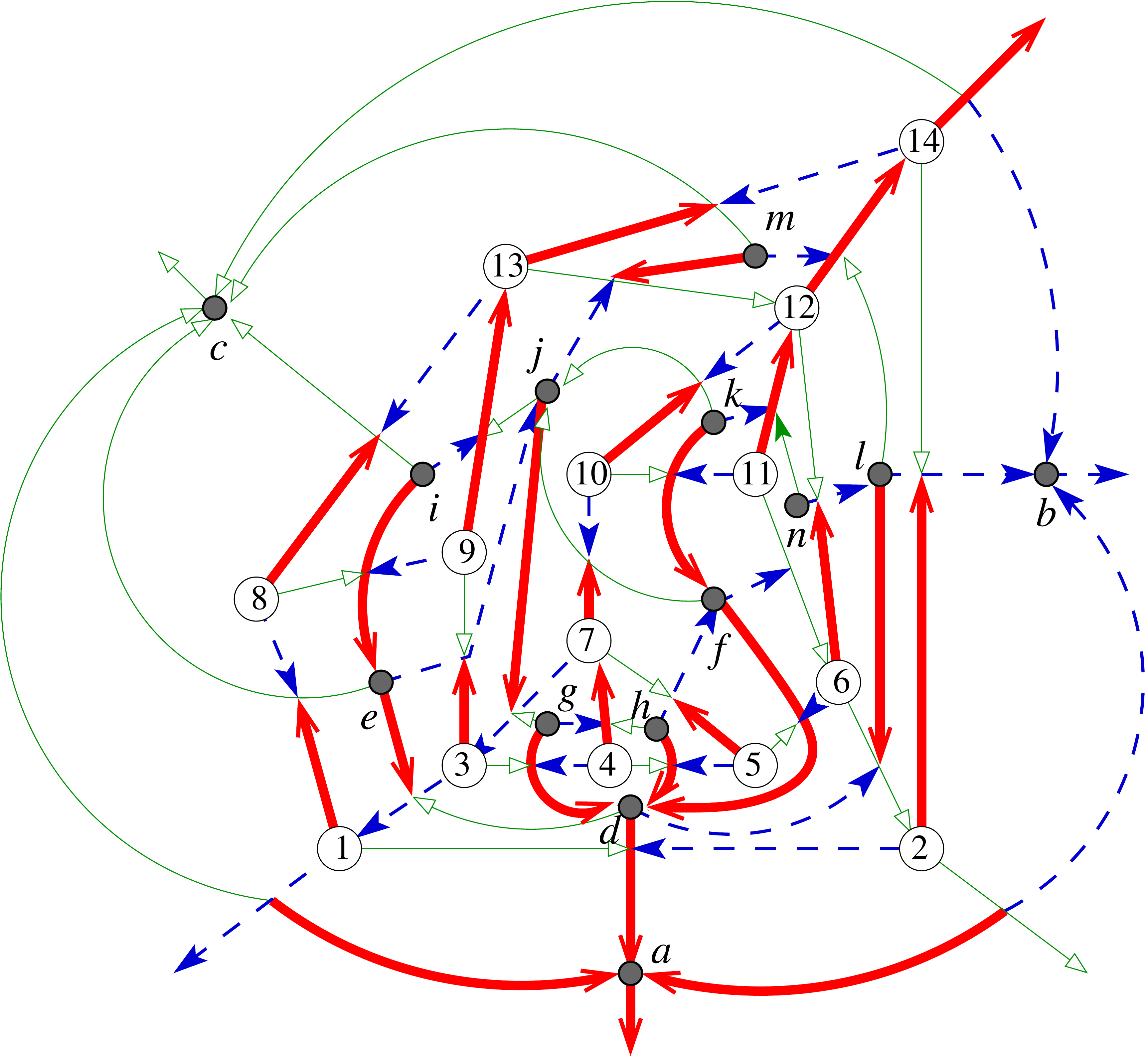}
	\caption{}
    \label{fig:can-schA}
  \end{subfigure}
\caption{(a)~An ordered path partition
 and its corresponding Schnyder wood for a $3$-connected graph $G$.
 (b)~The Schnyder woods for the primal and the dual of $G$. The thick solid red, dotted blue and thin
 solid green edges represent the three trees in the Schnyder wood.}
\label{fig:can-sch}
\end{figure}

These conditions imply that for $i=1,2,3$, the edges with color $i$
induce a tree $\TT_i$ rooted at $v_i$, where all edges of $\TT_i$ are directed towards the root.
We denote the Schnyder wood by $(\TT_1. \TT_2, \TT_3)$.
Every $3$-connected plane graph $G$ has a Schnyder wood~\cite{FZ08,BF12}.
From a Schnyder wood of
$G$, one can construct a \df{dual Schnyder wood} (the Schnyder wood for the
dual of $G$). Consider the dual graph $G^*$ of $G$ in which the vertex
for the outer face of $G$ has been split into three vertices forming a triangle. These three vertices
represent the three regions between pairs of half edges from the outer vertices of $G$. Then a Schnyder
wood for $G^*$ is formed by orienting and coloring the edges so that between an edge $e$ in $G$
and its dual $e^*$ in $G^*$, all three colors $1,2,3$ are used. In particular, if $e$ is uni-directional
in color $i$, $i=1,2,3$, then $e^*$ is bi-directional in colors $i-1$, $i+1$ and vice versa;
see Fig.~\ref{fig:can-schA}.

\subsection{Correspondence}
\label{subsect:corr}

Let $G$ be a $3$-connected plane graph with a specified base-pair $(v_1,v_2)$ and a specified head
vertex $v_3$ such that $v_1$, $v_2$, $v_3$ are in that counterclockwise order on the outer face.
It is known that an ordered path partition of $G$ defines a Schnyder wood on $G$, where the three outgoing
edges for each vertex are to its (1)~leftmost predecessor, (2)~rightmost predecessor, and
(3)~highest-level successor~\cite{FZ08,BF12}. We call an ordered path partition and the corresponding Schnyder
wood computed this way to be \df{compatible} with each other.
Badent~et~al.~\cite{BBC11} claim that the converse can also be done, that is, given a Schnyder wood on $G$,
one can compute an ordered path partition, compatible with the Schnyder wood (and hence,
there is a one-to-one correspondence between the concepts).
However, it turns out that the algorithm in~\cite{BBC11} for converting a Schnyder wood to a compatible ordered path partition is incomplete\footnote{Confirmed by a personal communication with the authors of~\cite{BBC11}.}, that is, the computed ordered path partition is not always
compatible with the Schnyder wood; see Fig.~\ref{fig:non-eq}
 for an example. Here we provide a correction for the algorithm.

\begin{figure}[t]
\centering
\includegraphics[width=\textwidth]{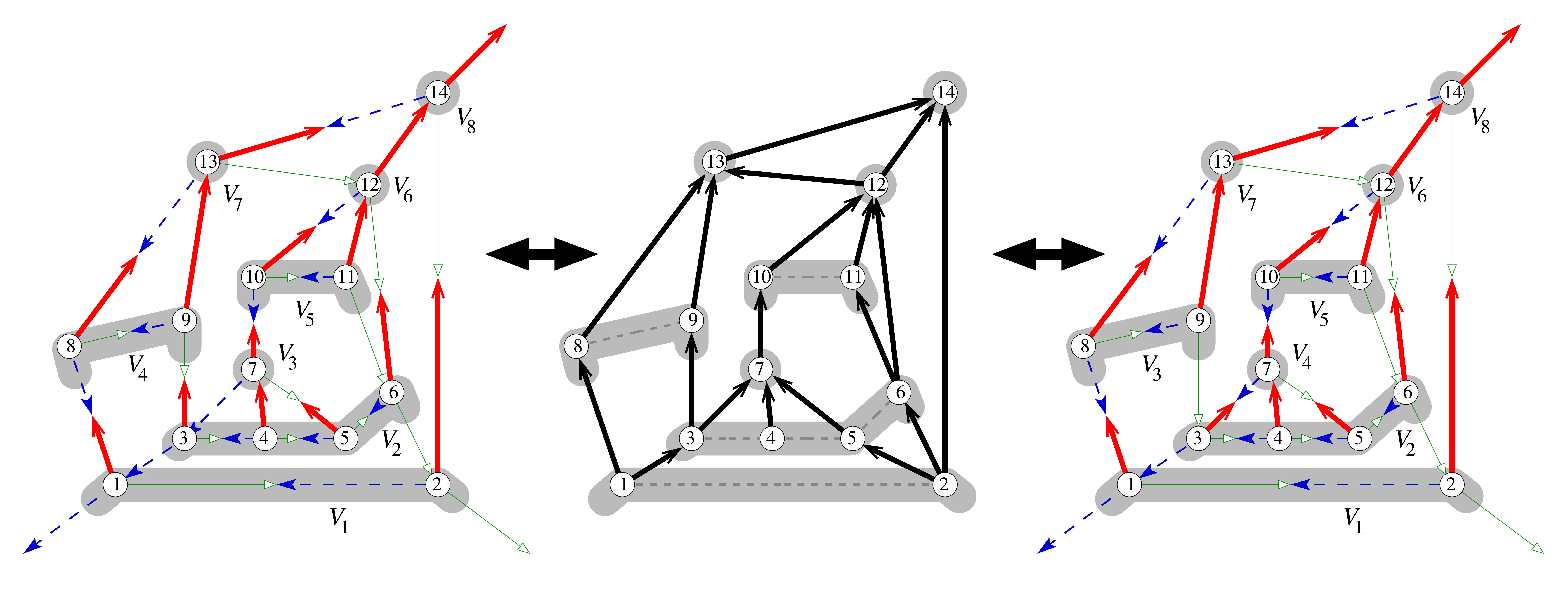}
\caption{Two ordered path partition with their compatible Schnyder woods (left and right) that gives the
 same acyclic graph $D_i$ after reversing the direction of the edges in two trees and grouping all cyclic
 maximal paths. Thus starting with either Schnyder wood can give either ordered path partition by the
 algorithm in~\cite{BBC11}.}
\label{fig:non-eq}
\end{figure}

Let $(\TT_1, \TT_2, \TT_3)$ be a Schnyder wood of $G$. From here on whenever
we talk about Schnyder woods, we consider a circular order for the indices $i=1,2,3$ so that
$(i-1)$ and $(i+1)$ are well defined when $i\in \{1,2,3\}$. By~\cite{FZ08}
there is no cycle of $G$ that is directed in $\TT_{i-1}^{-1}\cup \TT_{i+1}^{-1}\cup \TT_{i}$, where
$\TT_i^{-1}$, $i=1,2,3$, denotes the reversed edges of $\TT_i$.
Since there are some bi-directional edges in $G$ colored
with $i+1$ and $i-1$ (which we call \df{cyclic}), $\TT_{i-1}^{-1}\cup \TT_{i+1}^{-1}\cup \TT_{i}$ induces some directed cycles of length $2$. We can form a directed acyclic graph $D_i$ by \textit{grouping}
each maximal path in $\TT_{i-1}^{-1}\cup \TT_{i+1}^{-1}\cup \TT_{i}$ with cyclic bi-colored edges
(call such a path a \textit{cyclic maximal path}) into a single vertex. Here the maximal paths $P_i$ with
cyclic bi-colored edges are the vertices of $D_i$, and for
two such paths $P_i$ and $P_j$, there is a directed edge from $P_i$ to $P_j$ in $D_i$ whenever
there is a directed (not cyclic) edge $(u,v)$ in $\TT_{i-1}^{-1}\cup \TT_{i+1}^{-1}\cup \TT_{i}$
for some $u\in P_i$ and $v\in P_j$.
Badent~et~al.~\cite{BBC11} showed that $D_i$ is acyclic and they suggest
to obtain an ordered path partition by taking a topological order of $D_i$.
However, the resulting ordered path partition is not necessarily compatible with $(\TT_1, \TT_2, \TT_3)$ and Fig.~\ref{fig:non-eq} shows an example.
Instead, before grouping the cyclic maximal paths, we augment the graph
$\TT_{i-1}^{-1}\cup \TT_{i+1}^{-1}\cup \TT_{i}$ with the following directed edges.
For each vertex $v$ of $G$, we add a directed edge from each child of $v$ in $\TT_{i-1}$
and $\TT_{i+1}$ to the parent of $v$ in $\TT_i$. The augmented graph remains acyclic
and it is consistent with the partial order defined by $D_i$ (we call this the \textit{partial order defined by
$(\TT_{i-1}^{-1}\cup \TT_{i+1}^{-1}\cup \TT_{i})$}).
A topological order of the augmented graph (after grouping all cyclic maximal paths)
induces a compatible ordered path partition.

\begin{lemma}
\label{lem:schny-can}
Let $(\TT_1, \TT_2, \TT_3)$ be a Schnyder wood of a $3$-connected plane graph
$G$ with three specified vertices $v_1$, $v_2$, $v_3$ in that counterclockwise order on the outer face.
Then for $i=1,2,3$ one can compute in linear time an ordered path partition $\Pi_i$ compatible with
$(\TT_1, \TT_2, \TT_3)$ such that $\Pi_i$ has $(v_{i-1}, v_{i+1})$ as the base pair and $v_i$ as the head.
The ordered path partition is consistent with the partial order defined by
$(\TT_{i-1}^{-1}\cup \TT_{i+1}^{-1}\cup \TT_{i})$.
\end{lemma}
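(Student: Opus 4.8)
The plan is to verify that the augmented digraph sketched before the lemma does the job. Fix $i\in\{1,2,3\}$; to ease notation take $i=3$, so the target base pair is $(v_1,v_2)$ and the target head is $v_3$. Let $H=\TT_1^{-1}\cup\TT_2^{-1}\cup\TT_3$ and let $\widehat H$ be $H$ together with, for every vertex $v$ with $\TT_3$-parent $p$, an arc $c\to p$ from each $\TT_1$-child and each $\TT_2$-child $c$ of $v$. First, $\widehat H$ is plane: by Schnyder condition~(3) the outgoing colour-$3$ edge $vp$ and any incoming colour-$1$ or colour-$2$ edge $cv$ at $v$ bound a common face corner at $v$, so each new arc $c\to p$ can be routed through that corner without crossing $G$. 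Grouping each cyclic maximal path (a maximal path of bi-directional edges coloured $1$ and $2$) into one node turns $\widehat H$ into a digraph $\widehat D$; since no new arc is cyclic, the cyclic maximal paths, and hence the nodes of $\widehat D$, are exactly those of $D_3$, and $D_3$ is a spanning subdigraph of $\widehat D$.

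The key step is that $\widehat D$ is acyclic; I expect this to be the main obstacle. The argument I would use is by contradiction: a directed cycle of $\widehat D$ lifts to a closed directed walk in the plane digraph $\widehat H$ that must use at least one new arc, since $D_3$ is acyclic by~\cite{BBC11}. Using that such an arc $c\to p$ lies in a face corner at its witness $v$, together with the region decomposition carried by the Schnyder wood --- the three colour paths out of a vertex split the interior into regions, and a directed path in $H$ never crosses two of these paths --- one shows that such a walk would force two monotone paths of distinct colours to cross, or would create a monochromatic directed face cycle, contradicting the Schnyder axioms. Concretely this amounts to re-running the proof that $D_3$ is acyclic from~\cite{BBC11} with the new arcs present. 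Granting this, any topological order of $\widehat D$ is also one of $D_3$, so by~\cite{BBC11} the partition $\Pi_3=(V_1,\dots,V_L)$ --- with $V_k$ the $k$-th node of $\widehat D$ in that order --- is an ordered path partition with base pair $(v_1,v_2)$ and head $v_3$; it is consistent with the partial order of $D_3$ simply because $D_3\subseteq\widehat D$.

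Finally, one checks that $\Pi_3$ is compatible with $(\TT_1,\TT_2,\TT_3)$, i.e.\ that the ``leftmost predecessor / rightmost predecessor / highest-level successor'' rule applied to $\Pi_3$ returns $(\TT_1,\TT_2,\TT_3)$. For colours $1$ and $2$: $\TT_1^{-1},\TT_2^{-1}\subseteq\widehat H$ put $\text{par}_{\TT_1}(v)$ and $\text{par}_{\TT_2}(v)$ before $v$, and in the rotation at $v$ (Schnyder condition~(3)) the predecessor-neighbours of $v$ form a consecutive arc whose ends are $\text{par}_{\TT_1}(v)$ and $\text{par}_{\TT_2}(v)$; since the colour-$1$ (resp.\ colour-$2$) path from $v$ descends monotonically to $v_1$ (resp.\ $v_2$) without crossing the other, $\text{par}_{\TT_1}(v)$ is the leftmost and $\text{par}_{\TT_2}(v)$ the rightmost predecessor --- a conclusion that uses only $H$ and not the chosen topological order, which is why it already held for $D_3$. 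The colour-$3$ edge is where the new arcs matter: $\text{par}_{\TT_3}(v)$ is a successor of $v$ since $\TT_3\subseteq\widehat H$, while every other successor of $v$ is a $\TT_1$- or $\TT_2$-child $c$ of $v$, and the arc $c\to\text{par}_{\TT_3}(v)$ forces the level of $c$ to be strictly below that of $\text{par}_{\TT_3}(v)$ --- strictly, because a vertex has at most one neighbour in each $V_k$ by condition~(3) of an ordered path partition. Hence $\text{par}_{\TT_3}(v)$ is the unique highest-level successor of $v$, the recovered Schnyder wood is $(\TT_1,\TT_2,\TT_3)$, and $\Pi_3$ is compatible. Repeating the construction for $i=1,2$ and noting that computing the trees, building $\widehat H$, contracting, and topological sorting are all linear finishes the proof.
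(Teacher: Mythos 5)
Your construction is the same as the paper's: augment $\TT_{i-1}^{-1}\cup\TT_{i+1}^{-1}\cup\TT_{i}$ with an arc from each $\TT_{i-1}$- or $\TT_{i+1}$-child $c$ of a vertex $v$ to the $\TT_i$-parent $p$ of $v$, contract the cyclic maximal paths, and topologically sort. Your compatibility check and the runtime analysis also match the paper. However, there is a genuine gap exactly where you flag one: the acyclicity of the augmented digraph is the heart of the lemma, and you do not prove it. Saying that it ``amounts to re-running the proof that $D_3$ is acyclic from~\cite{BBC11} with the new arcs present'' is not an argument --- the new arcs are not edges of $G$, so the cycle-freeness result for $\TT_{i-1}^{-1}\cup\TT_{i+1}^{-1}\cup\TT_{i}$ from~\cite{FZ08} does not apply to them, and your proposed route (a closed walk forcing two colour paths to cross or creating a monochromatic face cycle) is left entirely unexecuted; it is not clear it can be pushed through as stated, since the closed walk in question mixes edges of three colours with artificial chords.

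The paper closes this gap with a different and more concrete observation: each added arc $c\to p$ is \emph{redundant for reachability}. Using the classification of edge orientations around a face of a Schnyder wood (conditions (i)--(vi) in the proof, Fig.~\ref{fig:face-schny}), one walks from $c$ along the boundary of the face spanned by $c$, $v$, and the next neighbour of $v$ in the rotation, and finds that every such boundary path $P^b_j$, $P^g_j$ is already directed towards $p$ in $\TT_{i-1}^{-1}\cup\TT_{i+1}^{-1}\cup\TT_{i}$. Hence $p$ is reachable from $c$ in $D_i$ before the augmentation, so adding $c\to p$ cannot create a cycle (a cycle through the new arc would, after substituting the existing directed path, yield a directed cycle in $\TT_{i-1}^{-1}\cup\TT_{i+1}^{-1}\cup\TT_{i}$, contradicting~\cite{FZ08}). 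You should either supply this face-by-face argument or an equivalent one; without it the proof is incomplete. A further small point worth a sentence in a complete write-up: one must note that $c$ and $p$ never lie in the same contracted node, since otherwise the new arc would be a self-loop and $p$ would fail to be a \emph{strictly} higher-level successor.
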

\begin{proof}
Consider the directed acyclic graph $D_i$ by grouping each cyclic maximal path of
 $(\TT_{i-1}^{-1}\cup \TT_{i+1}^{-1}\cup \TT_{i})$ into a single vertex.
 $D_i$ is acyclic since by~\cite{FZ08}, no cycle in $G$ is directed in
 $\TT_{i-1}^{-1}\cup \TT_{i+1}^{-1}\cup \TT_{i}$.
 Furthermore each vertex in $D_i$ has at least two predecessors, one in $\TT_{i-1}^{-1}$ and one in
 $\TT_{i+1}^{-1}$. Therefore one can compute an ordered path partition of $G$ by taking a topological
 ordering $\Pi$ of $D_i$ and for each vertex $u$ of $G$ assigning $u$ label $k$ where $u\in U$ and
 $k$ is the rank of $U$ in $\Pi$; see~\cite{BBC11} for details. However the ordered path partition obtained
 by this procedure might not be compatible with $(\TT_1, \TT_2, \TT_3)$; in particular for a vertex $u$ of
 $G$, its parent in $T_i$ might not be its highest-level successor; see Fig.~\ref{fig:non-eq}.

 In order to ensure compatibility between
 the Schnyder wood and the obtained ordered path partition, we further augment $D_i$ by adding some
 extra edges. In particular, for each vertex $u$ of $G$, if $v$ is its parent in $\TT_i$ and $w$ is it child in
 either $\TT_{i-1}$ or $\TT_{i+1}$, then we add a directed edge $(V,W)$ in $D^*_i$ where $V$, $W$ are
 cyclic maximal paths in $D_i$ and $v\in V$, $w\in W$. Call the augmented directed graph $H_i$.
 We now show that with the addition of the extra edges the directed graph $H_i$ remains acyclic.
 We prove this claim only for $H_3$; for $H_1$ and $H_2$ the proofs are analogous.

\begin{figure}[t]
\centering
\includegraphics[width=0.9\textwidth]{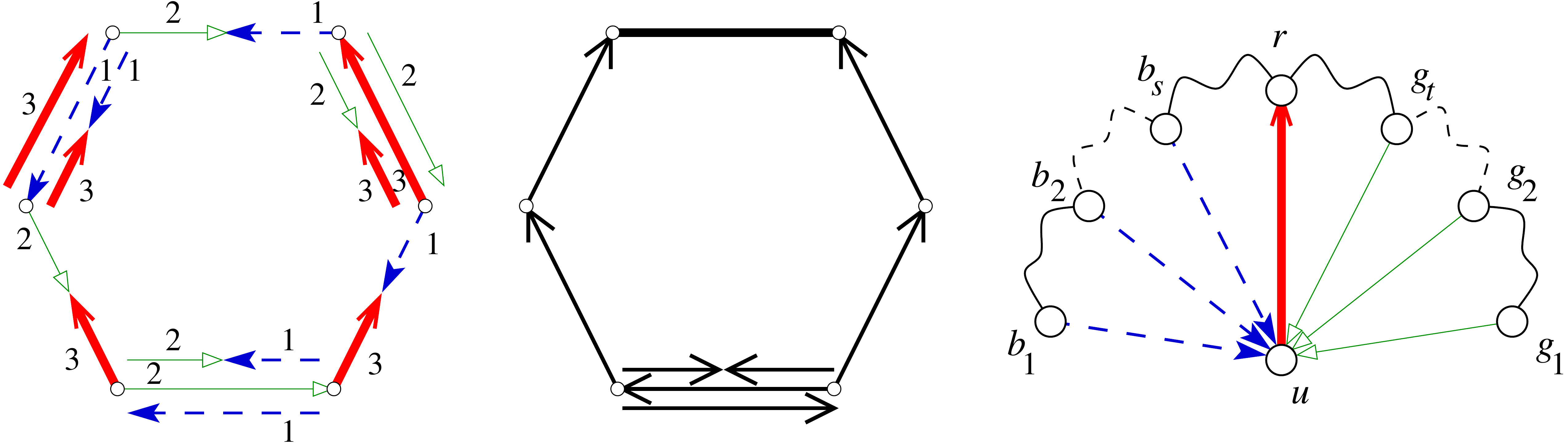}\\
(a)\hspace{0.27\textwidth}(b)\hspace{0.27\textwidth}(c)\hspace{0.05\textwidth}
\caption{(a) Coloring and orientation of the edges around a generic face in to a Schnyder wood
 $(\TT_1, \TT_2, \TT_3)$, (b) orientation of the edges in
 $D_3=(\TT_{2}^{-1}\cup \TT_{1}^{-1}\cup \TT_{3})\setminus\TT_{2,1}$, (c) directed paths in
 $\TT_{2}^{-1}\cup \TT_{1}^{-1}\cup \TT_{3}$ from children of a vertex $u$ in $\TT_1$ and $\TT_2$
 to the parent in $\TT_3$.}
\label{fig:face-schny}
\end{figure}

Consider an arbitrary face $f$ of $G$. By~\cite{FZ08}, the edges on the boundary of $f$ can be
 partitioned into at most six consecutive sets in clockwise order around $f$
 (see Fig.~\ref{fig:face-schny}(a)):

\begin{enumerate}[(i)]
	\item one edge from the set {clockwise in color $1$, counterclockwise in color $2$, bi-colored with
		a clockwise $1$ and counterclockwise $2$}
	\item zero or more edges bi-colored in counterclockwise $2$ and clockwise $3$
	\item one edge from the set {clockwise in color $3$, counterclockwise in color $1$, bi-colored with
		a clockwise $2$ and counterclockwise $1$}
	\item zero or more edges bi-colored in counterclockwise $1$ and clockwise $2$
	\item one edge from the set {clockwise in color $2$, counterclockwise in color $3$, bi-colored with
		a clockwise $2$ and counterclockwise $3$}
	\item zero or more edges bi-colored in counterclockwise $3$ and clockwise $1$

\end{enumerate}

Fig.~\ref{fig:face-schny}(b) shows the direction of the edges of $f$ in $D_3$. Now consider a vertex $u$
 of $G$. Let $r$ be its parent in $\TT_3$, $b_1$, $\ldots$, $b_s$ its children in $\TT_1$ in clockwise order
 around $u$, and $g_1$, $\ldots$, $g_t$ its children in $\TT_2$ in counterclockwise order around $u$.
 Thus $L=b_1, \ldots, b_s, r, g_t, \ldots, g_1$ appears consecutively around $u$ in that clockwise order.
 Let $P^b_i$, $i=1, \ldots, s$ be the paths from $b_i$ to its next vertex in $L$ that forms the face of $G$
 containing these two vertices and $u$; see Fig.~\ref{fig:face-schny}(c). Similarly let $P^g_i$, $i=1, \ldots,
 t$ be the paths from $g_i$ to its previous vertex in $L$ that forms the face of $G$ containing these two
 vertices and $u$. From Fig.~\ref{fig:face-schny}(b) one can see that all the paths $P^b_i$ and $P^g_i$
 are directed towards $r$ (possibly with some bi-directional edges) in $D_3$. This imply that for each
 vertex $b_i$, $i=1\ldots, s$ (resp. $g_i$, $i=1,\ldots, t$), there is a directed path (possibly with some
 bi-directional edges) from $b_i$ (resp. $g_i$) to $r$. Thus adding an edge from any $b_i$ or $g_i$ to $r$
 does not create any cycle in $D_3$ since otherwise replacing the edge with the directed path induces
 a directed cycle in $\TT_{2}^{-1}\cup \TT_{1}^{-1}\cup \TT_{3}$, a contradiction. Since the addition of the
 extra edges does not make any cycle in $D_3$, the graph $H_3$ remains acyclic.

Once we add the extra edges, we can compute an ordered path partition consistent with $D_i$, by taking a
 topological ordering of $H_i$ and for each vertex $u$ of $G$, assigning $k$ as its label
 where $u\in U$ for some cyclic maximal path $U$ in $D_i$ and $k$ is the rank of $U$ in the
 topological ordering of $H_i$. This ordered path partition is compatible with $(\TT_1, \TT_2, \TT_3)$ since
 for each vertex $u$, the parents of $u$ in $\TT_{i-1}$ and in $\TT_{i+1}$ are the leftmost and rightmost
 predecessors (due to the embedding) and the parent in $\TT_i$ is the highest-level successor
 (due to the addition of the extra edges).

The time complexity of the above algorithm is linear. Computing the directed graph $D_i$, $i=1,2,3$
 can be done by traversing the edges of $G$. Addition of the extra
 edges takes $\sum_{v\in V(G)}deg(v)=2|E(G)|=\Oh(n)$ time also. Finally the topological ordering can be
 done by a linear-time Depth-first traversal on the graph $H_i$, which also has a linear size.
\end{proof}

Fig.~\ref{fig:sch-order} illustrates the three ordered path partitions computed from a Schnyder wood of
 a $3$-connected plane graph using the algorithm described above.

\begin{figure}[htbp]
\centering
\includegraphics[width=\textwidth]{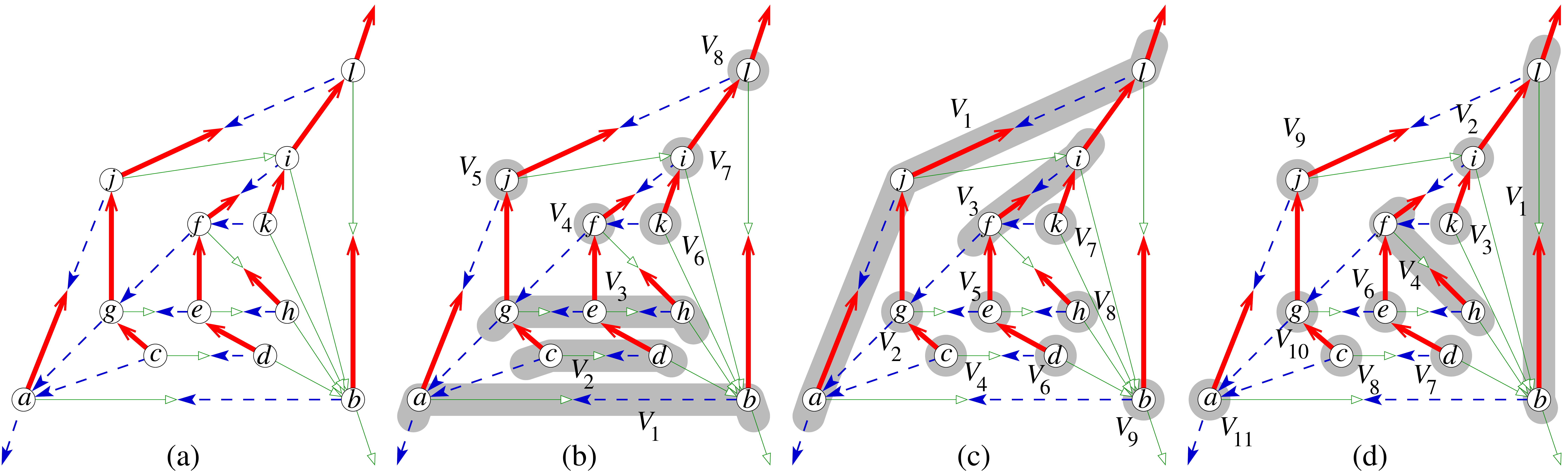}
\caption{(a) A Schnyder wood $(\TT_1, \TT_2, \TT_3)$ in a $3$-connected plane graph $G$, which is not
 compatible with any canonical order, (b)--(d) the three compatible ordered path partition computed from
 $(\TT_1, \TT_2, \TT_3)$ using the algorithm described in the proof of Lemma~\ref{lem:schny-can}.}
\label{fig:sch-order}
\end{figure}


%
%
%

\subsection{Elementary Canonical Orders and Elementary Schnyder Woods}

Here we introduce the concepts of \df{elementary} canonical orders and \df{elementary} Schnyder woods,
 which we use in the subsequent section for an alternate proof for Theorem~\ref{thm:box}.

A canonical order $\Pi=(V_1,V_2,\ldots,V_L)$ for a $3$-connected plane graph $G$ is \df{elementary}
 if (i) $|V_1|=2$ (or equivalently the base-pair $(v_1, v_2)$ induces an edge on the outer cycle).
 Kant showed that any $3$-connected plane graph $G$, with an edge $(v_1, v_2)$ and a third vertex $v_3$,
 both on the outer cycle, has an elementary canonical order for the base pair $(v_1,v_2)$ and head $v_3$
 and such an elementary canonical order for $G$ can be computed in linear time~\cite{Kan96}.

Not all Schnyder woods are compatible with canonical orders; Fig.~\ref{fig:sch-order}(a) shows an example
 of a Schnyder wood that is not compatible with any canonical order.
 Let $\Pi=(V_1,V_2,\ldots,V_L)$ be a canonical order in a $3$-connected plane graph $G$ with
 a specified base-pair $(v_1,v_2)$ and a specified
 head vertex $v_3$ such that $v_1$, $v_2$, $v_3$ are in that counterclockwise order on the outer face.
 Let  $(\TT_1, \TT_2, \TT_3)$ be the Schnyder wood compatible with $\Pi$.
 Then by the definition of compatible Schnyder wood, it follows that
 for every maximal path $P=\{u_1, u_2, \ldots u_t\}$ in $(\TT_1, \TT_2, \TT_3)$
 with each edge $(u_j, u_{j+1})$ bi-colored in color $1$ and $2$ for $1\le j<t$, there is no child of the vertices
 $u_2, \ldots, u_{t-1}$ in $\TT_3$, and the only children of $u_1$ and $u_t$ in $\TT_3$ (if any) are
 respectively the parent of $u_1$ in $\TT_1$ and the parent of $u_t$ in $\TT_3$.
We call this property for a Schnyder wood $(\TT_1, \TT_2, \TT_3)$ the \df{canonical property} of the
 Schnyder wood for color 3. One can define the canonical property of a Schnyder wood analogously
 for colors $1$ and $2$.
Let $(\TT_1, \TT_2, \TT_3)$ be a Schnyder wood for a $3$-connected plane graph $G$ with three specified
 vertices $v_1$, $v_2$, $v_3$ in that counterclockwise order on the outer face. Then
 $(\TT_1, \TT_2, \TT_3)$ is \textit{elementary} for the color $i$, $i=1,2,3$, if (i) $(v_{i-1},v_{i+1})$ is an
 edge on the outer cycle, and (ii) the canonical property holds for color $i$ in $(\TT_1, \TT_2, \TT_3)$.

\begin{lemma}
\label{lem:elementary}
 Let $G$ be a $3$-connected plane graph with three specified vertices $v_1$, $v_2$, $v_3$ in the
 counterclockwise order on the outer face of $G$. Let $\Pi_i$ be an ordered path partition of $G$
 with a specified base-pair $(v_{i-1},v_{i+1})$ and a specified head vertex $v_i$ for $i=1,2,3$ and let
 $(\TT_1, \TT_2, \TT_3)$ be the compatible Schnyder wood for $\Pi$. Then
 $\Pi_i$ is an elementary canonical order if and only if $(\TT_1, \TT_2, \TT_3)$ is an elementary
 Schnyder wood for color $i$.
\end{lemma}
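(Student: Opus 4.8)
The plan is to prove the equivalence for $i=3$ only; the cases $i=1,2$ follow verbatim after the cyclic relabeling of colors, under which every definition involved is invariant. I use throughout that $\Pi_3$ and $(\TT_1,\TT_2,\TT_3)$ are compatible, so the color-$1$, color-$2$, and color-$3$ out-edge of each vertex $v$ points to its leftmost predecessor, its rightmost predecessor, and its highest-level successor, respectively. First, the two ``condition~(i)'' requirements coincide: by Condition~(1) of an ordered path partition, $V_1$ consists of the vertices on the clockwise boundary path between the two base-pair vertices of $\Pi_3$, so $|V_1|=2$ exactly when that path is a single edge, i.e.\ exactly when $(v_2,v_1)$ is an edge of the outer cycle, which is condition~(i) of an elementary Schnyder wood for color~$3$. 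It thus remains to show, assuming this, that $\Pi_3$ is a canonical order (equivalently, satisfies the stronger Condition~(3')) if and only if the canonical property for color~$3$ holds in $(\TT_1,\TT_2,\TT_3)$.

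The structural bridge I would establish first is that an edge $uv$ is bi-colored in colors~$1$ and~$2$ if and only if $u$ and $v$ lie in the same level of $\Pi_3$: if $uv$ is bi-colored $1/2$ then one of its endpoints is the leftmost predecessor of the other while the other is its rightmost predecessor, which forces the two levels to be equal; conversely, two vertices of a common level $V_k$ that are consecutive on $V_k$ and adjacent in $G$ must, by compatibility and the prescribed cyclic edge order around a vertex (Schnyder axiom~(3)), be joined by such a bi-colored edge. Hence every maximal bi-colored-$1/2$ path lies inside a single level set $V_k$, and these maximal paths are precisely the ``horizontal'' pieces that a canonical order would add level by level.

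Given this, the forward direction is an unwinding of Condition~(3'). When $|V_k|\ge 2$, Condition~(3') presents $V_k$ as a left-to-right path $v_a,\dots,v_b$ on $C_k$ whose interior vertices have no neighbor on $C_{k-1}$, hence no neighbor of level $<k$ at all; combined with the equal-level fact and the clause that every vertex of $V_k$ has a neighbor in $G-G_k$ (which puts the $\TT_3$-parent of each vertex of $V_k$ at a strictly larger level), this gives that the interior vertices have no $\TT_3$-child, while each endpoint $v_a$, $v_b$ has exactly one neighbor on $C_{k-1}$, namely its parent in $\TT_1$ resp.\ $\TT_2$, which is then the only candidate for a $\TT_3$-child. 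Singleton levels are vacuous. This is exactly the canonical property for color~$3$.

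For the converse I would argue by contraposition. Since (as in the proof of Lemma~\ref{lem:schny-can}) $\Pi_3$ is a topological order of the augmented digraph $H_3$ grouped by maximal bi-colored-$1/2$ paths, its level sets $V_k$ are exactly those maximal paths; so if Condition~(3') fails at some $V_k$, then either some vertex of $V_k$ has no neighbor in $G-G_k$ --- whence its $\TT_3$-parent lies at level $k$ and there is a $\TT_3$-child \emph{inside} the bi-colored-$1/2$ path $V_k$ --- or $V_k$ fails to occupy a contiguous arc of $C_k$ delimited by the prescribed single neighbors on $C_{k-1}$, so an interior vertex of $V_k$ has a neighbor of smaller level (a forbidden $\TT_3$-child) or an endpoint has an extra lower-level neighbor; either way the canonical property for color~$3$ is violated. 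Together with the equivalence of the two ``(i)'' conditions this yields the lemma. The step I expect to be the main obstacle is exactly this converse: an ordered path partition gives far weaker a-priori control than a canonical order on how a level $V_k$ sits on the current outer cycle $C_k$, so the delicate point is to convert the local Schnyder-wood statement ``$V_k$ obeys the canonical property'' into the global geometric claim ``$V_k$ is a contiguous, upward-visible arc of $C_k$'', for which I would lean on the internal $3$-connectivity of each $G_k$ (Condition~(2)) together with the rotation-system constraints of Schnyder axiom~(3).
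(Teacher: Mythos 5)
Your proof is correct and follows essentially the same route as the paper's (much terser) argument: the paper's entire proof consists of the observation that the maximal $\{i-1,i+1\}$-bi-colored paths are in one-to-one correspondence with the equal-level paths of $\Pi_i$, which is exactly your ``structural bridge,'' after which both directions are definition-unwinding as you describe. Your write-up simply makes explicit the details (the equivalence of the two condition-(i) clauses and the contrapositive for the converse) that the paper leaves to the reader.
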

\begin{proof} The lemma follows from the definitions of elementary canonical order and elementary
 Schnyder wood and from the observation that the maximal paths in $(\TT_1, \TT_2, \TT_3)$
 bi-colored with colors $i-1$ and $i+1$ are in one-to-one correspondence with the paths of $G$ with
 the same level in $\Pi_i$, $i=1,2,3$.

\end{proof}

We end this section with the following lemma.

\begin{lemma}
\label{lem:dual} If a Schnyder wood $(\TT_1, \TT_2, \TT_3)$ for a $3$-connected plane graph $G$ is
 elementary for color $i$, $i=1,2,3$, then its dual Schnyder wood is also elementary for color $i$
 in the dual graph of $G$.
\end{lemma}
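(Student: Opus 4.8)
Throughout the sketch I work with the dual Schnyder wood $(\TT_1^*,\TT_2^*,\TT_3^*)$ of $G^*$ and simply verify the two defining conditions of being elementary for color $i$; since the three colors play symmetric roles it suffices to treat $i=3$. Write $a_1,a_2,a_3$ for the three outer vertices of $G^*$ obtained by splitting the outer face of $G$, chosen so that the color-$j$ half-edge at $v_j$ becomes the edge of the outer triangle $a_1a_2a_3$ opposite $a_j$. Then $(a_1,a_2)$ is an edge of that triangle, which is the outer cycle of $G^*$, so condition~(i) for $(\TT_1^*,\TT_2^*,\TT_3^*)$ holds; the hypothesis that $(v_1,v_2)$ lies on the outer cycle of $G$ is what makes the incidences at $a_1,a_2,a_3$ match the generic local picture used below. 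It remains to verify the canonical property for color~$3$ in $G^*$: along every maximal path $Q=(f_1,\dots,f_m)$ of $G^*$ whose edges are all bi-colored in colors $1$ and $2$, the interior faces $f_2,\dots,f_{m-1}$ should have no child in $\TT_3^*$, and $f_1,f_m$ at most the prescribed one.

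The first step is to move $Q$ to the primal side using the exact primal--dual edge rule behind the construction of $(\TT_1^*,\TT_2^*,\TT_3^*)$: $e$ is uni-directional in color $c$ iff $e^*$ is bi-directional in colors $c-1,c+1$, and $e$ is bi-directional in colors $c-1,c+1$ iff $e^*$ is uni-directional in color $c$, the two orientations matching in the fixed rotational way. Hence a bi-colored-$(1,2)$ edge of $G^*$ is dual to a uni-directional color-$3$ edge of $G$, and $Q$ corresponds to a maximal run of faces of $G$ in which consecutive faces share a uni-directional color-$3$ edge. By the six-block description of the boundary of a face of $G$ in a Schnyder wood (Fig.~\ref{fig:face-schny}(a)), a face can carry a uni-directional color-$3$ boundary edge only at two of its three single-edge positions, so an interior face of the run has exactly two such edges (one at each of those positions) and an end face exactly one. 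Dualizing again and bookkeeping the six blocks, the edges colored $3$ incident to $f^*$ are the duals of the single edge at the first position of $f$, of the edges of the bi-colored-$(1,2)$ block of $f$, and, for an end face, possibly of one further single edge; after discarding the unique outgoing color-$3$ edge at $f^*$, the children of $f^*$ in $\TT_3^*$ are in bijection with the bi-colored-$(1,2)$ block of $f$ (plus at most one for an end face). Thus the canonical property for color~$3$ in $G^*$ follows once we show that every face of $G$ on the run has an empty bi-colored-$(1,2)$ block.

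For this I would invoke the canonical property for color~$3$ of $(\TT_1,\TT_2,\TT_3)$. Suppose some face $f$ on the run had a nonempty bi-colored-$(1,2)$ block, with vertices $w_0,w_1,\dots,w_r$ in boundary order. Tracing orientations around $f$, the two special color-$3$ edges bounding the block are incoming at $w_0$ and at $w_r$, so $w_0$ and $w_r$ each have a child in $\TT_3$. Then neither $w_0$ nor $w_r$ can be an interior vertex of the maximal primal bi-colored-$(1,2)$ path it lies on, since interior vertices of such a path have no $\TT_3$-child; hence that maximal path is exactly $(w_0,\dots,w_r)$. But at the endpoint $w_0$ the $\TT_3$-child lies along the special boundary edge, which is neither the color-$1$ nor the color-$2$ out-edge of $w_0$ (those are the first block edge and an edge entering the interior of $f$), so it is not the prescribed parent; likewise at $w_r$. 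This contradicts the endpoint clause of the canonical property for color~$3$ of $(\TT_1,\TT_2,\TT_3)$. Therefore every such block is empty, which gives precisely the canonical property for color~$3$ in $G^*$; the cases $i=1,2$ are symmetric, and with Lemma~\ref{lem:schny-can} this is all that is needed.

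The main obstacle is the orientation bookkeeping of the last two paragraphs: pinning down, block by block in the six-block face description and consistently with the fixed rotational rule of the primal--dual correspondence, exactly which boundary edges of $f$ produce color-$3$ edges at $f^*$ and which of those is outgoing, and then carefully dispatching the end faces of $Q$ and the faces incident to the outer face of $G$ — the only place where the half-edges, hence condition~(i) of $(\TT_1,\TT_2,\TT_3)$, genuinely enter. Everything else is an unwinding of the definitions of an elementary Schnyder wood and of the dual Schnyder wood.
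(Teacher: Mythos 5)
Your proposal is correct and takes essentially the same route as the paper: condition (i) is immediate because the dual's outer cycle is a triangle, and the canonical property is transferred by showing, via the primal--dual edge correspondence and the six-block facial structure, that a $\TT_3^*$-child at an interior vertex of a dual bi-colored $(1,2)$ path forces a primal maximal bi-colored $(1,2)$ path whose endpoint violates the primal endpoint condition. The paper compresses exactly this cross-implication into one sentence and a figure, so your version is simply a more explicit unwinding of the same argument (with the end-face orientation bookkeeping you flag being the part the paper likewise leaves to the figure).
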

\begin{proof} We prove this lemma only for color $3$, since the proof for color $1$ or $2$ is similar.
By definition the outer cycle for the dual graph of $G$ for constructing the dual Schnyder
 wood is a triangle; hence the base-pair for the dual Schnyder wood are adjacent. Hence for the dual
 Schnyder wood to be elementary it is sufficient that for each $P=\{u_1, u_2, \ldots u_t\}$
 with each edge $(u_j, u_{j+1})$ bi-colored in $1$ and $2$ for $1\le j<t$, (i) there is no child of the vertices
 $u_2, \ldots, u_{t-1}$ in $\TT_3$, and (ii) the only child of $u_1$ and $u_t$ in $\TT_3$ (if any) are
 respectively the parent of $u_1$ in $\TT_1$ and the parent of $u_t$ in $\TT_2$. However, the way
 we define the color and orientation of the edges in the dual, failure to satisfy Condition (i) for some
 path in the dual Schnyder wood implies that Condition (ii) does not hold for some path in the primal
 Schnyder wood and vice versa; see Fig.~\ref{fig:elementary-dual}. Therefore if the primal Schnyder
 wood is elementary in color 3, so is the dual Schnyder wood.
\end{proof}

\begin{figure}[htbp]
\centering
\includegraphics[width=0.37\textwidth]{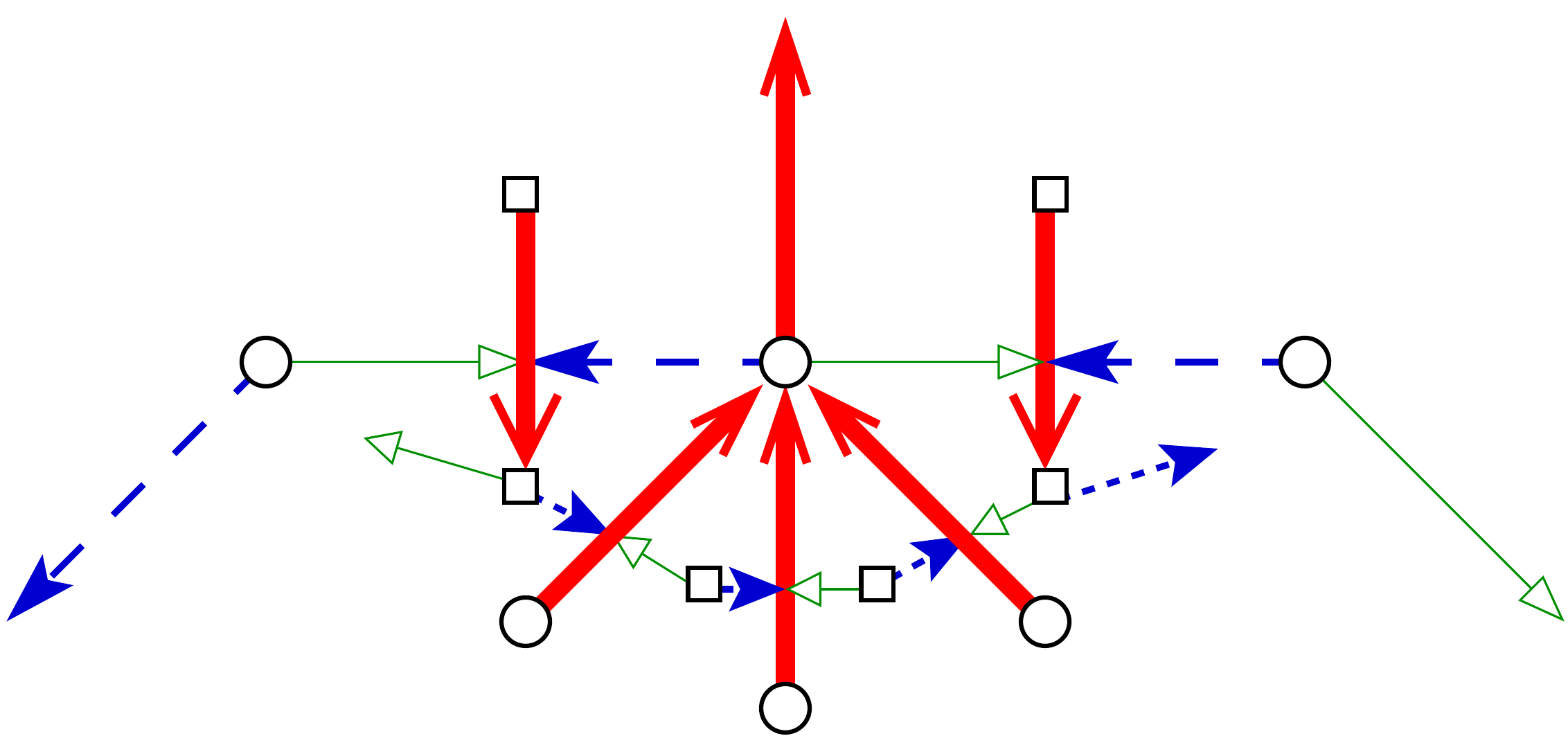}
\caption{Illustration for the proof of Lemma~\ref{lem:dual}.}
\label{fig:elementary-dual}
\end{figure}

\subsection{Orthogonal Surfaces}
\label{sec:ortho}

Here we briefly review the notion of orthogonal surfaces, which we use in the proof for
 Theorem~\ref{thm:box}; see~\cite{FZ08} for more details.
A point $p$ in $\R^3$ \df{dominates} another point $q$ if the coordinate of $p$ is greater than or equal to
 $q$ in each dimension; $p$ and $q$ are \df{incomparable} if neither of $p$ nor $q$ dominates the other.
 Given a set $M$ of incomparable points, an \df{orthogonal surface} defined by $M$ is the geometric
 boundary of the set of points that dominate at least one point of $M$.
 For two points $p$ and $q$, their \df{join}, $p\lor q$ is obtained by taking the maximum coordinate of
 $p$, $q$ in each dimension separately. The \df{minimums} (\df{maximums})
 of an orthogonal surface $S$ are the points of $S$ that dominate (are dominated by) no other
 point of $S$. An orthogonal surface $S$ is \df{rigid} if for each pair of points $p$ and $q$ of $M$ such that
 $p\lor q$ is on $S$, $p\lor q$ does not dominate any point other than $p$ and $q$.
 An orthogonal surface is \df{axial} if it has exactly three unbounded orthogonal arcs.
 Rigid axial orthogonal surfaces are known to be in one-to-one correspondence with Schnyder woods
 of 3-connected plane graphs~\cite{FZ08} and the rigid axial orthogonal surfaces $S$ and $S^*$
 corresponding to a Schnyder wood and its dual coincide with each other, where the maximums of
 $S$ are the minimums of $S^*$ and vice versa.

\section{Primal-Dual Representations of 3-Connected Planar Graphs}
\label{sec:pri-du}

Here we prove Theorem~\ref{thm:box}.

\begin{backInTime}{thm-box}
 \begin{theorem}
  Every $3$-connected planar graph $G=(V,E)$ admits a proper primal-dual box-contact representation in 3D and it can be computed in $\Oh(|V|)$ time.
 \end{theorem}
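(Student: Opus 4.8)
The plan is to read the representation off a Schnyder wood of $G$ and its rigid axial orthogonal surface, using the fact from Section~\ref{sec:ortho} that the surfaces $S$ and $S^{*}$ of a Schnyder wood and its dual coincide, with the maximums of $S$ equal to the minimums of $S^{*}$. Fix a plane embedding of $G$ and an outer triple $v_1,v_2,v_3$ in counterclockwise order, take any Schnyder wood $(\TT_1,\TT_2,\TT_3)$ of $G$ (it exists by~\cite{FZ08}), and let $S\subseteq\R^3$ be the corresponding rigid axial orthogonal surface. Its minimums correspond to the vertices of $G$, its bounded maximums to the internal faces, and its three unbounded orthogonal arcs to the outer face. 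The flats (maximal $2$-dimensional pieces) of the staircase $S$ encode the vertex--face incidences: every flat is bounded from the ``lower'' side by a piece at the minimum $p_v$ of some vertex $v$ and from the ``upper'' side by a piece at the maximum $q_f$ of some internal face $f$ with $v$ incident to $f$; two flats are $S$-adjacent along their lower vertices exactly along the edges of $G$, and along their upper faces exactly along the edges of $G^{*}$.

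Next I place all boxes in a thin slab around $S$. On the lower side of $S$ (the complement of the up-set of points dominating some minimum) I give each vertex $v$ a small box $B_v$ with a corner at $p_v$ whose upper staircase coincides with the flats at $p_v$; symmetrically, on the upper side (inside the up-set, below the maximums) I give each internal face $f$ a box $B_f$ with a corner at $q_f$ meeting exactly the flats at $q_f$. Then the $B_v$ are pairwise interior-disjoint, so are the $B_f$, and every $B_v$ is separated from every $B_f$ by $S$, so the contacts reduce to combinatorial bookkeeping: $B_u\cap B_v$ has positive area iff $uv\in E(G)$, $B_f\cap B_g$ has positive area iff $f,g$ are adjacent faces, and $B_v\cap B_f$ has positive area iff $v$ is incident to $f$ (the contact being their common flat of $S$). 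Finally the outer face receives a box $B_{f_0}$ that is a large axis-aligned shell enclosing all the other boxes and cut open along the three unbounded prongs of $S$, so that it touches precisely $v_1,v_2,v_3$ and the internal faces bounding $f_0$ and contains the internal-face boxes, as the definition of a primal--dual representation demands.

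The step I expect to be the main obstacle is \emph{properness}. Taken literally, ``a box under each flat'' produces contacts that are only $1$-dimensional along the arcs of $S$, or even $0$-dimensional at its elbows --- exactly the defect of the earlier primal--dual~\cite{And70,GLP12} and box~\cite{Tho88,FF11} constructions. To repair it I would first replace $S$ by a combinatorially equivalent \emph{fattened} staircase in which the two flats of every prescribed contact overlap in a rectangle of positive area. This can be done by sweeping the minimums of $S$ in a linear extension of the dominance order induced by the Schnyder wood (a topological order of the acyclic structure of Lemma~\ref{lem:schny-can}) and, as each flat appears, granting it a small positive overlap with each flat it must meet while pushing the not-yet-placed flats outward; a routine staircase/peeling argument shows no spurious overlap arises, and because the fattening is performed on $S=S^{*}$ it fixes the primal, the dual and the primal--dual contacts at once. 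The outcome can be realized on the $n\times n\times n$ integer grid and computed in $\Oh(|V|)$ time, since the Schnyder wood, its coordinates, the surface, and the sweep are each linear-time.

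An alternative proof avoids orthogonal surfaces: compute an elementary canonical order of $G$ together with its compatible elementary Schnyder wood (Lemma~\ref{lem:elementary}), build the primal box representation incrementally by stacking, for $k=1,\dots,L$, the path $V_k$ as a layer of boxes onto the current skyline --- as in the incremental box-contact algorithms for maximal planar graphs~\cite{BEFHK+12,Tho88}, generalized from single vertices to paths --- and in parallel build the box representation of the dual from the dual elementary Schnyder wood, which stays elementary by Lemma~\ref{lem:dual}, interleaving the two so that each face box sits in a pocket of the primal skyline and the outer-face box is the shell. Either way the crux is the same: certifying that exactly the prescribed contacts occur and that each of them is two-dimensional.
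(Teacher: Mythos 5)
Your overall strategy coincides with the paper's: both of your routes (Schnyder wood plus rigid axial orthogonal surface, and the alternative via elementary canonical orders generalizing the incremental algorithm of~\cite{BEFHK+12}) are exactly the two constructions the paper gives, and you correctly identify properness as the main obstacle. The problem is that your first sketch --- place a small box at each minimum and maximum of $S$ and then ``fatten'' the staircase by a sweep so that prescribed contacts acquire positive area --- is precisely the construction that the paper discusses at the end of Section~\ref{sec:alternateproof} and explicitly declines to carry out, remarking that ``how to extend the corners and why such a construction yields a proper box-contact representation seems to require the same kinds of arguments that we provided in this section'' and posing a clean topological proof as an open problem. The sentence ``a routine staircase/peeling argument shows no spurious overlap arises'' is where the entire proof content lives, and it is not routine: interior-disjointness of the vertex boxes is proved in the paper (Lemma~\ref{lem:box-schny}) by showing that any two vertices are joined by a directed path in some $\TT_i\cup\TT_j^{-1}$, using the three vertex-disjoint Schnyder paths, and this is what forces the coordinate intervals apart. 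Moreover, your fattening step ``pushes the not-yet-placed flats outward,'' which can cascade; as described it neither obviously preserves the $n\times n\times n$ grid nor obviously terminates with only the prescribed overlaps.

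What your proposal is missing, concretely, is the explicit coordinate assignment that replaces the fattening: the paper sets $x_M(u),y_M(u),z_M(u)$ to be the levels of $u$ in \emph{three} ordered path partitions compatible with the Schnyder wood (one per color), and $x_m(u),y_m(u),z_m(u)$ to be the levels of the parents of $u$ in $\TT_1,\TT_2,\TT_3$. This is where the corrected correspondence of Lemma~\ref{lem:schny-can} is genuinely needed --- not merely to order a sweep, but because the augmentation of $D_i$ with edges from children in $\TT_{i\pm1}$ to the parent in $\TT_i$ is exactly what guarantees $x_M(u)<x_m(v)$ for a color-$3$ edge $(u,v)$, i.e.\ that the contact rectangle is nondegenerate in the $x$-direction. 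With these coordinates, uni-colored edges are automatically proper, and the only remaining defect is the line-segment contact of bi-colored edges, which is repaired by extending one box by $\varepsilon$ in a direction whose facet provably touches nothing else (again by Lemma~\ref{lem:box-schny}) --- a much more local and verifiable fix than a global fattening of $S$. Your description of the dual fitting into $B-\Gamma$ and of the shell for the outer face is consistent with the paper. So the architecture is right, but the step you flag as the ``main obstacle'' is deferred rather than solved, and solving it is the substance of the theorem.
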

\end{backInTime}

Specifically, we describe two different linear-time algorithms that compute
 a box-contact representation for the primal graph and the dual graph separately
 and then fits them together to obtain the desired result.
In the first algorithm we compute the coordinates of the boxes based on a Schnyder wood,
 in the second, we compute them based on an elementary canonical order. Both these algorithms
 guarantee that the boundary for the primal (dual) representation induces an \df{orthogonal surface}
 compatible with the dual (primal) Schnyder wood.
Since the orthogonal surfaces for a Schnyder wood and its dual coincide topologically, we can
fit together the primal and the dual box-contact representation to obtain a desired representation.


To avoid confusion, we denote a connected region in a plane embedding of a graph
 by a \df{face}, and a side of a 3D shape by a \df{facet}. For a 3D box $R$,
 call the facet with highest (lowest) $x$-coordinate as the $x^+$-facet ($x^-$-facet) of $R$.
The $y^+$-facet, $y^-$-facet, $z^+$-facet and $z^-$-facets of $R$ are defined similarly. For
 convenience, we sometimes denote the $x^+$-, $x^-$, $y^+$-, $y^-$, $z^+$- and $z^-$-facets
 of $R$ as the \df{right}, \df{left}, \df{front}, \df{back}, \df{top} and \df{bottom} facets of $R$, respectively.


\begin{figure}[tb]
\centering
\parbox[t][4.5cm][t]{0.07\textwidth}
{
	\includegraphics[width=0.28\textwidth]{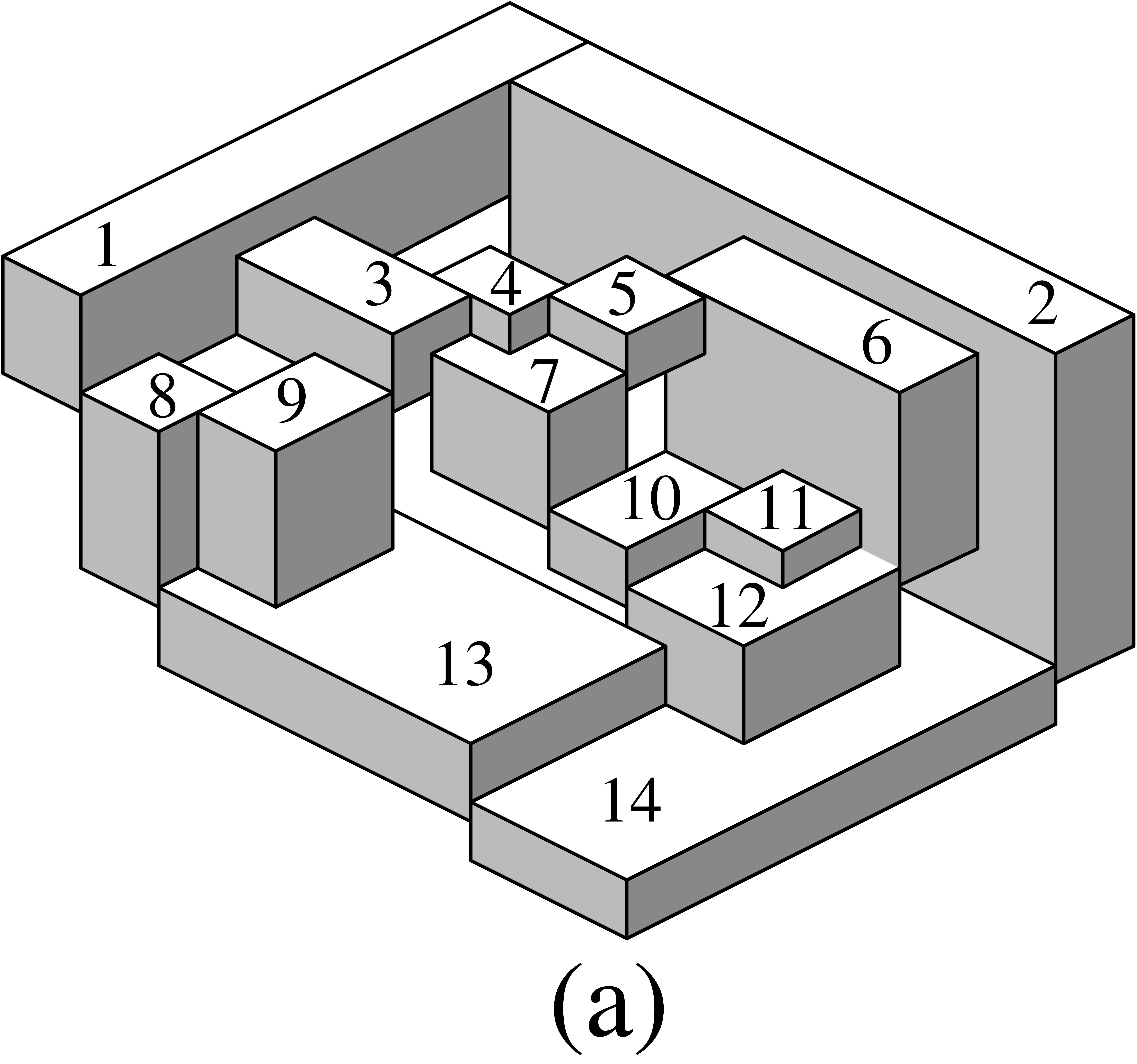}
}
\parbox[t][4.5cm][b]{0.62\textwidth}
{
	\includegraphics[width=0.85\textwidth]{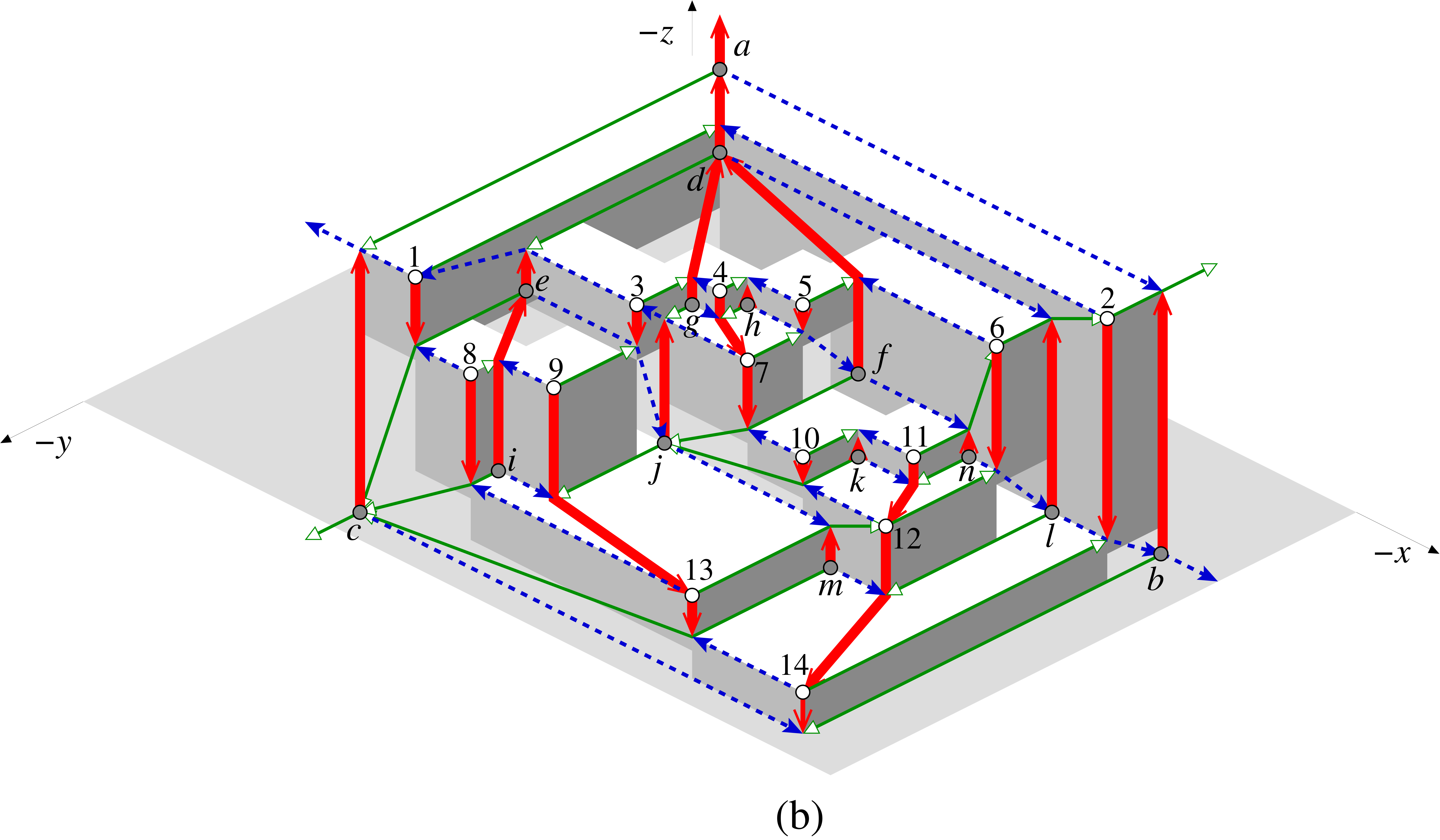}
}
\parbox[t][4.5cm][t]{0.28\textwidth}
{
	\includegraphics[width=0.28\textwidth]{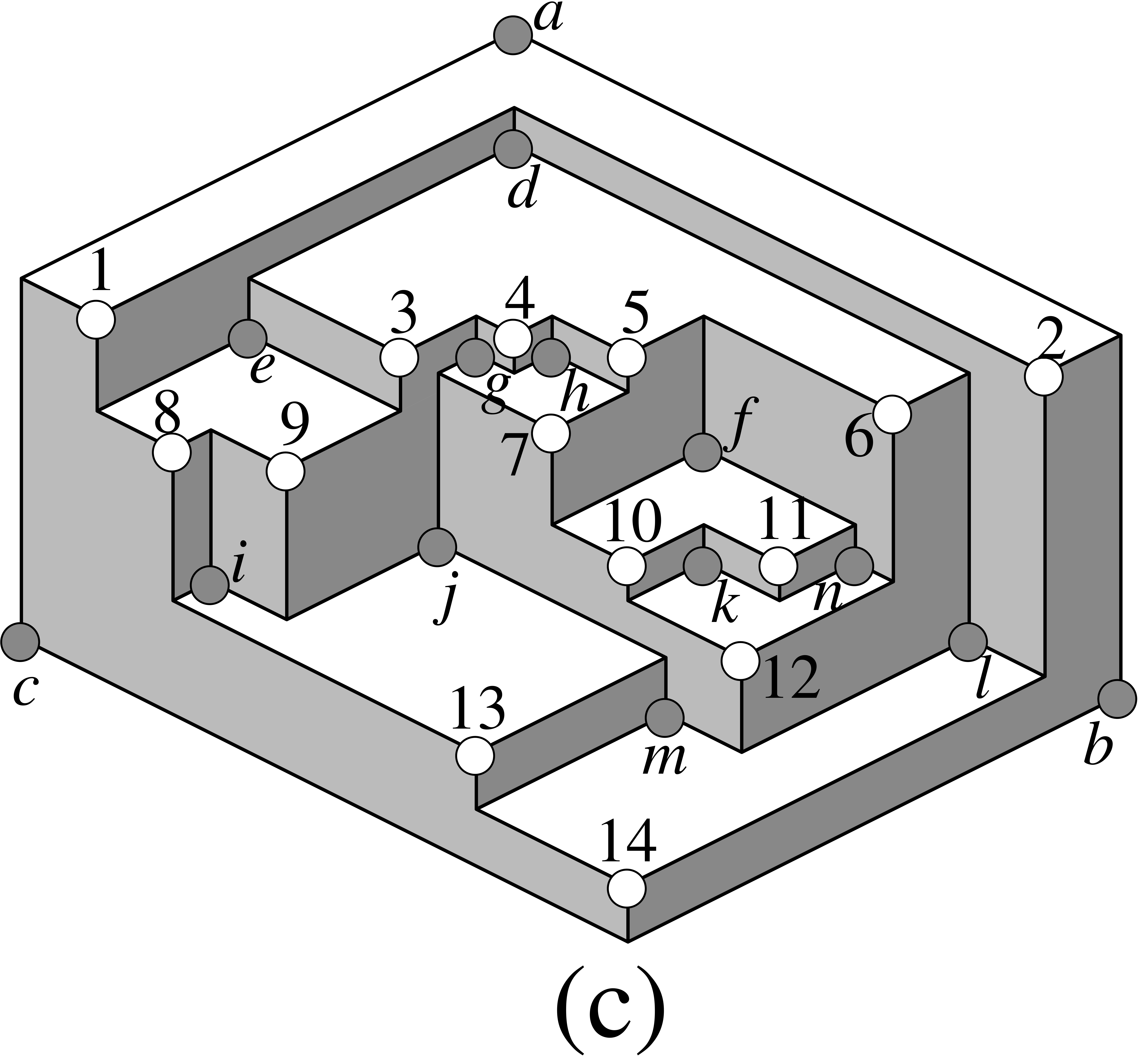}
}
    \caption{Box-contact representation (a) for the graph in Fig.~\ref{fig:can-sch}
	with its primal-dual Schnyder wood (b) and the associated orthogonal surface (c).
	The thick solid red, dotted blue and thin solid green edges represent the three trees
	in the Schnyder wood.}
    \label{fig:draw}
\end{figure}

\subsection{Construction Based on Schnyder Woods}
\label{sec:mainproof}

We first construct a contact representation $\Gamma$ of the primal graph $G$ with boxes in 3D.
 Let $v_1$, $v_2$ and $v_3$ be three vertices on the outer face of $G$ in counterclockwise order.
 We compute a Schnyder wood $(\TT_1, \TT_2, \TT_3)$ such that for $i=1,2,3$, $\TT_i$ is rooted
 at $v_i$. By Lemma~\ref{lem:schny-can}, for $i=1,2,3$, one can compute a compatible ordered path
 partition with the base-pair $(v_{i-1}, v_{i+1})$ and head $v_i$, which is consistent with the partial order
 defined by $(\TT_{i-1}^{-1}\cup \TT_{i+1}^{-1}\cup \TT_{i})$. Denote by
 $<_X$, $<_Y$ and $<_Z$ the three ordered path partitions compatible with $(\TT_1, \TT_2, \TT_3)$, that
 are consistent with $D_1$, $D_2$, and $D_3$, respectively. We use these three ordered path partitions
 to define our box-contact representation for $G$.

 For a vertex $u$, let $x_M (u)$, $y_M (u)$, and $z_M (u)$ be the levels of $u$ in the ordered path
 partitions $<_X$, $ <_Y$, and $<_Z$, respectively. Define $x_m(u)=x_M(b)$, $y_m(u)=y_M(g)$ and
 $z_m(u)=z_M(r)$, where $b$, $g$ and $r$ are the parents of $u$ in $\TT_1$, $\TT_2$ and $\TT_3$,
 respectively, whenever these parents are defined. For each of the three special vertices $v_i$,
 $i=1,2,3$, the parent is not defined in $\TT_i$. We assign $x_m(v_1)=0$, $y_m(v_2)=0$ and
 $z_m(v_3)=0$.

Now for each vertex $u$ of $G$, define a box $R(u)$ representing $u$ as the region
 $[x_M(u), x_m(u)]\times[y_M(u), y_m(u)]\times[z_M(u), z_m(u)]$.
Denote by $R$ the set of all boxes $R(u)$ for the vertices $u$ of $G$. We show in
Lemma~\ref{lem:box-schny} that $R$ yields a box-contact representation for $G$.
Furthermore, for each edge $(u,v)$ of $G$, if $(u,v)$ is uni-colored then $R(u)$ and $R(v)$
make a proper contact. Otherwise, assume \WLOG that $(u,v)$ is bi-colored
with colors $1$ (oriented from $u$ to $v$) and $2$ (oriented from $v$ to $u$).
Then $x_m(u)=x_M(v)$, $y_M(u)=y_m(v)$, $z_M(u)=z_M(v)$;  $R(u)$ and $R(v)$
make a non-proper contact along the line-segment $\{x_m(u)\}\times\{y_m(v)\}\times[z_M(u), z^*]$,
where $z^*=\min\{z_m(u),z_m(v)\}$. However since $v$ is the only parent of $u$ in $\TT_1$
and $u$ is the only parent of $v$ in $\TT_2$, by Lemma~\ref{lem:box-schny},
the $x^+$-facet of $R(u)$ and the $y^+$-facet of $R(v)$ do not make a proper contact with any box.
Hence either extending $R(u)$ in the positive $x$-direction or extending $R(v)$ in positive $y$-direction
by some small amount $0<\epsilon<1$ makes the contact between $R(u)$ and $R(v)$ proper
without creating any overlap or unnecessary contacts between the boxes.
 We thus obtain a proper box-contact representation $\Gamma$ for the primal graph $G$.

Now we describe how to construct the box-contact representation for the dual of $G$.
Consider the orthogonal surface induced by $\Gamma$. Each vertex $v$ of $G$ corresponds
 to the $x^-y^-z^-$-corner $p$ of the box for $v$. The three outgoing edges of $v$ in the Schnyder
 wood can be drawn on the surface from $p$ in the directions $x^+$, $y^+$, $z^+$; see Fig.~\ref{fig:draw}.
Each face of $G$ corresponds to a reflex corner of the orthogonal surface and there is a similar
 (opposite) direction for the outgoing edges in the dual Schnyder wood. The topology of this
 (rigid axial) orthogonal
 surface is uniquely defined by both the Schnyder wood and its dual~\cite{FZ08}.
Thus, we construct the contact representation $\Gamma'$ for the dual of $G$, then (after possible scaling)
 the boundary of $\Gamma'$ exactly matches $B-\Gamma$, where $B$ is the bounding box of $\Gamma$.
We fit $\Gamma$ and $\Gamma'$ together by replacing the three boxes for the three outer vertices of $G$ with a single shell-box, which forms the boundary of the entire representation.

It is easy to see that the above algorithm runs in $\Oh(|V|)$ time. A Schnyder wood for $G$ and the dual Schnyder wood
 for the dual of $G$ can be computed in linear time~\cite{FZ08}. For both the primal graph and the dual
 graph, the three ordered path partitions can then be computed in linear time from these Schnyder woods,
 due to Lemma~\ref{lem:schny-can}. The coordinates of the boxes are then directly assigned in constant
 time per vertex. Finally the primal and the dual representation can be combined together by reflecting
 (and possibly scaling) the dual representation, which can also be done in linear time.

Lemma~\ref{lem:box-schny} shows the correctness
 for the above construction; specifically,
 it shows that the boxes computed in the construction induce a contact representation for $G$.

\begin{lemma}
\label{lem:box-schny}
The set of all boxes $R(u)$ for the vertices $u$ of $G$ induces a contact representation of $G$, where
 for each vertex $u$, the $x^+$-, $y^+$-, $z^+$-facets of $R(u)$ touch the boxes for the parents of
 $u$ in $\TT_1$, $\TT_2$, $\TT_3$, respectively, and the $x^-$-, $y^-$-, $z^-$-facets of $R(u)$ touch
 the boxes for the children of $u$ in $\TT_1$, $\TT_2$, $\TT_3$, respectively. Furthermore for each
 edge $(u,v)$ of $G$, the contact between $R(u)$ and $R(v)$ is proper if and only if $(u,v)$ is uni-colored.
\end{lemma}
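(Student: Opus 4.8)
The plan is to verify, vertex by vertex and edge by edge, that the boxes $R(u)=[x_M(u),x_m(u)]\times[y_M(u),y_m(u)]\times[z_M(u),z_m(u)]$ are non-degenerate, pairwise interior-disjoint, and realize exactly the edges of $G$ with the stated facet incidences. First I would check that each $R(u)$ is a genuine box, i.e.\ $x_m(u)<x_M(u)$ and likewise in $y$ and $z$. This follows because, by Lemma~\ref{lem:schny-can}, $<_X$ is compatible with $(\TT_1,\TT_2,\TT_3)$ and consistent with the partial order defined by $(\TT_2^{-1}\cup\TT_3^{-1}\cup\TT_1)$: the parent $b$ of $u$ in $\TT_1$ is a successor of $u$ of strictly larger level in $<_X$ (it is the highest-level successor in the $x$-direction), so $x_m(u)=x_M(b)>x_M(u)$; the boundary vertex $v_1$ gets $x_m(v_1)=0$, which is smaller than every level. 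The symmetric statements handle $y$ and $z$.

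Next I would establish the edge incidences. Fix an edge $(u,v)$ and use the Schnyder-wood classification. If $(u,v)$ is uni-colored, say colored $1$ and oriented $u\to v$, then $v$ is the parent of $u$ in $\TT_1$, so $x_m(u)=x_M(v)$; compatibility of the ordered path partitions then forces the $y$- and $z$-intervals of $R(u)$ and $R(v)$ to overlap in a $2$-dimensional rectangle (because $u$ and $v$ lie on a common face and their relative positions in $<_Y,<_Z$ are controlled by the Schnyder wood — this is exactly the planar-coordinate argument of \cite{BEFHK+12,FZ08} carried through the three ordered path partitions). Hence the $x^+$-facet of $R(u)$ meets the $x^-$-facet of $R(v)$ in positive area: a proper contact. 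If $(u,v)$ is bi-colored, say color $1$ from $u$ to $v$ and color $2$ from $v$ to $u$, then simultaneously $x_m(u)=x_M(v)$ and $y_m(v)=y_M(u)$ and $z_M(u)=z_M(v)$, so the two boxes share only the line segment $\{x_m(u)\}\times\{y_m(v)\}\times[z_M(u),z^*]$ with $z^*=\min\{z_m(u),z_m(v)\}$ — a non-proper contact, as asserted. Conversely, if $uv\notin E(G)$ I would argue the boxes are interior-disjoint: any two boxes that shared an interior point would correspond to a crossing or nesting forbidden by the three compatible ordered path partitions (equivalently, by the acyclicity of $D_1,D_2,D_3$ and the face structure of the Schnyder wood).

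The heart of the argument — and the step I expect to be the main obstacle — is showing \emph{interior-disjointness of all boxes and the absence of spurious contacts}: that no two boxes overlap in their interiors and that $R(u)$ touches $R(v)$ only along the facets dictated by the Schnyder wood. The clean way to do this is via the orthogonal surface. By \cite{FZ08}, the set $M=\{(x_M(u),y_M(u),z_M(u)):u\in V\}$ — the $x^-y^-z^-$-corners of the boxes — is a set of incomparable points whose rigid axial orthogonal surface $S$ corresponds to $(\TT_1,\TT_2,\TT_3)$; the three compatible ordered path partitions are precisely the data needed to recover the combinatorics of $S$. I would show that $R(u)$ is exactly the ``staircase region'' between the minimum $(x_M(u),y_M(u),z_M(u))$ and the join with its three Schnyder parents, truncated to a box by the definition of $x_m,y_m,z_m$; the interiors of these regions are disjoint because the corresponding minima are incomparable and the surface is rigid (so joins do not dominate extraneous minima), and two boxes share a $2$-face exactly when the corresponding vertices are joined by a uni-directional Schnyder edge, a $1$-face exactly when joined by a bi-directional edge, and nothing otherwise. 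The remaining facets (the $x^+$-facet of $R(u)$ when $u$'s parent edge in $\TT_1$ is bi-colored, etc.) are then free, which is what licenses the $\epsilon$-perturbation in the main text that makes every contact proper. Finally, the ``furthermore'' clause is immediate from the edge-by-edge analysis above.
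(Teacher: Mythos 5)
Your decomposition into (a) edge contacts in the prescribed facets and (b) interior-disjointness of all boxes matches the paper's proof, and your edge-by-edge analysis (uni-colored edges give a proper facet contact via $x_m(u)=x_M(v)$ plus overlap of the other two intervals; bi-colored edges give only a segment contact via $x_m(u)=x_M(v)$, $y_M(u)=y_m(v)$, $z_M(u)=z_M(v)$) is essentially the paper's claim~(i). You do, however, wave at ``the planar-coordinate argument'' where the paper pins down the one nontrivial inequality---$x_M(u)<x_m(v)=x_M(b)$ for $b$ the $\TT_1$-parent of $v$ when $(u,v)$ is uni-colored in $\TT_3$---as a consequence of the extra edges $(u,b)$ added to $D_1$ in Lemma~\ref{lem:schny-can}; this is precisely where the corrected construction of the ordered path partitions is needed, and your appeal to ``compatibility'' only gives the non-strict inequality.

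The genuine gap is in the step you yourself flag as the heart of the matter: interior-disjointness. You propose to derive it from the rigid axial orthogonal surface: the lower corners are incomparable, the surface is rigid, hence the boxes are disjoint. Incomparability of the minima does not give disjointness of the boxes, because each box extends well beyond its minimum (the upper corner of $R(u)$ mixes one coordinate from each of three different parents and is not the join of two minima on the surface), and two boxes with incomparable lower corners can certainly overlap. What is actually needed is that for every pair $u,v$ some coordinate interval of $R(u)$ lies entirely at or below the corresponding interval of $R(v)$. The paper proves this directly: using the three vertex-disjoint Schnyder paths $P_1(u),P_2(u),P_3(u)$, it locates $v$ in one of the three regions, splices $P_j(v)$ with a reversed piece of $P_i(u)$ to obtain a directed path from $u$ to $v$ in some $\TT_i\cup\TT_j^{-1}$, and then reads off $x_m(u)\le x_M(v)$ (or the analogue in $y$ or $z$) from the consistency of the corresponding ordered path partition with that partial order. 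Note also that the authors state at the end of Section~\ref{sec:pri-du} that a proof of exactly this disjointness ``using topological properties of the orthogonal surface'' is an open problem; your sketch is a plan of attack on that open problem rather than a proof, so as written the central step of the lemma is missing.
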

\begin{proof} We prove the lemma by showing the following two claims:
 (i)~for each edge $(u,v)$ of $G$, the two boxes $R(u)$ and $R(v)$ make contact in the specified facets,
 (ii)~for any two vertices $u$ and $v$ of $G$, the two boxes $R(u)$ and $R(v)$ are interior-disjoint.

\begin{enumerate}[(i)]
\item Take an edge $(u,v)$ of $G$. If $(u,v)$ is uni-colored in $G$, assume \WLOG that it has color 3 and is directed from $u$ to $v$.
 By construction, $z_m(u)=z_M(v)$. We now show that $x_M(v)<x_M(u)<x_m(v)$ and
 $y_M(v)<y_M(u)<y_m(v)$, which implies the correct contact between $B(u)$ and $B(v)$.
 Since $(u,v)$ is in $\TT_3$ and the ordered path partitions $<_X$ and $<_Y$ are consistent
 with $\TT_3^{-1}$, $x_M(v)<x_M(u)$ and $y_M(v)<y_M(u)$.
 To show that $x_M(u)<x_m(v)$, consider the parent $b$ of $v$ in $\TT_1$.
 Then $x_m(v)=x_M(b)$. Furthermore, before computing the topological order to find $<_X$, we added
 the directed edge $(u,b)$; thus $x_M(u)<x_M(b)=x_m(v)$. The proof that $y_M(u)<y_m(v)$ is symmetric.

 If $(u,v)$ is bi-colored with colors, \WLOG, $1$ (orientated from $u$ to $v$) and $2$ (from $v$ to $u$),
 then by construction $x_m(u)=x_M(v)$ and $y_M(u)=y_m(v)$. 
 $(u,v)$ is bi-colored in $1$, $2$, so $z_M(u)=z_M(v)$. Thus, $R(u)$, $R(v)$ make
 non-proper contact in the correct facets.

\item Now we show that for any two vertices $u$ and $v$ of $G$, $R(u)$ and $R(v)$ are
 interior-disjoint. By the properties of Schnyder woods, from any vertex $u$ of $G$,
 there are three vertex-disjoint paths $P_1(u)$, $P_2(u)$, $P_3(u)$ where $P_i(u)$ is 
 a directed path from $u$ to $v_i$ of edges colored $i$, $i=1,2,3$.
We first claim that for some $i,j\in\{1,2,3\}$, $i\neq j$, there is a directed path from $u$ to $v$,
 or from $v$ to $u$ in $\TT_{i}\cup \TT_{j}^{-1}$. The claim holds trivially if $u$
 is on the directed path $P_i(v)$, or $v$ is on the directed path $P_i(u)$, for some $i\in\{1,2,3\}$.
 Otherwise, assume that $v$ is in the region between $P_2(u)$, and $P_3(u)$. Then the path $P_1(v)$
 intersects either $P_2(u)$, and $P_3(u)$ at some vertex $w$. Assume \WLOG that
 $w$ is on $P_2(u)$. Then the path $P$ that follows $P_1(v)$ from $v$ to $w$, and then follows
 $P_2^{-1}(u)$ from $w$ to $u$ is directed in $\TT_2^{-1}\cup\TT_1$ (here $P_2^{-1}(u)$ is the path
 $P_2(u)$ with all the directions reversed). Hence, assume that there is a path from $u$ to $v$ in
 $\TT_1\cup\TT_2^{-1}$. By definition, $x_m(u)\le x_M(v)$ and thus $R(u)$ and $R(v)$ are
 interior-disjoint.
\end{enumerate}\vspace{-0.7cm}
\end{proof}

\subsection{Construction Based on Elementary Canonical Orders}
\label{sec:alternateproof}

Here we provide an alternate proof for Theorem~\ref{thm:box} by a different construction algorithm
 than the one in Section~\ref{sec:mainproof}.
This algorithm is based on an elementary canonical order and builds a representation iteratively
 inserting boxes for the vertices in this order. It is similar to the box-contact representation algorithm
 suggested in~\cite{BEFHK+12} for maximal planar graphs. However in addition to generalizing the
 construction for $3$-connected plane graphs, we maintain a stronger invariant on every iteration,
 in order to accommodate the boxes of the dual graph later on.


We first construct a contact representation $\Gamma$ of $G$ with boxes.
Let $v_1$, $v_2$ and $v_3$ be three vertices on the outer face of $G$ in counterclockwise order
 such that $(v_1, v_2)$ is an edge on the outer cycle. Compute an elementary canonical
 order $\Pi=(V_1,\ldots,V_L)$ and the compatible Schnyder wood defined by $\Pi$.
Let $G_k$ be the graph induced by the vertices in $V_1\ldots V_k$. We now add boxes
 representing the vertices in the order defined by $\Pi$.

For step $1$ of our construction, we add two boxes to represent $v_1$ and $v_2$
 such that the $x^-$-facet of the box for $v_1$ touches the $x^+$-facet of the box for $v_2$.
Their $z^-$-facets lie on the plane $z=0$ and their $z^+$-facets on $z=1$.
We maintain the invariant that, at the beginning of step $k$,
the boxes of \df{active vertices}, that
is, the vertices in $C_{k-1}$,
intersect the plane $z=k$ in a \df{staircase shape}, that is, where
the minimum $x,y$ boundary of the intersection is an $x,y$-monotone
axis-aligned polyline where the convex corners are $x^-y^-z^+$-corners
of boxes that represent the active vertices consecutively in the order
that they appear in the outer face of $G_k$.

For step $k\geq 2$,
we add boxes for the vertices in $V_k = \{ v_a, v_{a+1}, \ldots, v_b \}$
with their $z^-$-facets on the plane $z=k-1$ and their $z^+$-facets on
$z=k$.

 If $a=b$, let $w_\ell$ and $w_r$ be the leftmost and rightmost neighbors of $v_a=v_b$ in $C_{k-1}$.
We create the box for $v_a$ so that
its $x^+$-facet touches box $w_\ell$,
its $y^+$-facet touches box $w_r$, and
its $z^-$-facet covers (and touches if $(v_a,w_i) \in E$)) the
 box for $w_i$ ($\ell < i < r$)
(this is possible due to the staircase invariant).
Boxes in this last set are now no longer
active. By the construction of a compatible Schnyder wood, each edge of $v_a$ added
in this step except possibly the edges with $w_\ell$ and $w_r$
are colored $3$ and directed towards $v_a$. The edge $(v_a,w_\ell)$
is colored $1$ and directed towards $w_\ell$, and the edge $(v_a,w_r)$ is
colored $2$ and directed towards $w_r$; note that one or both these edges may
also be colored $3$ and directed towards $v_a$.

If $a < b$, let $w_\ell$ and $w_r$ be the neighbors of $v_a$ and
$v_b$, respectively, on $C_{k-1}$.
We create the boxes for $v_a, v_{a+1}, \dots, v_b$ so that
the $x$-facets or $y$-facets of the boxes for $v_{a+i}$ and
$v_{a+i+1}$ touch.
The $x^+$-facet of box $v_a$ touches box $w_\ell$ and the $y^+$-facet of
box $v_b$ touches box $w_r$.
By the construction of compatible Schnyder wood, for $1\leq i<b-a$,
the edges $(v_{a+i},v_{a+i+1})$ are bi-directional, and the direction from $v_{a+i}$
to $v_{a+i+1}$ is colored $2$, while the other direction is colored $1$. The edges
$(v_a,w_\ell)$ and $(v_b,w_r)$ are directed and colored as in the case where $a=b$.

In both cases:
if $(v_a, w_\ell)$ is a bi-directional edge (with colors $1$ and $3$)
we align the $y^-$-facets of box $v_a$ and box $w_\ell$ (note
$w_\ell$ is no longer active); and
if $(v_b, w_r)$ is a bi-directional edge (with colors $2$ and $3$)
we align the $x^-$-facets of box $v_b$ and $w_r$ (note $w_r$
is no longer active).
We have not yet set the coordinate of the $z^+$-facet of any of the boxes of $V_k$. We simply
extend the boxes of all active vertices in the $+z$ direction, so that the $z^+$-facet is in
the plane $z=k$.

An illustration of the representation obtained by this algorithm is shown in Fig.~\ref{fig:draw}.

Now we describe how to construct the box-contact representation for the dual of $G$.
Consider the orthogonal surface induced by $\Gamma$. Each vertex $v$ of $G$ corresponds
 to the $x^-y^-z^-$-corner $p$ of the box for $v$. The three outgoing edges of $v$ in the Schnyder
 wood can be drawn on the surface from $p$ in the directions $x^+$, $y^+$, $z^+$; see Fig.~\ref{fig:draw}.
Each face of $G$ corresponds to a reflex corner of the orthogonal surface and there is a similar
 (opposite) direction for the outgoing edges in the dual Schnyder wood. The topology of this
 (rigid axial) orthogonal surface is uniquely defined by both the Schnyder wood and its dual~\cite{FZ08}.
Thus, we construct the contact representation $\Gamma'$ for the dual of $G$ using the same algorithm,
 then (after possible scaling) the boundary of $\Gamma'$ exactly matches $B-\Gamma$, where $B$
 is the bounding box of $\Gamma$.
We fit $\Gamma$ and $\Gamma'$ together by replacing the three boxes for the three outer vertices of $G$ with a single shell-box, which forms the boundary of the entire representation.

The above construction algorithm takes $\Oh(|V|)$ time. An elementary canonical order of $G$ can be
 computed in linear time~\cite{Kan96} and one can compute a compatible Schnyder wood using the
 definition in linear time as well.
The $x$, $y$ and $z$-coordinates of the boxes are computed in constant time for a vertex of a primal
 graph. Hence, the primal representation of the graph can be found in linear time.
Again from the Schnyder wood of $G$, one can compute a dual Schnyder wood for the dual graph
 in linear time~\cite{FZ08} and by Lemma~\ref{lem:schny-can} a compatible elementary canonical order
 for the dual can also be computed in linear time and the same algorithm can be used to construct the
 dual representation in linear time.
 Finally the primal and the dual representation can be combined together by reflecting

The construction for a primal-dual box-contact representation
 of a $3$-connected plane graph $G$ in Theorem~\ref{thm:box} has another interpretation. Start with the orthogonal surface $S$ corresponding to both a Schnyder wood of $G$ and its dual. This orthogonal surface $S$ creates two half-spaces on either side of $S$: extend boxes from $S$ in these half-spaces, so that
(i)~the boxes are interior-disjoint and (ii)~they induce box-contact representations for the
 primal and dual graphs. However, how to extend the corners and why such a construction yields a proper box-contact representation seems to require the same kinds of arguments that we provided in this section.
A ``proof from the book'' for Theorem~\ref{thm:box}, using topological properties of the orthogonal surface, is a nice open problem.


\section{Representations for Optimal 1-Planar Graphs}

In this section we consider contact representation for optimal 1-planar graphs. We first show that
 there exists optimal 1-planar graphs with no box-contact representation. We thus consider contact
 representation for optimal 1-planar graphs with the next simplest axis-aligned 3D object: \LLs.
 We provide a quadratic-time algorithm for representing optimal 1-planar graph with \LLs.

\subsection{Optimal 1-Planar Graphs with no Box-Contact Representation}

Here we prove the following lemma.

\begin{lemma}
\label{lem:K5}{\ \\[-1em]}
 \begin{itemize}
  \item $K_5$ has no proper box-contact representation in 3D.
  \item  There exist optimal $1$-planar graphs that have neither box-contact representation nor shelled
 box-contact representation in 3D.
 \end{itemize}
 \end{lemma}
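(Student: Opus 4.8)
The plan is to establish both bullet points by counting argument on the contact structure.

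For the first bullet, I would argue about the structure of a hypothetical proper box-contact representation of $K_5$. In such a representation, every pair of the five boxes must share a facet-region of positive area. I would first observe that the contact graph of $n$ axis-aligned interior-disjoint boxes in 3D, where contacts have positive area, is sparse: one can show (by a standard Euler-type or direct combinatorial argument on how many ``$+$''-facet contacts a box can participate in, or by noting the underlying structure is planar-like in each coordinate direction) that such contact graphs cannot contain $K_5$. Concretely, I would assign to each proper contact a direction ($x$, $y$, or $z$) according to the axis of the shared facet, and show that the edges of one fixed direction form a bipartite-like (in fact acyclic/comparability) relation — the box whose ``$+$''-facet is involved dominates the other in that coordinate — so the edges of each color class induce a graph with no odd structure forcing too many incidences; combined over three colors, $K_5$ with its $10$ edges cannot be realized because at least one box would need $\ge 3$ contacts of some single direction on a single facet while also... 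More cleanly: each box has $6$ facets, and the ``$+$'' facet in each of the three directions can be covered by at most the boxes forming a planar subdivision, so the number of edges directed ``out of'' a box in each color is at most $O(1)$ per facet but the real obstruction is that the three out-directions plus the embedding force the contact graph to be planar in a suitable sense; since $K_5$ is non-planar this is impossible. I expect the cleanest route is: show the contact graph of interior-disjoint axis-aligned boxes with proper contacts, after orienting each edge toward the box using its $+$-facet, decomposes into three forests-like/planar pieces, yielding at most $3n - O(1) < 10$ edges for $n = 5$.

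For the second bullet, the strategy is to exhibit a concrete optimal $1$-planar graph $H$ containing $K_5$ as a subgraph in an ``unavoidable'' way, or more directly containing a $K_5$ that would have to be realized by five mutually-touching boxes in any (possibly shelled) representation. Since optimal $1$-planar graphs are built from a $3$-connected plane quadrangulation by inserting a crossing pair of diagonals into each quadrilateral face, the four vertices of such a face together with... actually $K_5$ itself is $1$-planar but not optimal; instead I would take a known small optimal $1$-planar graph with a separating $4$-cycle and argue that it contains a $K_5$ (or even $K_6$) subgraph all of whose vertices are ``interior'' so that no one of them can play the role of the shell, hence it needs a genuine proper box-contact representation of $K_5$ among non-shell boxes, contradicting the first bullet. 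The shelled case is handled by noting there is only one shell, so at most one vertex escapes the box requirement; choosing $H$ so that the forbidden $K_5$ avoids every possible shell vertex (e.g., the $K_5$ sits strictly inside a separating $4$-cycle, far from the outer face) rules out shelled representations too.

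The main obstacle I expect is making the first bullet airtight: proving that proper (positive-area) box-contact graphs in 3D genuinely exclude $K_5$. It is tempting but false that such graphs are planar (they are exactly the planar graphs only in certain restricted settings), so I would need a careful argument specific to $K_5$ — likely a case analysis on the relative coordinate orders of five boxes, using that for each coordinate $c \in \{x,y,z\}$ the relation ``$R(u)$'s $c^+$-facet meets $R(v)$'' is a partial order (antisymmetric, since positive-area contact forces one box strictly on the larger-$c$ side), so each of the three color classes is a comparability graph, i.e., transitively orientable and triangle-free-in-the-orientation sense; then $K_5$ edge-colored by three comparability graphs would need one class with $\ge 4$ edges on $5$ vertices forming an orientation that, together with the geometric side-constraints and the fact that a single facet can be ``split'' among its neighbors only planarly, leads to a contradiction. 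Once the $K_5$ obstruction is in hand, the second bullet is a short finish: name the optimal $1$-planar graph, point to its $K_5$ subgraph, invoke the first bullet for the unshelled case, and observe the single-shell limitation to kill the shelled case.
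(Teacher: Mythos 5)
Your first bullet is where the proposal breaks down, and the gap is substantive. The route you lean on --- that proper box-contact graphs decompose into three ``forest-like/planar pieces'' and hence have at most $3n-O(1)$ edges --- is not salvageable: proper box-contact graphs in 3D are \emph{not} that sparse. Indeed, the paper itself constructs (Corollary~\ref{th:prime}, Fig.~\ref{fig:big_example}) proper contact representations realizing all non-shell edges of optimal $1$-planar graphs, which have $4|V|-8$ edges, so any bound of the form $3n-c$ is false and no decomposition into three forests or three planar pieces can exist in general. Your fallback, that for each axis the relation ``$R(u)$'s $c^+$-facet meets $R(v)$'' is a partial order / comparability relation, is also incorrect: the relation is antisymmetric and acyclic (the shared-facet coordinate increases along directed paths) but not transitive, and a single $c^+$-facet can properly touch many boxes, so you get neither bounded out-degree nor a comparability graph. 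You correctly sense that ``such graphs are planar'' is false, but you never replace these failed ideas with a working obstruction. The paper's actual argument is entirely different and elementary: project the five pairwise-touching boxes onto each axis, apply the one-dimensional Helly theorem to conclude all five boxes share a common point $o$, observe that interior-disjointness forces each open octant at $o$ to meet at most one box, use pigeonhole ($5$ boxes, $8$ octants) to find two boxes each confined to a single octant, note these two octants must be adjacent (differ in one sign) for the two boxes to touch properly, and then check that only two of the remaining octant-pairs can host a box touching both --- too few for the other three boxes. Nothing in your sketch reaches, or substitutes for, this kind of argument.

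Your second bullet is essentially the paper's plan and would go through once the first bullet is established: exhibit an optimal $1$-planar graph forced to contain a $K_5$ among non-shell vertices. Two points need to be made concrete, though. First, you must actually produce an almost-optimal $1$-planar gadget containing $K_5$ (the paper exhibits one, $G^*$, and splices it into a face of the quadrangulation of an arbitrary optimal $1$-planar graph); asserting that ``a known small optimal $1$-planar graph with a separating $4$-cycle contains a $K_5$'' is not enough without naming it. Second, to kill the shelled case you need the $K_5$ to avoid \emph{every} candidate shell vertex; the paper's clean trick is to insert two vertex-disjoint copies of the gadget into two faces sharing no vertices, so the unique shell vertex lies in at most one copy and the other copy's $K_5$ must be realized by genuine interior-disjoint boxes. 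Your phrasing (``the $K_5$ sits strictly inside a separating $4$-cycle, far from the outer face'') gestures at this but does not rule out the shell being chosen as one of the five $K_5$ vertices themselves; the two-disjoint-copies device is what makes that step airtight.
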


\begin{proof}
Assume for a contradiction that $K_5$ has a contact representation
 $\Gamma$ with five boxes $B_i$, $i\in\{1,2,3,4,5\}$. We first show that there is a point
 $o$ which is common to all five boxes in $\Gamma$, that is, $\cap_{i=1}^{5}B_i$ is not empty.
 Since $B_i$ is an axis-aligned box, one can define $B_i$ as
 $[x_i,X_i]\times[y_i,Y_i]\times[z_i,Z_i]$ for each $i\in\{1,2,3,4,5\}$.
 Now for any pair of boxes $B_i$, $B_j$, $i,j\in\{1,2,3,4,5\}$, $B_i$ and $B_j$ makes a
 contact in $\Gamma$ since $\Gamma$ represents $K_5$; hence there is a point $(x_{ij}, y_{ij},
 z_{ij})$ common to $B_i$, $B_j$. This can happen only if $x_{ij}\in[x_i,X_i]$, $x_{ij}\in[x_j,X_j]$,
 $y_{ij}\in[y_i,Y_i]$, $y_{ij}\in[y_j,Y_j]$ and $z_{ij}\in[z_i,Z_i]$, $z_{ij}\in[z_j,Z_j]$. In particular
 if we look at the projection on the $x$-axis, $x_i\le x_{ij}\le X_i$ and $x_j\le x_{ij}\le X_j$;
 that is, $[x_i,X_i]$ and $[x_j,X_j]$ have a common point $x_{ij}$. Then by Helly's
 theorem~\cite{AH95}, $[x_i,X_i]$, $i\in\{1,2,3,4,5\}$ have a common
 point $x^*$. Similarly all the projections of $B_i$'s on the $y$-axis (resp. on the $z$-axis)
 have a common point $y^*$ (resp. $z^*$). Then for each box $B_i$, $(x^*,y^*,z^*)\in B_i$
 and therefore $(x^*,y^*,z^*)$ is a common point for all five boxes. Call this point $o$.

\begin{figure}[b!]
\centering
\includegraphics[width=0.3\textwidth]{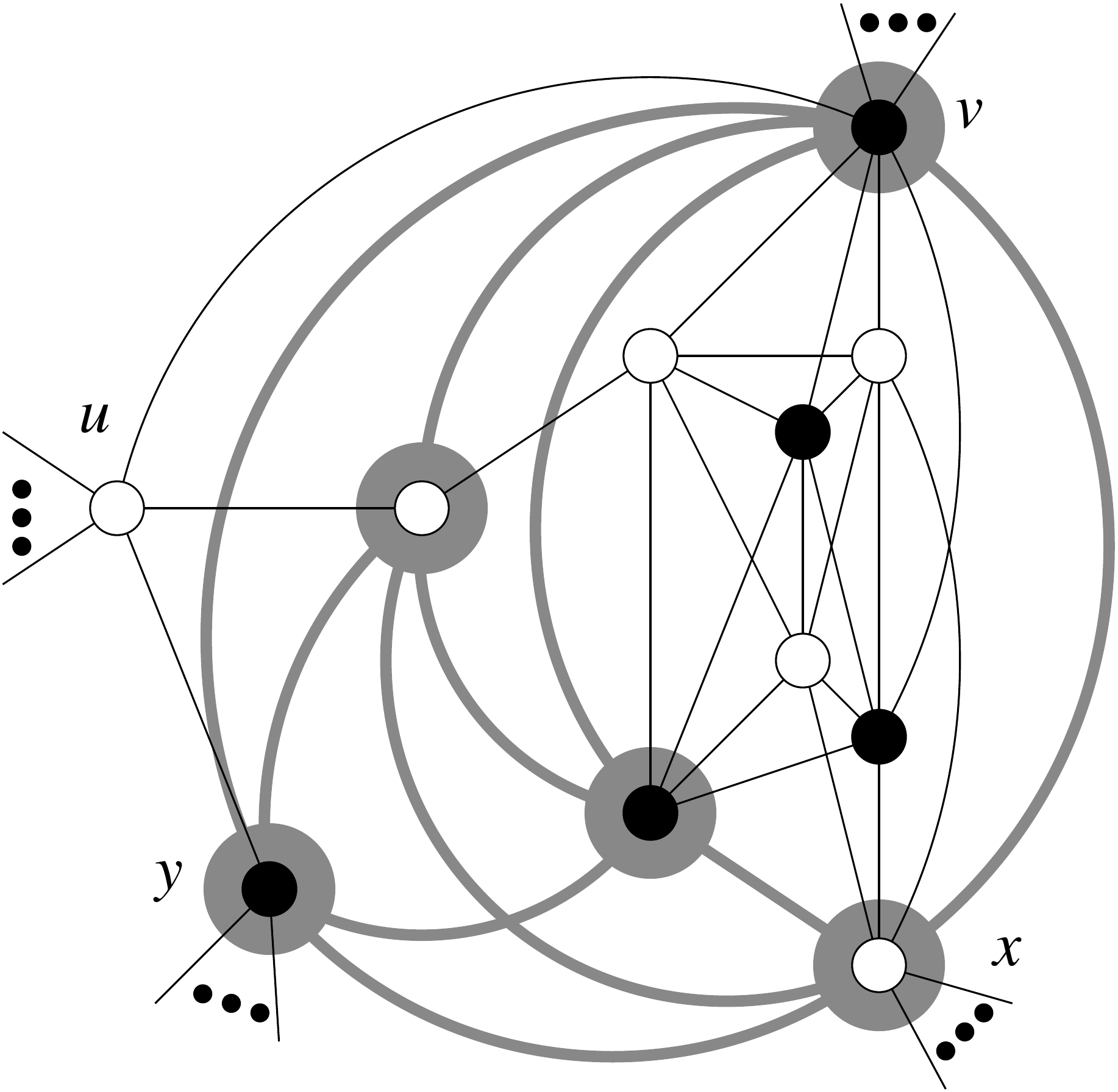}
\caption{An almost optimal $1$-planar graph $G^*$ containing a $K_5$ (highlighted).}
\label{fig:K5}
\end{figure}

Now consider the eight closed octants with $o$ in the center.
Denote the positive and negative $x$-half-space of $o$ by $x_{+}$ and $x_{-}$,
 respectively. Similarly define $y_{+}$, $y_{-}$, $z_{+}$ and $z_{-}$ to be the positive
 $y$-half-space, negative $y$-half-space, positive  $z$-half-space and negative $z$-half-space,
 respectively. Then for $X\in\{x_{+}, x_{-}\}$, $Y\in\{y_{+}, y_{-}\}$, $Z\in\{z_{+}, z_{-}\}$,
 denote the octant $X\cap Y\cap Z$ by $XYZ$. Since all five boxes contain $o$, no octant
 can contain parts of two or more boxes. Since there are eight octants and five boxes,
 by pigeonhole principle, there are at least two boxes, each of which are completely contained
 inside one octant. Assume, \WLOG (after possible renaming), that $B_1$ and
 $B_2$ are these two boxes and they lie on the two octants $x_{+}y_{+}z_{+}$ and
 $x_{+}y_{+}z_{-}$, respectively. Then for any box to make a proper contact with both $B_1$
 and $B_2$, it must lie either in both the octants $x_{-}y_{+}z_{+}$, $x_{-}y_{+}z_{-}$, or in
 both the octants $x_{+}y_{-}z_{+}$, $x_{+}y_{-}z_{-}$. This implies at least one of the three
 boxes $B_3$, $B_4$ and $B_5$ fails to make a proper contact with at least one of $B_1$
 and $B_2$. This completes the proof for $K_5$.

In order to prove the second part of the claim, first note that there exists an almost-optimal $1$-planar
graph containing $K_5$ as a subgraph; see Fig.~\ref{fig:K5}. Call this graph $G^*$. Now consider an optimal $1$-planar
graph $G$. Take two faces $f_1=a_1b_1c_1d_1$ and $f_2=a_2b_2c_2d_2$ in the
quadrangulation of $G$ such that they do not share vertices. Delete the crossing
edges inside the two faces and in each face $f_i$, $i=1,2$, and insert an isomorphic copy
of $G^*$, identifying $u,v,x,y$ with $a_i,b_i,c_i,d_i$. Let $H$ be the resulting augmented graph.
As shown earlier, $H$ has no contact representation with interior-disjoint boxes.
Furthermore, $H$ does not admit a shelled box-contact representation, as
at least one copy of $G^*$ in $H$ does not contain the vertex represented by the shell.
\end{proof}

\subsection{L-Contact Representation of Optimal 1-Planar Graphs}

Here we prove Theorems~\ref{th:prime}~and~\ref{thm:1-planar}.
Throughout, let $G$ be an optimal 1-planar graph with a fixed 1-planar embedding.
An edge is \df{crossing} if it crosses another edge, and \df{non-crossing} otherwise.
A cycle in a connected graph is \df{separating} if removing it disconnects the graph.
We start by listing some well-known properties of optimal 1-planar graphs.

\begin{lemma}[Brinkmann \df{et al.}~\cite{BGGMT+05}, Suzuki~\cite{Suz10}]\label{lem:1-planar-subgraphs}{\ \\[-1em]}
 \begin{itemize}
  \item The subgraph of an embedded optimal 1-planar graph $G$ induced by the non-crossing edges is a plane quadrangulation $Q$ with bipartition classes $W$ and $B$.
  \item The induced subgraphs $G_W = G[W]$ and $G_B = G[B]$ on white and black vertices, respectively, are planar and dual to each other.
  \item The graphs $G_B$ and $G_W$ are $3$-connected if and only if $Q$ has no separating $4$-cycles.
  \item There exists a simple optimal 1-planar graph with quadrangulation $Q$ if and only if $Q$ is $3$-connected.
 \end{itemize}
\end{lemma}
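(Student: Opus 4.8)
The plan is to derive all four items from a single structural picture of the fixed $1$-planar drawing and then read off the graph-theoretic consequences. For Item~1 I would start with the \df{planarization} $G^\times$, replacing each crossing by a degree-$4$ dummy vertex. Since every edge of $G$ is crossed at most once, if $G$ has $n$ vertices, $m=4n-8$ edges and $c$ crossings then $G^\times$ is a plane (multi)graph with $n+c$ vertices and $m+2c$ edges, so Euler gives $f=2+m+c-n$ faces, and $2(m+2c)=\sum_\phi \ell(\phi)\ge 3f$ forces $c\ge n-2$. Combining this with the quadrangulation count ($E_Q=m-2c$, and a connected quadrangulation on $n$ vertices has $2n-4$ edges and $n-2$ faces) pins $c=n-2$ exactly, so the Euler bound is tight and every face of $G^\times$ is a triangle. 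A short analysis at each degree-$4$ crossing vertex --- which I would quote from Bodendiek et al.~\cite{BGGMT+05} and Suzuki~\cite{Suz10} rather than redo --- shows that triangularity forces every crossing to be \df{framed} by a $4$-cycle of non-crossing edges whose diagonals are the crossing pair, and that these frames are precisely the faces of the plane graph $Q$ of non-crossing edges. Hence $Q$ is a quadrangulation, and since a connected plane graph all of whose faces have even length is bipartite (the face boundaries span the cycle space over $\mathbb{F}_2$), $Q$ is bipartite with colour classes $W$ and $B$.

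For Item~2, by Item~1 each face of $Q$ is a $4$-cycle $w_1b_1w_2b_2$ ($w_i\in W$, $b_i\in B$) carrying the crossing pair $\{w_1w_2,\,b_1b_2\}$; in particular every crossing edge joins two vertices of the same colour, so $G_W=G[W]$ consists exactly of the white diagonals and $G_B=G[B]$ of the black diagonals. Drawing each white diagonal inside its own face realizes $G_W$ as a plane graph --- distinct faces yield distinct edges, else $G$ has a multi-edge, contradicting simplicity --- and likewise $G_B$, their superposition recovering the original drawing of the crossing edges. This is exactly the incidence pattern of plane duals: one checks (with Euler again: $G_W$ has $|W|$ vertices and one edge per face of $Q$, hence $|B|$ faces) that every black vertex lies inside exactly one face of $G_W$ and vice versa, and that the white and black diagonals of a common face of $Q$ form a dual edge pair, so $G_B=G_W^{*}$.

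For Item~3, since every face of $Q$ is a $4$-cycle, a $4$-cycle of $Q$ is separating exactly when it is non-facial. Given a non-facial $C=w_1b_1w_2b_2$ I would argue it yields a $2$-cut: if white vertices lie strictly on both sides of $C$ then $\{w_1,w_2\}$ separates $G_W$ (white diagonals never cross $C$, and the only white vertices on $C$ are $w_1,w_2$); if not, then one side has no interior white vertex, and a small Euler count inside that quadrilateral disk forces the diagonal $w_1w_2$ to occur in two faces, making $G$ non-simple unless (symmetrically) $\{b_1,b_2\}$ separates $G_B$. Conversely a minimal $2$-cut of $G_W$ is a non-adjacent white pair that closes up through two black vertices into a non-facial $4$-cycle of $Q$, and symmetrically for $G_B$; hence $G_W$ and $G_B$ are both $3$-connected iff $Q$ has no separating $4$-cycle. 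Item~4 then follows by reversing Item~2: the only candidate for an optimal $1$-planar graph with non-crossing subgraph $Q$ is the diagonal completion that inserts both diagonals into every face of $Q$, which by the count in Item~1 is optimal $1$-planar; it is simple iff no two faces of $Q$ share a diagonal, i.e.\ iff $Q$ has no non-facial $4$-cycle, which for a plane quadrangulation is the standard restatement of $3$-connectivity.

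The main obstacle is the step inside Item~1 that upgrades the slack-free Euler count to the rigid ``quadrangulation-with-diagonals'' picture --- ruling out crossings not framed by a $4$-cycle, two crossing pairs interleaving inside one face, or leftover triangular faces in $Q$ --- and this is exactly where I would invoke the discharging arguments of \cite{BGGMT+05,Suz10} rather than reprove them; the only other fussy point is the side-of-the-$4$-cycle case distinction in Item~3, which is routine once the quadrangulation structure is available.
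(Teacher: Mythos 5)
The paper does not actually prove this lemma: it is quoted as known background, with the proofs delegated entirely to Brinkmann et al.~\cite{BGGMT+05} and Suzuki~\cite{Suz10}. So there is no in-paper argument to compare against; what you have written is a reconstruction of the standard proofs from the literature, and as such it is broadly sound. Items~2--4 in particular are the right arguments: the white and black diagonals pair up one per face of $Q$, giving the radial-graph/dual correspondence; a non-facial $4$-cycle of $Q$ yields a $2$-cut of $G_W$ or $G_B$ (and since $G_W$ and $G_B$ are planar duals, you only really need to treat one of them --- $3$-connectivity is self-dual); and simplicity of the diagonal completion is equivalent to no two faces of $Q$ sharing an opposite vertex pair, which is the $3$-connectivity criterion for quadrangulations.

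The one place where your sketch has a genuine gap is the bootstrapping in Item~1. The triangulation bound $2(m+2c)\ge 3f$ gives only the lower bound $c\ge n-2$; your claim that the ``quadrangulation count'' then \emph{pins} $c=n-2$ is circular, because the facts that $E_Q=m-2c$ spans a connected bipartite plane graph with $2n-4$ edges and $n-2$ quadrilateral faces are exactly what you are trying to establish. A matching upper bound $c\le n-2$ does not follow from the Euler counts you wrote down; in~\cite{BGGMT+05,Suz10} (and already in~\cite{BSW84}) it is the framing lemma --- every crossing pair is the diagonal pair of a $4$-face of the non-crossing skeleton --- that simultaneously delivers $c=n-2$ and the quadrangulation structure, rather than being a consequence of a slack-free count. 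Since you explicitly defer that lemma to the references anyway, the fix is just to reorder: cite the framing lemma first, then read off the counts. With that adjustment your write-up matches the standard treatment that the paper itself relies on by citation.
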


\begin{figure}[htb]
  \centering
  \begin{subfigure}[t]{.3\textwidth}
    \centering
    \includegraphics[height=3cm]{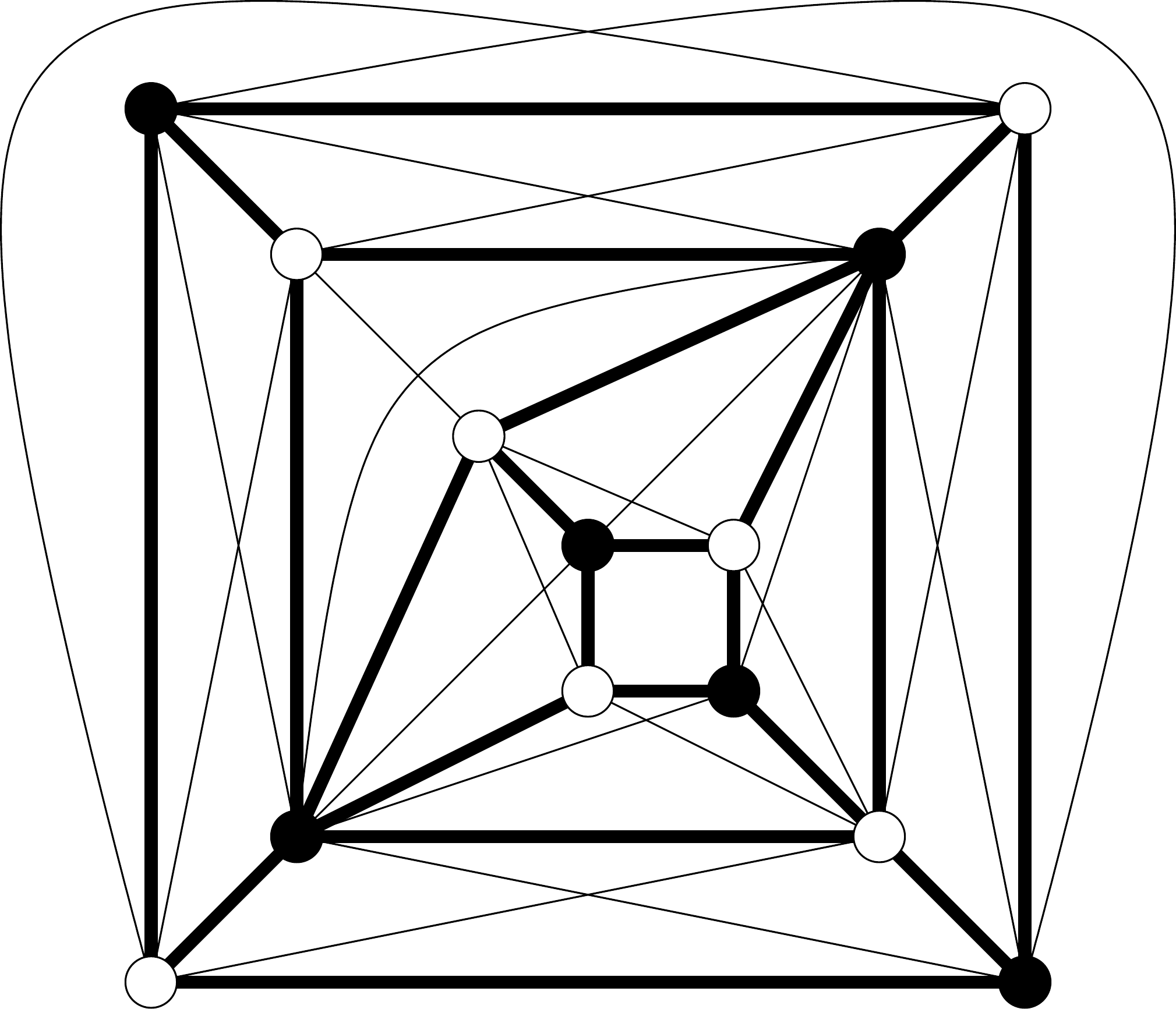}
    \caption{}
    \label{fig:opt-1-planar}
  \end{subfigure}
  \hspace{1em}
  \begin{subfigure}[t]{.3\textwidth}
    \centering
    \includegraphics[height=3cm]{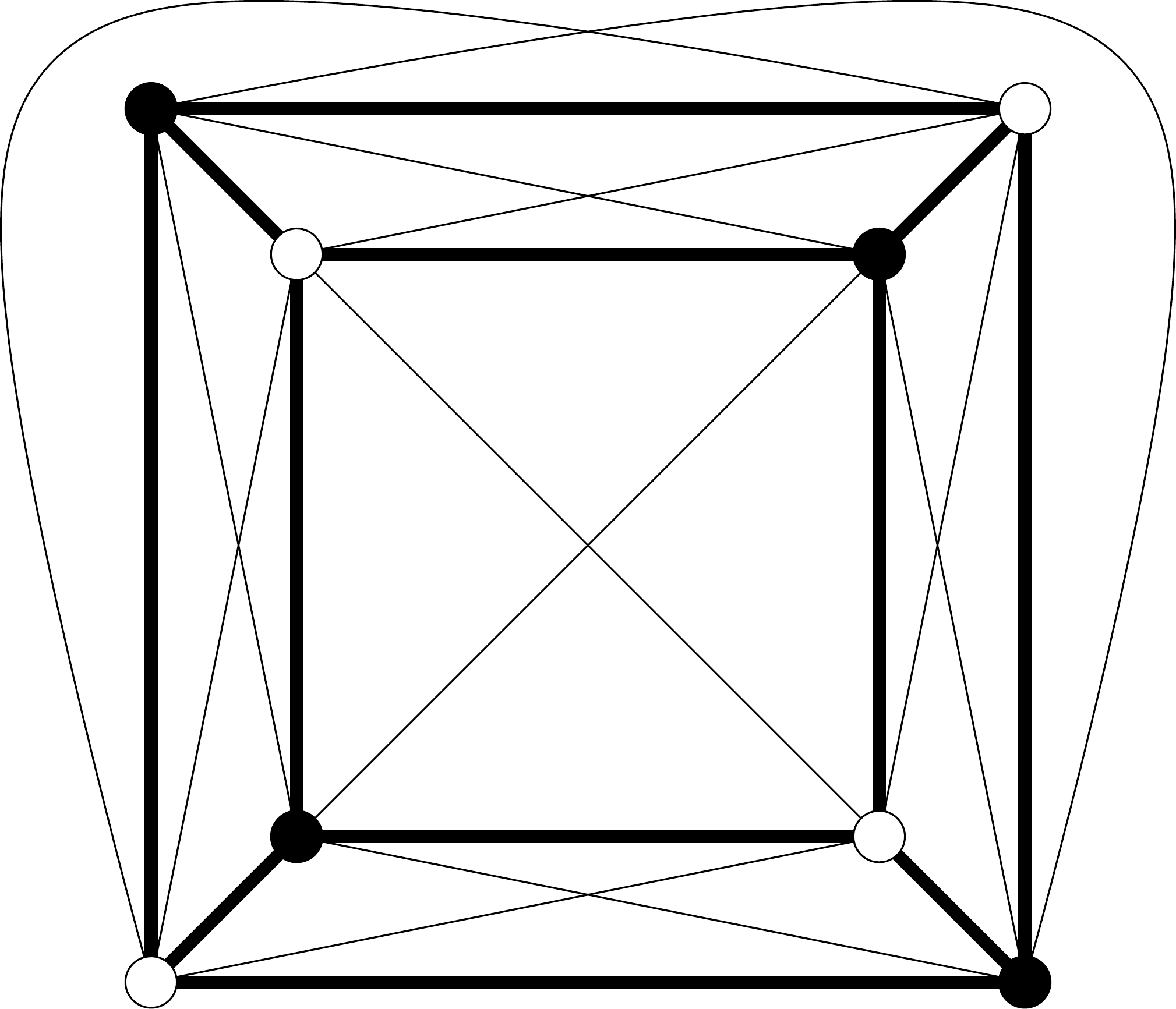}
    \caption{}
    \label{fig:opt-H1}
  \end{subfigure}
  \hspace{2em}
  \begin{subfigure}[t]{.15\textwidth}
    \includegraphics[height=2cm]{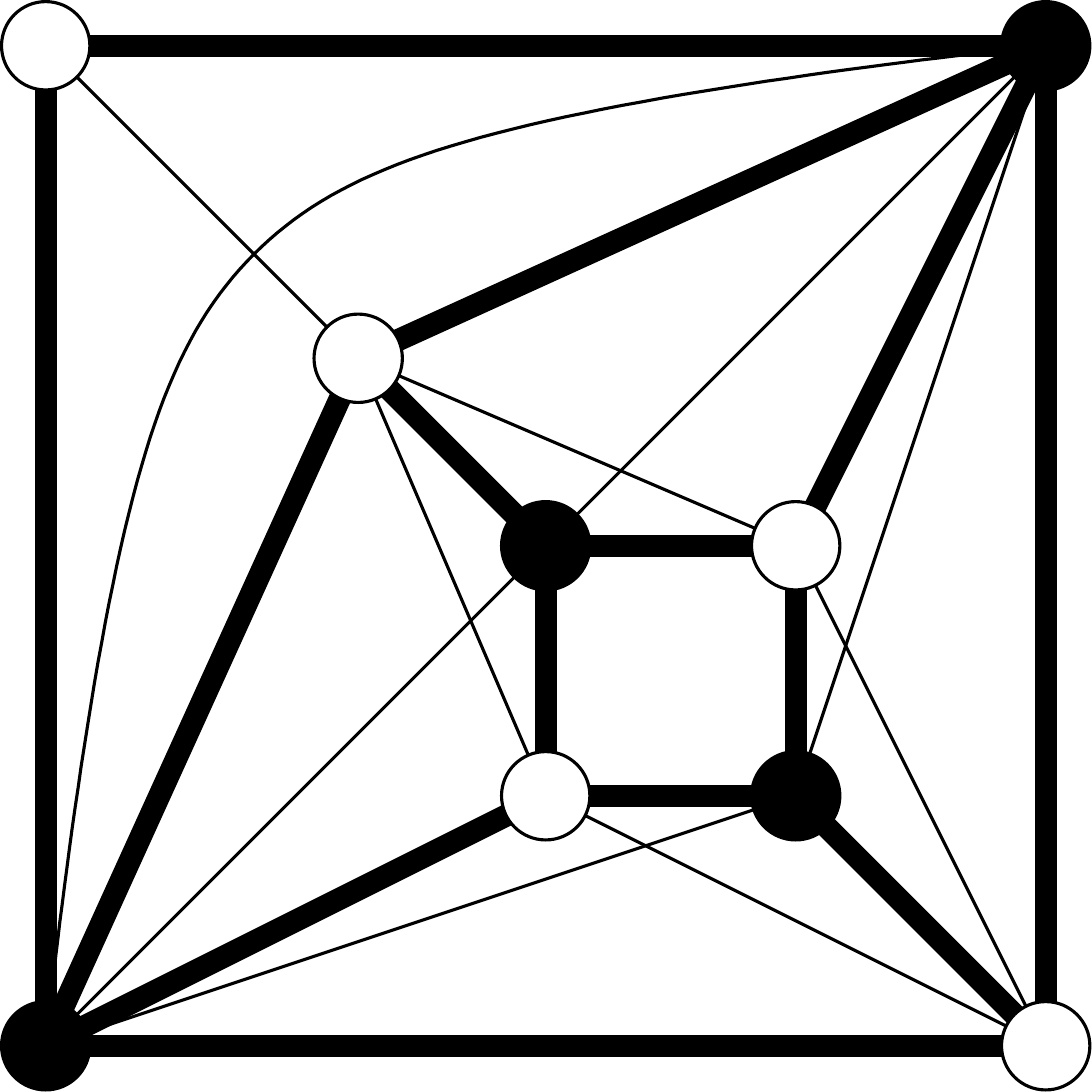}
    \caption{}
    \label{fig:opt-H2}
  \end{subfigure}
  \caption{(\subref{fig:opt-1-planar})~An embedded optimal 1-planar graph, its quadrangulation $Q$ (bold) and the partition into white and black vertices.
  (\subref{fig:opt-H1})~The graph $G_{out}$ produced by removing the interior of separating $4$-cycle $C$.
  (\subref{fig:opt-H2})~The graph $G_{in}(C)$ comprised of the separating $4$-cycle and its interior.}
\end{figure}


Call an optimal 1-planar graph \df{prime} if its quadrangulation has no separating $4$-cycle.

\begin{backInTime}{th-prime}
\begin{corollary}\label{cor:goodCase}
Every prime $1$-planar graph $G=(V,E)$ admits a shelled box-contact representation
in 3D and it can be computed in $\Oh(|V|)$ time.
\end{corollary}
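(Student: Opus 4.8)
The plan is to derive Corollary~\ref{cor:goodCase} directly from Theorem~\ref{thm:box} together with Lemma~\ref{lem:1-planar-subgraphs}: the whole argument is a dictionary between the three kinds of edges of a prime optimal $1$-planar graph and the three conditions defining a primal-dual box-contact representation, so that a shelled box-contact representation of $G$ is literally the same object as a primal-dual box-contact representation of one of its planar halves.

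First I would fix the given $1$-planar embedding of $G$ and invoke Lemma~\ref{lem:1-planar-subgraphs} to extract the underlying plane quadrangulation $Q$ with its bipartition $V(G)=W\cup B$, and set $G_W=G[W]$ and $G_B=G[B]$. Since $G$ is prime, $Q$ has no separating $4$-cycle, so by Lemma~\ref{lem:1-planar-subgraphs} both $G_W$ and $G_B$ are $3$-connected planar graphs, and with respect to the embedding inherited from $G$ the graph $G_B$ is the dual $G_W^{*}$. Moreover $Q$ is exactly the radial (vertex--face incidence) graph of $G_W$: the vertices in $B$ are in bijection with the faces of $G_W$, and $w\in W$ is adjacent in $Q$ to $b\in B$ if and only if $w$ lies on the face of $G_W$ represented by $b$. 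As every edge of $G$ is a white--white edge (an edge of $G_W$), a black--black edge (an edge of $G_B$), or a non-crossing white--black edge (an edge of $Q$), we obtain a partition $E(G)=E(G_W)\cup E(G_B)\cup E(Q)$ on the vertex set $W\cup B$.

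Next I would apply Theorem~\ref{thm:box} to the $3$-connected planar graph $G_W$, embedded as it sits inside $G$, choosing as the outer face the one that corresponds to the black vertex $b_\infty$ which is outer in the given embedding of $G$. This yields interior-disjoint boxes $\{R(w):w\in W\}$ realizing $G_W$, together with boxes $\{R(b):b\in B\}$ for the faces that realize $G_B=G_W^{*}$, are interior-disjoint except that $R(b_\infty)$ contains all the others, and — by condition (iii) of a primal-dual representation — satisfy $R(w)\cap R(b)\neq\emptyset$ exactly when $w$ is incident to the face $b$, i.e.\ exactly when $wb\in E(Q)$. Declaring $R(b_\infty)$ the shell, the contact graph of $\{R(u):u\in V(G)\}$ is then precisely $G_W\cup G_B\cup Q=G$, while conditions (i) and (ii) guarantee that the interiors of all boxes and the exterior of the shell $R(b_\infty)$ are pairwise disjoint; hence this is a shelled box-contact representation of $G$. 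For the time bound: extracting $Q$ (the non-crossing edges), two-coloring it to get $W$ and $B$, and reading off $G_W$ from the given embedding all take $\Oh(|V|)$ time, the construction of Theorem~\ref{thm:box} on $G_W$ runs in $\Oh(|V(G_W)|)=\Oh(|V|)$ time, and merging the two families of boxes is linear, for a total of $\Oh(|V|)$. I do not expect a genuine obstacle: the only point requiring care is the bookkeeping in the previous paragraph — verifying that the radial graph of $G_W$ is exactly $Q$ and that the outer black vertex of $G$ is the face that becomes the shell — after which the corollary is immediate from the (properness-strengthened) primal-dual theorem.
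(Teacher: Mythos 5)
Your proposal is correct and follows essentially the same route as the paper: apply Theorem~\ref{thm:box} to one of the two induced planar halves (the paper uses $G_B$, you use $G_W$; these are dual so the choice is immaterial), note the edge partition $E(G)=E(G_B)\cup E(G_W)\cup E(Q)$, and observe that the three edge classes are realized respectively by primal--primal, dual--dual, and primal--dual box contacts, with the outer-face box serving as the shell. Your extra remark that $Q$ is exactly the vertex--face incidence graph of $G_W$ makes explicit a step the paper leaves implicit, but the argument is the same.
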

\end{backInTime}
\begin{proof}
Let $Q$ be the quadrangulation of $G$ and let $B$, $W$ be the bipartition classes of $Q$.
 By Lemma~\ref{lem:1-planar-subgraphs}, $G_B = G[B]$ and $G_W = G[W]$ are $3$-connected planar and dual to each other.
 By Theorem~\ref{thm:box}, a primal-dual box-contact representation $\Gamma$ of $G_B$
 can be computed in linear time.
 We claim that $\Gamma$ is a contact representation of $G$.
 Indeed, the edges of $G$ are partitioned into $G_B$, $G_W$, $Q$.
 Each edge in $G_B$ is realized by contact of two ``primal'' boxes, in $G_W$ by contact of ``dual'' boxes, and in $Q$ by contact of a primal and a dual box; see Fig.~\ref{fig:big_example}.
\end{proof}

\begin{figure}[htb]
 \centering
 \includegraphics{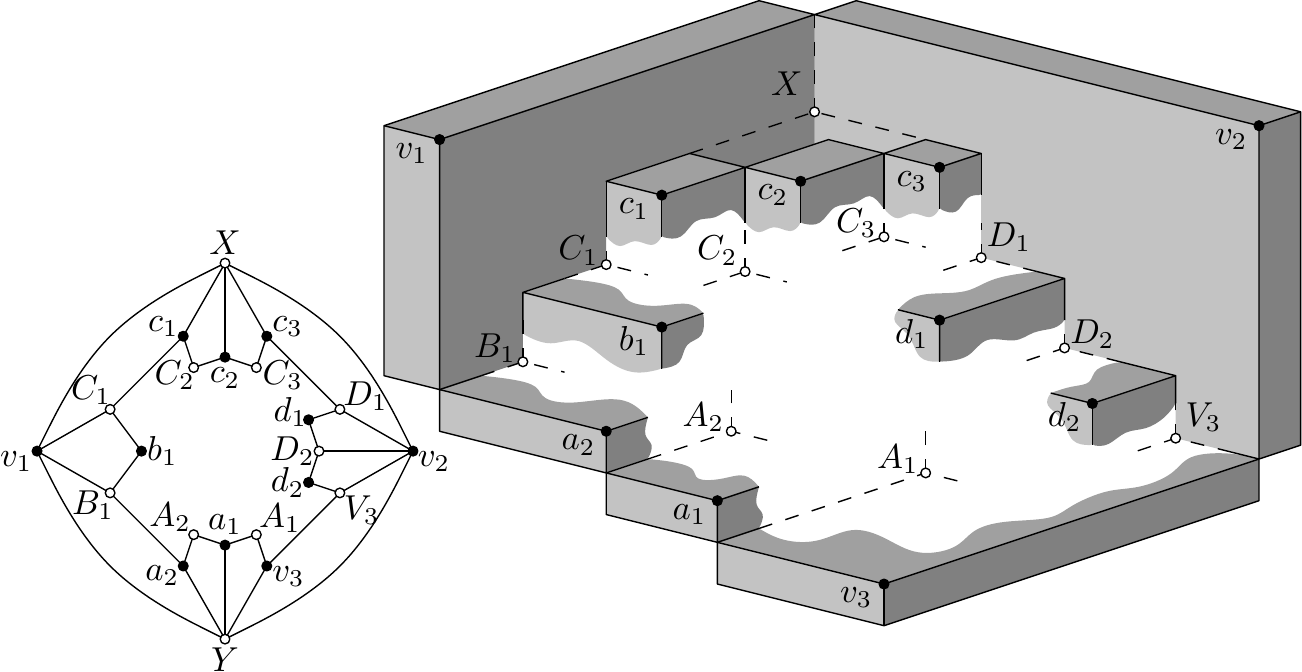}
 \caption{Part of an optimal 1-planar graph and its partial proper box contact representation}
 \label{fig:big_example}
\end{figure}

Next, assume that $G$ is any (not necessarily prime) optimal 1-planar graph.
To find an \LL-representation for $G$, we find all separating $4$-cycles in $G$, replace their interiors by a pair of crossing edges and construct an \LL-representation of the obtained smaller 1-planar graph by Corollary~\ref{cor:goodCase}.
We ensure that this \LL-representation has some ``available space'' in which we can place the \LL-representations for the removed subgraph in each separating $4$-cycle, which we construct recursively.
We remark that similar procedures were used before, e.g., for maximal planar graphs and their separating triangles~\cite{FF11,Tho88}.
A separating $4$-cycle is \df{maximal} if its interior is inclusion-wise maximal among all separating $4$-cycles.
A 1-planar graph with at least $5$ vertices is called \df{almost-optimal} if its non-crossing edges induce a quadrangulation $Q$ and inside each face of $Q$, other than the outer face, there is a pair of crossing edges.


\newcommand{\lt}{\textbf{Let} }
\newcommand{\drawOpt}{algorithm \textbf{L-Contact}}

\hspace{-0.02\textwidth}
\parbox{0.97\textwidth}
{
\begin{algorithm}{L-Contact}[\text{optimal 1-planar graph }G]
	{
	}


	Find all separating $4$-cycles in the quadrangulation $Q$ of $G$\\

	\qif some inner vertex $w$ of $Q$ is adjacent to two outer vertices of $Q$\\

		\qthen
		$\mathcal{C}=$ set of the two $4$-cycles containing $w$ and 3 outer vertices of $Q$.
		 \textbf{(Case 1)}

		\qelse
		$\mathcal{C}=$ set of all maximal separating $4$-cycles in $Q$.
		\textbf{(Case 2)}
	\qfi\\

	Take the optimal 1-planar (multi)graph $G_{out}$ obtained from $G$ by replacing
	for each $4$-cycle $C \in \mathcal{C}$ all vertices strictly inside $C$
	by a pair of crossing edges; see Fig.~\ref{fig:opt-H1}.
	\label{step:define_G_out}\\

	Compute an \LL-representation of $G_{out}$ that has ``some space'' at each $4$-cycle
	$C \in \mathcal{C}$. In Case~2, this is done from a shelled box-contact representation
	of $G_{out}$ in Corollary~\ref{cor:goodCase}.
	\label{step:G_out}\\

	For each $C \in \mathcal{C}$, take the almost-optimal 1-planar
	subgraph $G_{in}(C)$ of $G$ induced by $C$ and all vertices inside $C$;
	see Fig.~\ref{fig:opt-H2}. Recursively compute an \LL-representation of $G_{in}(C)$
	and insert it into the corresponding ``space'' in the \LL-representation of $G_{out}$.
	\label{step:G_in}

\end{algorithm}
}

Let us formalize the idea of ``available space'' mentioned in steps~\ref{step:G_out} and~\ref{step:G_in} in the above procedure.
Let $\Gamma$ be any \LL-representation of some graph $G$ and $C$ be a $4$-cycle in $G$.
A \df{frame for $C$} is a $3$-dimensional axis-aligned box $F$ together with an injective mapping
of $V(C)$ onto the facets of $F$ such that the two facets without a preimage are adjacent.
Every frame has one of two possible types.
If two opposite vertices of $C$ are mapped onto two opposite facets of $F$, then $F$ has type $\ppar$;
otherwise, $F$ has type $\pperp$;  see Fig.~\ref{fig:frames}.
Finally, for an almost-optimal 1-planar graph $G$ with corresponding quadrangulation $Q$ and outer face $C$, and a given frame $F$ for $C$, we say that an \LL-representation $\Gamma$ of $G$ \df{fits into $F$} if replacing the boxes or \LL's for the vertices in $C$ by the corresponding facets of $F$ yields a proper contact representation of $G - E(G[C])$ that is strictly contained in $F$.

Before we prove this Section's main result, namely Theorem~\ref{thm:1-planar}, we need one last lemma addressing the structure of maximal separating $4$-cycles in almost-optimal 1-planar graphs.

\begin{lemma}\label{lem:almost}
 Let $G$ be an almost-optimal 1-planar graph with corresponding quadrangulation $Q$.
 Then all maximal separating $4$-cycles of $Q$ are interior-disjoint, unless two inner vertices $w$ and $w'$ of $Q$ are adjacent to two outer vertices of $Q$.
\end{lemma}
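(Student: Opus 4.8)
The plan is to analyze the structure of the quadrangulation $Q$ of the almost-optimal 1-planar graph $G$ and argue that two maximal separating $4$-cycles that share interior points cannot both be maximal, except in the one exceptional configuration. First I would set up notation: let $C$ and $C'$ be two maximal separating $4$-cycles of $Q$ whose interiors intersect but are not nested (if one interior contains the other, maximality of the larger one is fine — the issue is only genuine overlap). Since $Q$ is a plane quadrangulation, both $C$ and $C'$ are $4$-cycles, so each alternates between white and black vertices. I would then examine $C \cap C'$. If $C$ and $C'$ shared an edge, their union would have a controlled structure; if they share only vertices, the shared vertices must again alternate in color. The key combinatorial fact I would invoke is that in a bipartite plane quadrangulation, two $4$-cycles with overlapping interiors must share at least two vertices (otherwise one Jordan curve would have to cross the other's interior through faces, contradicting that all faces are quadrilaterals), and in fact the overlap forces a common pair of opposite vertices or a common path.

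The main steps, in order: (1) Reduce to the case where $C$ and $C'$ overlap properly (neither interior contains the other). (2) Use planarity and the Jordan curve theorem to show $C$ and $C'$ must share at least two vertices, and classify how: either they share two adjacent vertices (a common edge or a common path of length $2$ through a face), or they share two opposite (diagonal) vertices of each cycle. (3) In the "shared diagonal" case, the two shared vertices $w, w'$ together with the four remaining vertices (two on $C$, two on $C'$) form a configuration; since $w$ and $w'$ are joined within $Q$ by the two halves of $C$ and the two halves of $C'$, and $Q$ is a quadrangulation, I would count faces/edges to show one of $C, C'$ is contained in the other's interior together with the shared structure — contradicting proper overlap — unless $w$ and $w'$ are precisely two inner vertices each adjacent to two common outer vertices, which is the stated exception (here the "two outer vertices" are the two shared vertices of the degenerate cycles at the outer face). (4) In the "shared edge/path" case, show that the symmetric difference region is a union of faces of $Q$ that can be added to the interior of one of the cycles while keeping it a separating $4$-cycle of larger interior, contradicting maximality. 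I would use throughout that $G$ is almost-optimal, so every inner face of $Q$ carries a crossing pair, which guarantees that "merging" regions still yields a separating cycle (the interior is never empty of vertices).

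The hard part will be step (2)–(3): carefully ruling out all the ways two overlapping $4$-cycles can interact in a quadrangulation and isolating exactly the exceptional configuration. In particular, the "shared two opposite vertices" case is delicate, because there one genuinely can have two distinct $4$-cycles $w\,x\,w'\,y$ and $w\,x'\,w'\,y'$ with overlapping interiors, and the whole content of the lemma is that this only happens when the four vertices $x,y,x',y'$ collapse appropriately onto the outer face (so that $w,w'$ are inner vertices adjacent to two shared outer vertices). I would handle this by a short case analysis on which of $x, y, x', y'$ lie on the outer cycle of $Q$, using that the outer face of an almost-optimal graph is the unique quadrilateral face without a crossing pair, together with $3$-connectivity-type arguments from Lemma~\ref{lem:1-planar-subgraphs} on the sub- and super-graphs $G_{in}(C)$ and $G_{out}$. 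The remaining cases (shared edge, shared path, shared single vertex) should each collapse quickly once the Jordan-curve bookkeeping is in place.
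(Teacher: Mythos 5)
Your overall shape is right up to the Jordan-curve step: proper overlap of two $4$-cycles in a plane quadrangulation forces them to share exactly a pair of opposite vertices from one bipartition class (adjacent vertices of $C$ cannot be separated by $C'$ since edges of a plane graph cannot cross the cycle $C'$ except at vertices). But from there your step (3) has a genuine gap, and in fact you have the exceptional configuration backwards. Writing $C=(x,a,y,b)$ and $C'=(x,a',y,b')$ with $x,y$ the shared pair, $a$ strictly inside $C'$, $b$ strictly outside $C'$, $a'$ strictly inside $C$, $b'$ strictly outside $C$, the shared vertices $x,y$ end up being \emph{outer} vertices of $Q$, and the exceptional inner vertices of the lemma are the non-shared ``inside'' vertices $a$ and $a'$ (each adjacent to both $x$ and $y$). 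Your sketch names the shared vertices $w,w'$ and then asserts that these same $w,w'$ are the inner vertices adjacent to ``the two shared vertices,'' which is self-contradictory and points at the wrong vertices. More importantly, the mechanism you propose for step (3) --- ``count faces/edges to show one of $C,C'$ is contained in the other'' plus a case analysis on which vertices lie on the outer cycle --- is not the argument that closes the case. The missing idea is to form the third $4$-cycle $C^*=(x,b,y,b')$ from the shared pair and the two \emph{outside} vertices: its interior strictly contains the interiors of both $C$ and $C'$, so by maximality of $C$ and $C'$ it cannot be separating; yet $a$ and $a'$ lie strictly inside it, so a non-separating $4$-cycle with nonempty interior must be the outer cycle of $Q$. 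That single observation delivers the exceptional configuration in one stroke; no counting, no $3$-connectivity, and no appeal to $G_{in}/G_{out}$ is needed.

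Two smaller issues. First, your alternative cases in step (2) (shared edge, shared path, single shared vertex) are vacuous once you note that a vertex of $C$ strictly inside $C'$ and one strictly outside must be \emph{opposite} on $C$, forcing the other two onto $C'$; so there is nothing to handle there. Second, the repair you propose for the shared-edge case --- ``add the symmetric difference to the interior of one cycle to get a separating $4$-cycle of larger interior'' --- does not make sense as stated: a separating $4$-cycle's interior is determined by the cycle, so contradicting maximality requires exhibiting a concrete new $4$-cycle (as $C^*$ above does), not enlarging a region.
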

\begin{proof}
 When two maximal separating $4$-cycles $C$ and $C'$ are not interior-disjoint, then some vertex from $C$ lies strictly inside $C'$ and some vertex from $C'$ lies strictly inside $C$.
 It follows that $V(C) \cap V(C')$ is a pair $x,y$ of two vertices from the same bipartition class of $Q$, say $x,y \in B$, and that some $v \in V(C)$ lies strictly outside $C'$ and some $v' \in V(C')$ lies strictly outside $C$.
 We have $v,v' \in W$ and that $C^* = (x,v,y,v')$ is a $4$-cycle whose interior strictly contains $C$ and $C'$.
 By the maximality of $C$ and $C'$ it follows that $C^*$ is not separating.
 As the vertices $w \in V(C) \setminus V(C^*)$ and $w' \in V(C') \setminus V(C^*)$ lie strictly inside $C^*$, $C^*$ must be the outer cycle of $Q$ and $w,w'$ are the desired vertices.
\end{proof}

\begin{figure}[t]
 \centering
  \begin{subfigure}[t]{.2\textwidth}
    \centering
    \includegraphics{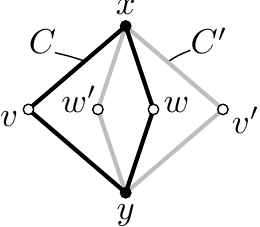}
    \caption{}
    \label{fig:sep-pair-b}
  \end{subfigure}
  \hspace{3em}
  \begin{subfigure}[t]{.5\textwidth}
    \centering
    \includegraphics{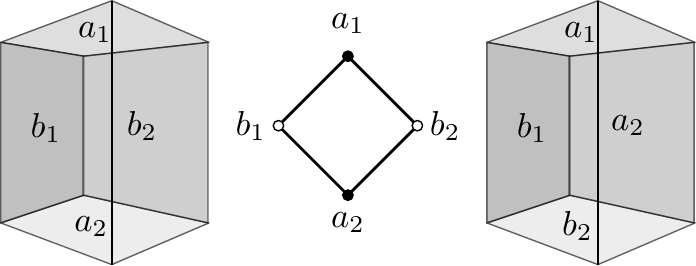}
    \caption{}
    \label{fig:frames}
  \end{subfigure}
  \caption{(a)~Illustration for the proof of Lemma~\ref{lem:almost}.
  (b)~A frame of type $\ppar$ (left) and of type $\pperp$ (right).}
\end{figure}

\begin{backInTime}{thm-1-planar}
 \begin{theorem}
  Every optimal 1-planar graph $G=(V,E)$ has an \LL-representation and it can be computed in $\Oh(|V|^2)$ time.
 \end{theorem}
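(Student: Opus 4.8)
The plan is to prove Theorem~\ref{thm:1-planar} by analyzing the recursive procedure \drawOpt{} and showing (a) that it terminates with a valid \LL-representation and (b) that the total running time is $\Oh(|V|^2)$. The heart of the argument is a strengthened inductive claim: for every almost-optimal 1-planar graph $G$ with outer $4$-cycle $C$ and for \emph{every} frame $F$ for $C$ (of either type $\ppar$ or $\pperp$), there is a proper \LL-representation of $G$ that fits into $F$. We induct on $|V(G)|$. In the base case the interior of $C$ contains a single pair of crossing edges on $V(C)$, and one checks directly that a single \LL\ (or box) can be placed inside any frame of either type so that the two ``middle'' vertices of $C$ touch the appropriate pair of opposite/adjacent facets and touch each other.

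For the inductive step I would follow the two cases of the algorithm. In Case~2 (no inner vertex adjacent to two outer vertices), Lemma~\ref{lem:almost} guarantees the maximal separating $4$-cycles $\mathcal C$ are interior-disjoint, so $G_{out}$ is again almost-optimal (or prime 1-planar once we re-add crossing edges to empty faces), with the same outer face. Its $W$- and $B$-classes give $3$-connected planar graphs dual to each other by Lemma~\ref{lem:1-planar-subgraphs}, so Corollary~\ref{cor:goodCase}/Theorem~\ref{thm:box} yields a shelled box-contact representation $\Gamma_{out}$ of $G_{out}$; crucially, the box for the pair of crossing edges inserted at each $C\in\mathcal C$ is a genuine $3$D box whose ``free'' facets are exactly the facets of a frame $F_C$ for $C$ — one reads off the frame type from whether the two new crossing edges join opposite or adjacent vertices of $C$ in the primal-dual structure. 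For each $C$ I then invoke the inductive hypothesis to obtain an \LL-representation of $G_{in}(C)$ fitting into $F_C$, delete the placeholder box, and glue the pieces; interior-disjointness across different $C$'s is immediate from interior-disjointness of the frames, and the edges of $G$ are partitioned (primal edges, dual edges, quadrangulation edges, and the recursively-handled interiors) exactly as in the proof of Corollary~\ref{cor:goodCase}. In Case~1 (some inner vertex $w$ adjacent to two outer vertices), $G$ is split by the two $4$-cycles through $w$ and three outer vertices into $\Oh(1)$ smaller almost-optimal pieces plus a small ``core'' around $w$; here one hand-builds a constant-size gadget inside the given frame $F$ that exposes a frame for each of the two pieces, and recurses. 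The constant-size base and Case-1 gadgets must be exhibited explicitly (a small figure), but they are finite casework.

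The running-time bound comes from the recursion: at each node of the recursion tree we (i) find all separating $4$-cycles of the current quadrangulation and their nesting order, and (ii) run the linear-time primal-dual box construction of Theorem~\ref{thm:box} on $G_{out}$. Step (ii) is linear in the size of $G_{out}$, and summing $|G_{out}|$ over all recursion nodes is $\Oh(|V|)$ because the pieces $G_{in}(C)$ partition the interior vertices and each interior vertex is ``charged'' only along the chain of nested $4$-cycles containing it — but since those cycles may be deeply nested, a vertex can be recharged $\Theta(|V|)$ times, which is exactly where the extra factor of $|V|$ enters. Likewise, identifying maximal separating $4$-cycles and the containment order among all separating $4$-cycles costs $\Oh(|V|)$ per level and $\Oh(|V|^2)$ overall. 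So the total is $\Oh(|V|^2)$.

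The main obstacle, and the step I would spend the most care on, is the \textbf{frame-compatibility bookkeeping}: proving that the box produced by the Theorem~\ref{thm:box} construction for the placeholder pair of crossing edges really has its two ``unused'' facets adjacent (so that it is a legitimate frame), that its type is the one the recursion needs, and that after inserting the recursively-built \LL-representation the contacts with the surrounding boxes are still \emph{proper} (non-zero area) and no spurious contacts are created. This requires tracking, facet by facet, which facets of the placeholder box already carry proper contacts in $\Gamma_{out}$ versus which are free, and arguing — as in the $\epsilon$-perturbation argument in the proof of Lemma~\ref{lem:box-schny} — that a slight shrink of the inserted gadget away from the free facets restores properness without breaking the required contacts. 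The interplay between \LL's (needed precisely when a frame of ``the wrong'' type forces a non-box shape) and boxes is the delicate part; everything else is induction plus the already-established Corollary~\ref{cor:goodCase}.
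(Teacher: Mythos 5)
Your overall architecture --- recurse on interior-disjoint (maximal) separating $4$-cycles, prove a strengthened claim that every almost-optimal piece admits a representation fitting an arbitrary prescribed frame of either type, handle the exceptional Case~1 with a constant-size gadget, and charge the $\Oh(|V|^2)$ bound to the nesting of separating $4$-cycles --- is the same as the paper's. However, there is a genuine gap at exactly the step you flag as the main obstacle, and your proposed mechanism for it rests on a misconception. Replacing the interior of a separating $4$-cycle $C=(a_1,b_1,a_2,b_2)$ by ``a pair of crossing edges'' introduces no new vertex: the two crossing edges are the diagonals $(a_1,a_2)$ and $(b_1,b_2)$ among the four existing cycle vertices. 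Hence there is no ``placeholder box'' in the representation of $G_{out}$ whose free facets could serve as the frame $F_C$, and nothing to ``delete and glue'' into. (Relatedly, the frame type cannot be read off from ``whether the two new crossing edges join opposite or adjacent vertices of $C$''; the diagonals of a quadrangular face always join opposite vertices.) The mechanism actually needed is different: in the Schnyder wood underlying Corollary~\ref{cor:goodCase} exactly one of the two diagonals, say $(a_1,a_2)$, is uni-directed, so by Lemma~\ref{lem:box-schny} the boxes for $a_1$ and $a_2$ make a proper facet contact and all four boxes of $C$ share a common point; shrinking the box for $a_1$ by a small $\varepsilon>0$ away from $a_2$ releases only that one contact and opens an empty $\varepsilon$-cube bounded by the four boxes of $C$, and that cube is the frame $F_C$. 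Without this (or an equivalent) frame-creation step the recursion cannot proceed.

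A second, smaller gap: you treat ``fit the representation of $G_{out}$ into the given frame $F$'' as routine, but Corollary~\ref{cor:goodCase} produces a \emph{shelled} representation in which one outer vertex is a box containing all the others, whereas a frame requires the four outer vertices to appear as four facets of a single box with the two unused facets adjacent. One must re-model the shell vertex as a thin slab and, when $F$ has type \ppar, convert the boxes of all its white neighbors into genuine \LLs so that they retain their old contacts while also reaching the relocated slab; this is precisely where \LLs enter in the non-degenerate case, and it needs explicit coordinate bookkeeping rather than a generic $\varepsilon$-shrink. Finally, your run-time paragraph contradicts itself (you first assert that the total size of all the graphs $G_{out}$ over the recursion is $\Oh(|V|)$ and then that a vertex may be recharged $\Theta(|V|)$ times); the correct accounting is simply that there are $\Oh(|V|)$ recursion nodes, each costing $\Oh(|V|)$ time in the worst case because of deep nesting, which yields the claimed $\Oh(|V|^2)$.
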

\end{backInTime}
\begin{proof}
 Fix any 1-planar embedding of $G$ and let $Q$ be the corresponding quadrangulation with outer cycle $C_{out}$. Following {\drawOpt},
 we distinguish two cases.
 If (\textbf{Case~1}) some inner vertex $w$ of $Q$ has two neighbors on $C_{out}$ we let $\mathcal{C}$ be the set of the two $4$-cycles in $Q$ that consist of $w$ and 3 vertices of $C_{out}$.
 Otherwise (\textbf{Case~2}), let $\mathcal{C}$ be the set of all maximal separating $4$-cycles in $Q$.
By Lemma~\ref{lem:almost} the cycles in $\mathcal{C}$ are interior-disjoint.
 As in step~\ref{step:define_G_out} we define $G_{out}$ to be the optimal 1-planar (multi)graph obtained from $G$ by replacing for each $C \in \mathcal{C}$ all vertices strictly inside $C$ by a pair of crossing edges.
 Note that in Case~1 the quadrangulation corresponding to $G_{out}$ is $K_{2,3}$ with inner vertex $w$.
 We proceed by proving the following claim, which corresponds to step~\ref{step:G_out} in the algorithm.

 \begin{claim}
  Let $H$ be an almost-optimal 1-planar (multi)graph whose corresponding quadrangulation $Q_H$ is either $K_{2,3}$ or has no separating $4$-cycles.
  Let $\mathcal{C}$ be a set of facial $4$-cycles of $Q_H$, different from $C_{o}$, and $H'$ be the graph obtained from $H$ by removing the crossing edges in each $C \in \mathcal{C}$.
  Then for any given frame $F$ for the outer cycle $C_{o}$ of $Q_H$ one can compute an
 \LL-representation $\Gamma$ of $H'$ fitting into $F$ such that for every $C \in \mathcal{C}$ there is a frame $F_C \subseteq F$ for $C$ that is interior-disjoint from all boxes and \LL's in $\Gamma$.
\label{cl:two-case}
 \end{claim}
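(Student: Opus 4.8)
The plan is to handle the two possible shapes of $Q_H$ separately, building $\Gamma$ in the main case by degrading a box-contact representation of the prime optimal $1$-planar graph that fills in the outer face of $Q_H$ and then carving the required empty frames out of it by notching boxes into \LLs. The case $Q_H=K_{2,3}$ I would dispose of by an explicit construction: here $Q_H$ has exactly three facial $4$-cycles --- the outer cycle $C_o$ and two cycles $C_1,C_2$ sharing the unique inner vertex $w$, whose two neighbours $u_1,u_2$ lie on $C_o$ --- so, given the frame $F$ for $C_o$, I represent each vertex of $C_o$ by a thin slab against its prescribed facet of $F$ (realising the four edges of $C_o$), represent $w$ by one box, or if needed an \LL, spanning $F$ between the slabs of $u_1$ and $u_2$, and for each $C_j$ either carve a small empty box $F_{C_j}$ against the slab of the $C_o$-vertex of $C_j$ opposite $w$ (when $C_j\in\mathcal{C}$) or place there the two box contacts realising the crossing pair of $C_j$ (when $C_j\notin\mathcal{C}$); a finite case check then confirms that all edges of $H'$ are realised properly, that $\Gamma$ fits into $F$, and that each $F_{C_j}$ is a frame for $C_j$ of whichever type falls out --- harmless, since the recursion accepts an arbitrary frame.

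For the main case, $Q_H$ has no separating $4$-cycle, hence is a $3$-connected plane quadrangulation. Let $G^{+}$ be the optimal $1$-planar graph obtained from $H$ by inserting a crossing pair into the outer face $C_o=(w_1,b_1,w_2,b_2)$ of $Q_H$; then $G^{+}$ is prime and, by Lemma~\ref{lem:1-planar-subgraphs}, $G^{+}[B]$ and $G^{+}[W]$ are $3$-connected plane graphs dual to each other. By Corollary~\ref{cor:goodCase}, that is, Theorem~\ref{thm:box} applied to this dual pair, there is a shelled box-contact representation $\Gamma^{+}$ realising every edge of $Q_H$, of $G^{+}[B]$ and of $G^{+}[W]$ as a proper box contact, with exactly one box playing the role of the shell. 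By choosing the outer face and the three special vertices used in the construction of Theorem~\ref{thm:box} suitably, I would arrange that the shell together with the four boxes of $V(C_o)$ forms the ``outer part'' of $\Gamma^{+}$, the four boxes of $V(C_o)$ meeting around the outer region in the same local pattern in which the four boxes around any facial $4$-cycle meet around an interior region. Since the faces in $\mathcal{C}$ and the face $C_o$ are pairwise region-disjoint, the local surgeries described next --- one per cycle in $\mathcal{C}\cup\{C_o\}$ --- do not interfere with one another.

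For a facial $4$-cycle $C=(w_1,b_1,w_2,b_2)$, the four boxes $B_{w_1},B_{b_1},B_{w_2},B_{b_2}$ of $\Gamma^{+}$ meet around a small region in which the diagonal edges $w_1w_2$ (a white--white contact) and $b_1b_2$ (a black--black contact) of the crossing pair of that face are realised. I would notch these four boxes, turning some of them into \LLs, so that: (i) both diagonal contacts are destroyed, which is what we want because neither diagonal edge of $C$ belongs to $H'$; (ii) every other contact survives and no new one appears --- in particular the four contacts realising the edges of $C$ and all contacts of $w_1,b_1,w_2,b_2$ with boxes outside the region; and (iii) an axis-aligned box $F_C$ becomes available in the vacated space, its four relevant facets lying on the notched boundaries of $B_{w_1},B_{b_1},B_{w_2},B_{b_2}$ and thus forming a frame for $C$ interior-disjoint from all objects of $\Gamma$. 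Using \LLs rather than plain box-shrinking is essential here, since a corner notch removes material only near the region and leaves every remote contact intact. Performing the analogous operation at $C=C_o$ turns the outer part into a frame for $C_o$: I split the shell into four \LL-shaped pieces, one assigned to each of $w_1,b_1,w_2,b_2$, and apply an axis-aligned affine map --- scaling, translation, and the reflective and coordinate-permuting symmetries of the construction --- sending this outer frame onto the prescribed $F$ with the prescribed assignment of $V(C_o)$ to facets, the freedom in how the shell is split letting us realise either type $\ppar$ or $\pperp$. The representation left after all these surgeries is the desired $\Gamma$.

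The crux is this last matching to an arbitrary prescribed frame $F$ for $C_o$: one must position the box-contact representation of $G^{+}$ so that its outer region is bounded by the four boxes of $V(C_o)$ in the cyclic order of $C_o$, and then carve that outer part into $F$ so that the prescribed facets, their cyclic order, and the prescribed frame type are all matched, all while the whole representation ends up strictly inside $F$; this needs a careful choice inside the Schnyder-wood / elementary-canonical-order machinery of Theorem~\ref{thm:box} together with the extra room that \LLs provide. By comparison, the per-cycle notching --- simultaneously destroying the two diagonal contacts and exposing a clean axis-aligned empty frame touching the four surrounding boxes with no collateral loss or gain of contacts --- is routine, but must be carried out carefully enough that the insertions in step~\ref{step:G_in} glue correctly.
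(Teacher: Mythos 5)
Your overall architecture matches the paper's: an explicit construction when $Q_H=K_{2,3}$, and otherwise a shelled box-contact representation of the prime graph obtained by filling in the outer face (Corollary~\ref{cor:goodCase}), from which the inner frames are carved and which is then deformed to fit the prescribed frame $F$. Two of your steps diverge in ways worth flagging. For the inner frames, the paper does not notch four boxes into \LLs: it observes that exactly one of the two crossing edges of a facial $4$-cycle $C=(a_1,b_1,a_2,b_2)$ is uni-directional in the underlying Schnyder wood, so only one of the two diagonal contacts is proper to begin with, and that by Lemma~\ref{lem:box-schny} the facet of $a_1$ realizing that contact touches no box other than $a_2$; hence retracting that single facet by $\varepsilon$ kills the one proper diagonal contact and opens a cubic frame, with no other contact affected. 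Your assertion that \LLs are ``essential'' here is therefore false, and your symmetric treatment of the two diagonals misses the structural fact that makes the surgery safe; a notching variant could likely be made to work, but it would still need the same appeal to Lemma~\ref{lem:box-schny} to control which contacts live at the corner being removed.

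The genuine gap is in fitting the outer part into $F$. In the shelled representation the shell is the box of a \emph{single} vertex of $C_o$ (the paper's $w_2$), while the other three vertices of $C_o$ already sit as boxes hugging three facets of the bounding box; the difficulty is to confine $w_2$ to the one remaining facet of $F$ without losing its proper contacts with its arbitrarily many neighbours, which currently touch it from all around the boundary of the packing. Your proposed mechanism --- splitting the shell into four \LL-shaped pieces assigned to the four vertices of $C_o$ --- does not address this: the shell is one vertex, not four, and after such a split most of $w_2$'s contacts are simply gone. The paper's solution, which is the actual crux and the reason \LLs appear at all, is to replace the shell by a slab on the remaining facet, reserve an empty slab of the bounding box adjacent to it, and give each white neighbour of $w_2$ an \LL-leg through that reserved region down to the new $w_2$-slab (black neighbours are merely extended), with the two frame types $\ppar$ and $\pperp$ handled by choosing which facet receives the slab. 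You correctly identify this step as the crux but supply no workable mechanism for it.
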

 \begin{claimproof}{\ref{cl:two-case}}
  \begin{description}
   \item[Case 1, $Q_H = K_{2,3}$.] Let $w$ be the inner vertex of $H$.
    Without loss of generality let $F = [0,5]\times[0,5]\times[0,4]$ and let $V(C_{o})$ be mapped onto the top, back left, bottom and back right facets of $F$.
    We define the \LL for $w$ to be the union of $[0,3]\times[2,3]\times[0,4]$ and $[2,3]\times[0,3]\times[0,4]$.
    Further define four boxes $F_1 = [0,2]\times[0,1]\times[0,1]$, $F_2 = [0,2]\times[0,1]\times[3,4]$, $F_3 = [3,4]\times[0,1]\times[0,4]$ and $F_4 = [0,1]\times[3,4]\times[0,4]$, each completely contained in $F$ and disjoint from the \LL for $w$; see Fig.~\ref{fig:K23}.
Each $F_i$ is a frame for a $4$-tuple containing $w$ and exactly three vertices of $C_{o}$, $i=1,2,3,4$.
    Thus independent of the type of $F$ and the neighbors of $w$ in $Q_H$, we find a frame for both inner faces of $Q_H$.

    \begin{figure}[htb]
     \centering
     \includegraphics[width=0.35\textwidth]{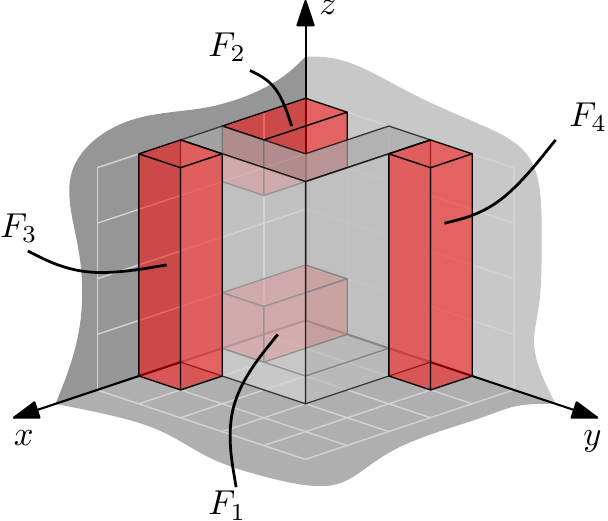}
     \caption{Construction for Case 1 in the proof of Claim~\ref{cl:two-case}.}
     \label{fig:K23}
    \end{figure}

    \item[Case 2, $Q_H \neq K_{2,3}$.]
     Let $B$ and $W$ be the bipartition classes of $Q_H$ and $C_{o} = (v_1,w_1,v_2,w_2)$ with $v_i \in B$ and $w_i \in W$, $i=1,2$.
     Without loss of generality $v_1,v_2,w_1$ are mapped onto the back left, back right and top facets of $F$, respectively, and $w_2$ is mapped onto the bottom facet if (\textbf{Case~2.1}) $F$ has type $\ppar$ and onto the front left facet if (\textbf{Case~2.2}) $F$ has type $\pperp$.
     Let $H^*$ be the graph obtained from $H$ by inserting a pair of crossing edges in $C_{o}$, leaving $v_1,w_2$ and $v_2$ on the unbounded region.
     By assumption, $H^*$ is a prime 1-planar graph and thus by Lemma~\ref{lem:1-planar-subgraphs} $H^*_B = H^*[B]$ and $H^*_W = H^*[W]$ are planar $3$-connected and dual to each other.
     We choose $v_3$ to be the clockwise next vertex after $v_2$ on the outer face of $H^*_B$ and compute (using Corollary~\ref{cor:goodCase}) a shelled box-contact representation $\Gamma^*$ of $H^*$, in which $w_2$ is represented as the bounding box $F^* = [0,n]^3$, $n \in \mathbb{N}$, and $v_1,v_2,w_1$ as $[0,n]\times[0,1]\times[0,n]$, $[0,1]\times[0,n]\times[1,n]\times[1,n]\times[n-1,n]$, i.e., these boxes constitute the back left, back right and top facets of $F^*$, respectively.

     Next we show how to create a frame for each facial $4$-cycle $C \in \mathcal{C}$.
     Let $a_1,b_1,a_2,b_2$ be the vertices of $C$ in cyclic order.
     Assume without loss of generality that $a_1,a_2 \in W$ and $b_1,b_2 \in B$.
     Thus $(a_1,a_2)$ and $(b_1,b_2)$ are crossing edges of $H^*_W$ and $H^*_B$, respectively.
     In the Schnyder wood of $H^*_W$ underlying Corollary~\ref{cor:goodCase} exactly one of $(a_1,a_2)$, $(b_1,b_2)$ is uni-directed, say $(a_1,a_2)$ is uni-directed in tree $\TT_1$.
     Then there is a point in $\mathbb{R}^3$ in common with all four boxes in $\Gamma^*$ corresponding to vertices of $C$.
     Moreover, by Lemma~\ref{lem:box-schny} boxes $b_1, a_2, b_2$ touch box $a_1$ with their $y^+,z^+,y^-$ facets, respectively; see Fig.~\ref{fig:frame_from_good_case}.
     Now we can increase the lower $z$-coordinate of the box $a_1$ by some $\varepsilon > 0$ so that $a_1$ and $a_2$ lose contact and between these two boxes a cubic frame $F_C$ with side length $\varepsilon$ is created; see again Fig.~\ref{fig:frame_from_good_case}.
     Note that by Lemma~\ref{lem:box-schny} the $z^-$ facet of $a_1$ makes contact only with $a_2$ and hence if $\varepsilon$ is small enough all other contacts in $\Gamma^*$ are maintained.
     We apply this operation to each $C \in \mathcal{C}$ and obtain a shelled box-representation $\Gamma'$ of $H'$.

     \begin{figure}[htb]
      \centering
      \includegraphics{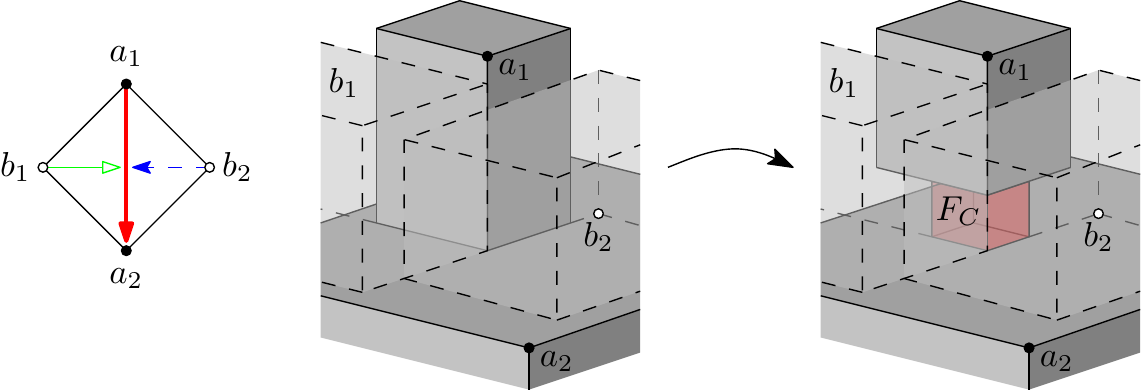}
      \caption{Creating a frame $F_C$ for an inner facial cycle $C = (a_1,b_1,a_2,b_2)$ of $Q_H$ by releasing the contact between $a_1$ and $a_2$.}
      \label{fig:frame_from_good_case}
     \end{figure}

     Finally, we show how to modify $\Gamma'$ to obtain an \LL-representation of $H'$ fitting the given frame $F$.
     If (\textbf{Case~2.1}) $F$ has type $\ppar$, we define a new box for $w_2$ to be $[0,n+1]\times[0,n]\times[-1,0]$.
     For each white neighbor of $w_2$ we union the corresponding box with another box that is contained in $[n,n+1]\times[0,n]\times[0,n]$ with bottom facet at $z=0$ so that the result is an \LL-shape.
     For each black neighbor of $w_2$ we set the lower $z$-coordinate of the corresponding box to $0$; see Fig.~\ref{fig:good_case_into_frame}.
     (This requires the proper contacts for outer edges of $G_B$, except for $(v_1,v_2)$, to be parallel to the $xz$-plane, which we can easily guarantee.)
We then apply an affine transformation mapping $[1,n+1]\times[1,n]\times[0,n-1]$ onto $F$.
     If (\textbf{Case~2.2}) $F$ has type $\pperp$, we define a new box for $w_2$ to be $[0,n]\times[n,n+1]\times[0,n]$ and apply an affine transformation mapping $[1,n]\times[1,n]\times[0,n-1]$ onto $F$.
In both cases we have an \LL-representation of $H'$ fitting $F$.
  \end{description}\vspace{-0.7cm}
 \end{claimproof}

     \begin{figure}[htb]
      \centering
      \includegraphics{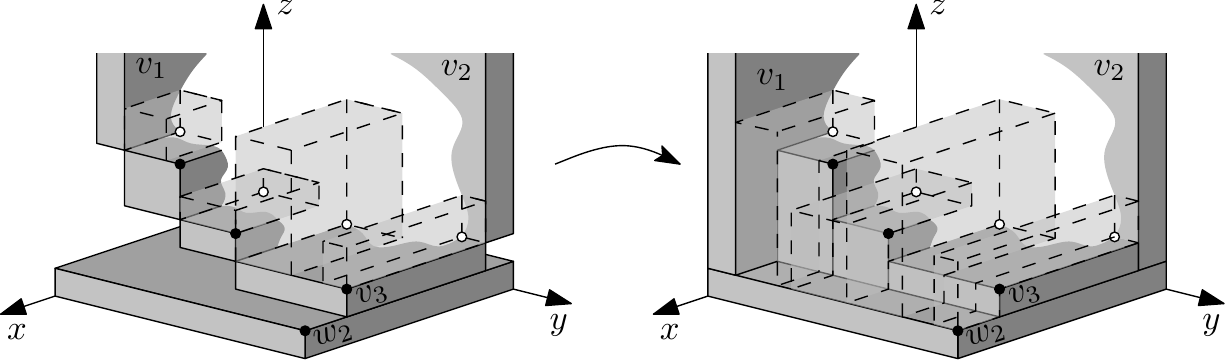}
\caption{Modifying $\Gamma'$ when $F$ has type $\ppar$ (Case~2.1) to
 find a representation fitting $F$.}
      \label{fig:good_case_into_frame}
     \end{figure}

By the claim above we can compute an \LL-representation $\Gamma_{out}$ of $G_{out}$ fitting any given
 frame $F_{out}$ for $C_{out}$ in $\Oh(|V(G_{out})|)$ time.
Moreover, $\Gamma_{out}$ has a set of disjoint frames $\{F_C \mid C \in \mathcal{C}\}$.
Following step~\ref{step:G_in} of {\drawOpt},
for each $C \in \mathcal{C}$, let $G_{in}(C)$ be the almost-optimal 1-planar graph given by all vertices
 and edges of $G$ on and strictly inside $C$.
Recursively applying the claim we can compute an \LL-representation $\Gamma_C$ of $G_{in}(C)$
 fitting the frame $F_C$ for $C$ in $\Gamma_{out}$.
Clearly, $\Gamma = \Gamma_{out} \cup \bigcup_{C \in \mathcal{C}} \Gamma_C$ is an \LL-representation
 of $G$ fitting $F_{out}$.
We pick a frame $F_{out}$ of arbitrary type for $C_{out}$ to complete the construction.
\end{proof}

\section{Conclusion and Open Questions}
In this paper we presented new results about primal-dual contact representations in 3D. In particular, we showed that a 3-connected planar graph and its dual has a box-contact representation and that an optimal 1-planar graph has an \LL-contact representation. Many interesting problems remain open.

\begin{enumerate}
\item Representing graphs with contacts of constant-complexity 3D shapes, such as \LLs, is open for many graph
classes with a linear number of edges, such as $1$-planar graphs, quasi-planar graphs and other nearly planar
graphs. In particular, does there exist an \LL-contact representation of every $1$-planar graph?

\item In 2D, a planar graph has a contact representation with rectangles if and only if it
contains no separating triangle.
Which graphs have 3D box-contact representations?

\item It is known that any planar graph admits a proper contact representation
with boxes in 3D and a non-proper contact representation with cubes (boxes with equal sides).
Does every planar graph admit a proper contact representation with cubes?

\item Given an orthogonal surface $S$ corresponding to the Schnyder wood of a $3$-connected plane graph, how can one extend $S$ into a primal-dual box-contact representation using just topological properties of $S$?
\end{enumerate}

\section*{Acknowledgments}
We thank Michael Bekos, Therese Biedl, Franz Brandenburg, Michael Kaufmann, Giuseppe Liotta for useful discussions about different variants of these problems.


\end{document}